\documentclass[acmsmall]{acmart}
\AtBeginDocument{%
  }

\usepackage[utf8]{inputenc} % allow utf-8 input
\usepackage{graphicx}
\usepackage[T1]{fontenc}    % use 8-bit T1 fonts
\usepackage{hyperref}       % hyperlinks
\usepackage{url}            % simple URL typesetting
\usepackage{booktabs}       % professional-quality tables
\usepackage{longtable}      % for latex longtables in the IMP rules
\usepackage{amsfonts}       % blackboard math symbols
\usepackage{amsthm}         % theorem and definition styles
\usepackage{nicefrac}       % compact symbols for 1/2, etc.
\usepackage{microtype}      % microtypography
\usepackage[svgnames,x11names]{xcolor}  % colors
\usepackage{multirow}         % for multirow in tables
\usepackage[most]{tcolorbox}  % to draw customized boxes
\tcbuselibrary{listings}      % to use listing inside tcolorbox
\tcbuselibrary{skins}         % to set title position in tcolorbox
\tcbuselibrary{breakable}     % break across pages
\usepackage{xspace}           % for adding space after commands
\usepackage{wrapfig}          % for adding text wrapped figures
\usepackage{mathpartir}       % for inferrule
\usepackage{amsmath,bm}       % for math symbols
\usepackage[font=footnotesize,labelfont=bf]{caption} % for changing figure and table caption font size
\usepackage[font=footnotesize]{subcaption}           % for subcaption environment
\usepackage{outlines}       % for bulleted list with subitems
\usepackage{tikz}           % for drawing figures with Tikz
\usetikzlibrary{tikzmark,
calc, positioning,
shapes.multipart,
decorations.pathmorphing,
decorations.pathreplacing,
fit,backgrounds,
matrix,
arrows.meta}                % for using polar coords in Tikz pictures
\usepackage{mathrsfs}       % math fonts
\usepackage{varwidth}       % for proper wrapfig with display mode
\usepackage{enumitem}       % for custom spacing in itemize/enumeration envs
\usepackage{colortbl}       % for row coloring
\usepackage[symbol]{footmisc} % for footnote symbols
\usepackage{accents}        % for underlining customization
\usepackage{mathtools}      % for cleanly typesetting '::='
\usepackage{xparse}         % for multiple default args for Latex commands
\usepackage{dsfont}         % for mathbb numbers
\usepackage{expl3}         % for expl3 macros
\usepackage{subdepth}      % for consistent subscripts

\lstdefinelanguage{SQLCustom}{
  keywords={
    SELECT, FROM, WHERE, AND, OR, IN, AS, GROUP, BY, COUNT, JOIN, ON, SUM
  },
  keywords=[2]{
    category, pid, name, spend, totalSpend 
  },
  sensitive=false,
  morecomment=[l]{--},
  morestring=[b]',
}

\lstdefinelanguage{CypherCustom}{
  keywords={
    MATCH, WHERE, AS, AND, OR, NOT, RETURN
  },
  sensitive=false,
  morecomment=[l]{--},
  morestring=[b]',
}

\lstdefinelanguage{QueryPlan}{
  keywords={},
  keywords=[2]{
    category, pid, name, spend, totalSpend 
  },
  sensitive=false,
  morecomment=[l]{//},
  morestring=[b]',
}

\lstdefinestyle{sqlstyle}{
  language=SQLCustom,
  basicstyle=\ttfamily\footnotesize,
  keywordstyle=\bfseries,
  keywordstyle=[2]\itshape,
  commentstyle=\color{gray}\ttfamily,
  stringstyle=\color{brown},
  showstringspaces=false,
  columns=fullflexible,
  keepspaces=true,
  breaklines=true,
  frame=none,
  tabsize=2,
  aboveskip=0.5em,
  belowskip=0.5em,
  literate={*}{{\textbf{*}}}1
}

\lstdefinestyle{cypherstyle}{
  language=CypherCustom,
  basicstyle=\ttfamily\footnotesize,
  keywordstyle=\bfseries,
  keywordstyle=[2]\itshape,
  commentstyle=\color{gray}\ttfamily,
  stringstyle=\color{brown},
  showstringspaces=false,
  columns=fullflexible,
  keepspaces=true,
  breaklines=true,
  frame=none,
  tabsize=2,
  aboveskip=0.5em,
  belowskip=0.5em,
  literate={*}{{\textbf{*}}}1
}

\lstdefinestyle{queryplanstyle}{
  language=QueryPlan,
  basicstyle=\ttfamily\footnotesize,
  keywordstyle=\bfseries,
  keywordstyle=[2]\itshape,
  commentstyle=\color{gray}\ttfamily,
  stringstyle=\color{teal},
  showstringspaces=false,
  columns=fullflexible,
  keepspaces=true,
  breaklines=true,
  breakatwhitespace=false,
  frame=none,
  tabsize=2,
  aboveskip=0.5em,
  belowskip=0.5em,
  escapeinside={(*@}{@*)},
  literate={*}{{\textbf{*}}}1
}

\newtheoremstyle{myThm}% name of the style to be used
  {2mm}% measure of space to leave above the theorem. E.g.: 3pt
  {2mm}% measure of space to leave below the theorem. E.g.: 3pt
  {}% name of font to use in the body of the theorem
  {0pt}% measure of space to indent
  {\itshape}% name of head font
  {.  }% punctuation between head and body
  { }% space after theorem head; " " = normal interword space
  {\thmname{#1}\thmnumber{ #2}{ (\thmnote{#3})}}
\theoremstyle{myThm}
\newtheorem{myDef}{Definition}[section]
\newtheorem{myExample}{Example}[section]
\newtheorem{myNotation}{Notation}[section]
\newtheorem{myLemma}{Lemma}[section]
\newtheorem{myTheorem}{Theorem}[section]
\newtheorem{myCorol}{Corollary}[section]

\newenvironment{CypherBox}{%
  \tcblisting{
    enhanced, rounded corners=all,
    hbox,
    boxrule=0.5pt,
    colback=white,
    colframe=black,
    arc=0.5pt,
    left=-1pt,right=7.6cm,top=-1pt,bottom=-2pt,
    listing only,
    listing options={style=cypherstyle,basicstyle=\bfseries\footnotesize},
  }
}{\endtcblisting}

\newtcolorbox{GrammarBox}{%
    enhanced, rounded corners=all,
    hbox,
    boxrule=0.5pt,
    colback=white,
    colframe=black,
    arc=0.5pt,
    left=-2pt,right=5mm,top=-7.5pt,bottom=-2pt,
}

\newtcolorbox{SemanticsBox}{%
    enhanced, rounded corners=all,
    boxrule=0.5pt,
    colback=white,
    colframe=black,
    arc=1pt,
    left=-1.2mm,right=-3.5pt,top=-2.2pt,bottom=0pt,
    fontupper=\FormulaFont,
}
\newtcolorbox{TypingDisplay}[1][\normalsize]{
  enhanced,
  boxrule=0pt,
  frame hidden,
  colback=white,
  left=0pt,
  right=0pt,
  top=-1mm,
  bottom=0pt,
  before skip=0pt,
  after skip=1mm,
  fontupper=#1,
}

\newtcolorbox{RulesDisplay}{
  enhanced,
  breakable=false,
  boxrule=0.4pt,
  colback=white,
  colframe=white,
  arc=0.6pt,
  left=0pt,
  right=0pt,
  top=-0.4mm,
  bottom=1.3mm,
  before skip=3mm,
  after skip=3mm,
  fontupper=\FormulaFont,
}

\newtcolorbox{RulesDisplayCustomFont}[1]{
  enhanced,
  breakable=false,
  boxrule=0.4pt,
  colback=white,
  frame hidden,
  arc=0.6pt,
  left=0pt,
  right=0pt,
  top=-0.4mm,
  bottom=1.3mm,
  before skip=1mm,
  after skip=1mm,
  fontupper=#1,
}

\newtcolorbox{RulesDisplayFrameless}{
  enhanced,
  breakable=false,
  boxrule=0.4pt,
  colback=white,
  frame hidden,
  arc=0.6pt,
  left=0pt,
  right=0pt,
  top=-0.4mm,
  bottom=1.3mm,
  before skip=1mm,
  after skip=1mm,
  fontupper=\FormulaFont,
}
\newcommand{\DefMacro}[2]{\expandafter\newcommand\csname rmk-#1\endcsname{#2}}
\newcommand{\UseMacro}[1]{\csname rmk-#1\endcsname}

\newcommand{\TableFont}{\scriptsize}
\newcommand{\FormulaFont}{\scriptsize}
\newcommand{\ExampleFont}{\small}
\newcommand{\RuleNameFont}{\footnotesize}
\newcommand{\CyFigCommentColor}{gray!70}
\newcommand{\CyGrammarRowColor}{gray!10}
\newcommand{\CyIsoRefColor}{ACMPurple!60}

\newcommand{\ExamplePatMatchNodeColor}{red}
\newcommand{\ExamplePatKnowsEdgeColor}{Salmon}
\newcommand{\ExamplePatPersonNodeColor}{blue}
\newcommand{\ExamplePatLeadsEdgeColor}{DarkCyan}
\newcommand{\ExamplePatProjectNodeColor}{purple}
\newcommand{\GqlTokenColor}{RoyalBlue3}

\definecolor{GqlQryRGB}{RGB}{0,125,188}      % query syntax elements
\definecolor{GqlPatRGB}{RGB}{166,89,25}      % pattern syntax elements
\definecolor{GqlExpRGB}{RGB}{85,119,15}      % expression syntax elements
\definecolor{GqlMetaRGB}{RGB}{107,197,189}    % metavariables
\definecolor{GqlTypeRGB}{RGB}{189,61,111}    % types
\definecolor{GqlSchemaRGB}{RGB}{0,130,108}   % schemas
\definecolor{GqlOpRGB}{RGB}{75,75,75}        % operators
\newcommand{\GqlQryColor}{GqlQryRGB}
\newcommand{\GqlPatColor}{GqlPatRGB}
\newcommand{\GqlExpColor}{GqlExpRGB}

\newcommand{\GqlTypeColor}{GqlTypeRGB}
\newcommand{\GqlSchemaColor}{GqlSchemaRGB}

\newcommand{\StxClr}[1]{\textcolor{\GqlTokenColor}{#1}}
\newcommand{\QryClr}[1]{\textcolor{\GqlQryColor}{#1}}
\newcommand{\PatClr}[1]{\textcolor{\GqlPatColor}{#1}}
\newcommand{\ExpClr}[1]{\textcolor{\GqlExpColor}{#1}}
\newcommand{\MetaClr}[1]{\textcolor{black}{#1}}
\newcommand{\TyClr}[1]{\textcolor{\GqlTypeColor}{#1}}
\newcommand{\SchClr}[1]{\textcolor{\GqlSchemaColor}{#1}}

\newcommand{\lang}{GQL\xspace}
\newcommand{\iso}{ISO/IEC~39075\xspace}
\newcommand{\isoShort}{\textcolor{\CyIsoRefColor}{$\mathrm{ISO}$}\xspace}
\newcommand{\isoSectionShort}[1]{\textcolor{\CyIsoRefColor}{$\mathrm{ISO}$~\S#1}}

\newcommand{\isoSyntxRulePg}[2]{\textcolor{\CyIsoRefColor}{$\mathrm{SR}$:#1} ($\mathrm{Page}$~#2)}

\newcommand{\isoSection}[1]{\isoSectionShort{#1}}
\newcommand{\isoFeature}[1]{#1}

\newcommand{\tool}{MGQL}
\newcommand{\lean}{Lean4\xspace}
\newcommand{\coq}{Coq\xspace}
\newcommand{\Title}{\tool: An Executable, Small-Step Semantics of \lang}
\newcommand{\sql}{SQL\xspace}
\newcommand{\sparql}{SPARQL\xspace}
\newcommand{\rdf}{RDF\xspace}

\newcommand{\cypher}{Cypher\xspace}

\newcommand{\pgql}{PGQL\xspace}
\newcommand{\gcore}{G-CORE\xspace}

\newcommand{\threeVL}{3VL\xspace}
\newcommand{\pathMode}[1]{\texttt{#1}\xspace}
\newcommand{\pathModeTrail}{\pathMode{TRAIL}}
\newcommand{\pathModeWalk}{\pathMode{WALK}}
\newcommand{\pathModeAcyclic}{\pathMode{ACYCLIC}}
\newcommand{\pathModeSimple}{\pathMode{SIMPLE}}
\newcommand{\matchMode}[2]{\texttt{#1} \texttt{#2}\xspace}
\newcommand{\matchModeDiff}{\matchMode{DIFFERENT}{EDGES}}
\newcommand{\matchModeRepeat}{\matchMode{REPEATABLE}{ELEMENTS}}
\newcommand{\isoAnnotate}[4]{%
  \textit{\textcolor{\CyFigCommentColor}{%
    #1 (\isoSectionShort{#2}%
    \ifx&#3&%
    \else%
      : \isoFeature{#3}%
    \fi%
    )%
    \ifx&#4&%
    \else%
      ~[\textcolor{\CyFigCommentColor}{#4}]%
    \fi%
  }}%
}

\newcommand{\isoAnnotatePlain}[1]{%
  \textcolor{\CyFigCommentColor}{\emph{#1}}%
}

\newcommand{\isoAnnotateBaseline}[2]{%
  \textcolor{\CyFigCommentColor}{\emph{#1}}\;%
  \textcolor{\CyFigCommentColor}{(}{\tiny\textbf{\textcolor{\CyIsoRefColor}{\isoSectionShort{#2}}}}\textcolor{\CyFigCommentColor}{)}%
}

\newcommand{\featAnnotateComposite}[2]{%
  \textcolor{\CyFigCommentColor}{\emph{#1}}\;%
  \textcolor{\CyFigCommentColor}{[}{\tiny\text{#2}}\textcolor{\CyFigCommentColor}{]}%
}

\newcommand{\isoAnnotateComposite}[3]{%
  \textcolor{\CyFigCommentColor}{\emph{#1}}\;%
  \textcolor{\CyFigCommentColor}{(}{\tiny\textbf{\textcolor{\CyIsoRefColor}{\isoSectionShort{#2}}}}\textcolor{\CyFigCommentColor}{)}~%
  \textcolor{\CyFigCommentColor}{[}{\tiny\text{#3}}\textcolor{\CyFigCommentColor}{]}%
}

\newcommand{\isoGType}{4.13.2}
\newcommand{\isoGraphType}{4.13.2.2}
\newcommand{\isoNodeType}{4.13.2.3}
\newcommand{\isoEdgeType}{4.13.2.4}
\newcommand{\isoPropertyType}{4.13.2.5}

\newcommand{\isoPatternMatching}{4.11}
\newcommand{\isoPatternMatchingPathMode}{4.11.7}
\newcommand{\isoPatternMatchingMatchMode}{4.11.9}

\newcommand{\isoFocusedLinQuery}{14.3}

\newcommand{\isoCompositeQuery}{14.2}
\newcommand{\isoQueryProjection}{14.11}

\newcommand{\isoRefContext}{4.11.5}
\newcommand{\isoSyntxPrjAlias}{8}
\newcommand{\isoPgPrjAlias}{186}
\newcommand{\isoSyntxCompOp}{3}
\newcommand{\isoPgCompOp}{164}

\newcommand{\isoPatternExpression}{16.4}
\newcommand{\isoPathPattern}{16.7}
\newcommand{\isoNodePattern}{16.7}
\newcommand{\isoEdgePattern}{16.7}
\newcommand{\isoQuantifiers}{16.11}
\newcommand{\isoPropertyMap}{16.7}
\newcommand{\isoLabelExpression}{16.8}
\newcommand{\isoPathEval}{22.3}

\newcommand{\isoSyntxPatEdgeCount}{8}
\newcommand{\isoPgPatEdgeCount}{229}

\newcommand{\isoSyntxPatNodeCount}{16}
\newcommand{\isoPgPatNodeCount}{231}

\newcommand{\isoPredicates}{19}
\newcommand{\isoComparisonPredicate}{19.3}
\newcommand{\isoNullPredicate}{19.5}
\newcommand{\isoValueExpressions}{20}

\newcommand{\isoPropertyAccess}{20.11}
\newcommand{\isoAggregation}{20.9}
\newcommand{\isoLiterals}{21}
\newcommand{\isoPatternVarName}{21.1}

\newcommand{\isoFeatureStyle}[1]{$\mathrm{#1}$}
\newcommand{\isoFeatFocusedLinQuery}{\isoFeatureStyle{GQ01}}
\newcommand{\isoFeatCompOtherwise}{\isoFeatureStyle{GQ02}}
\newcommand{\isoFeatCompUnion}{\isoFeatureStyle{GQ03}}
\newcommand{\isoFeatCompExceptDstnct}{\isoFeatureStyle{GQ04}}
\newcommand{\isoFeatCompExceptAll}{\isoFeatureStyle{GQ05}}
\newcommand{\isoFeatCompIntersectDstnct}{\isoFeatureStyle{GQ06}}
\newcommand{\isoFeatCompIntersectAll}{\isoFeatureStyle{GQ07}}

\newcommand{\isoFeatPathMultisetAlt}{\isoFeatureStyle{G030}}

\newcommand{\isoFeatQuantifiedPaths}{\isoFeatureStyle{G035}}
\newcommand{\isoFeatQuantifiedEdges}{\isoFeatureStyle{G036}}
\newcommand{\isoFeatParenthesizedPath}{\isoFeatureStyle{G038}}

\newcommand{\isoFeatBoundedQuantifiers}{\isoFeatureStyle{G060}}
\newcommand{\isoFeatUnboundedQuantifiers}{\isoFeatureStyle{G061}}

\newcommand{\isoFeatWildcardLabel}{\isoFeatureStyle{G074}}
\newcommand{\isoPageCount}{600+\xspace}
\newcommand{\Code}[1]{{\ifmmode{\mathtt{#1}}\else$\mathtt{#1}$\fi}}

\newcommand{\eg}{e.g.,\xspace}
\newcommand{\ie}{i.e.,\xspace}
\newcommand{\etc}{etc.,\xspace}

\newcommand{\MyPara}[1]{\vspace{1pt}\noindent\textbf{#1}.\xspace}

\DeclareRobustCommand{\circledCaption}[1]{\tikz[baseline=(char.base)]{\node[shape=circle, draw, minimum size=4pt, inner sep=1pt, fill=black, text=white] (char) {\tiny #1};}}

\DeclareRobustCommand{\circledColor}[5]{\tikz[baseline=(char.base)]{\node[shape=circle, draw={#2}, minimum size=4pt, inner sep=1pt, fill={#2}, text={#3}] (char) {\raisebox{#5}{\scalebox{#4}{#1}}};}}
\DeclareRobustCommand{\circledWhite}[1]{\tikz[baseline=(char.base)]{\node[shape=circle, draw=black, minimum size=4pt, inner sep=0.6pt, fill=white, text=black, thick] (char) {\scalebox{0.8}{#1}};}}

\newcommand{\redcircled}[1]{H\xspace}
\newcommand{\enumParen}[1]{%
\footnotesize\textcolor{ACMPurple!80}{\textbf{[}}#1\textcolor{ACMPurple!80}{\textbf{].}}}
\newcommand{\underboxcap}[2]{%
\tikz[baseline=(X.base)]{
  \node[inner sep=0pt, outer sep=0pt] (X) {$#1$};
  \def\yoff{1ex}   % distance below the text
  \def\thick{0.5pt}   % line thickness
  \def\cap{0.8ex}     % cap height
  \def\labsep{0.3ex}  % label depth

  \draw[line width=\thick, color=black!70]
    ($(X.south west)+(0,-\yoff)$) -- ($(X.south east)+(0,-\yoff)$);

  \draw[line width=\thick, color=black!70]
    ($(X.south west)+(0,-\yoff)$) -- ($(X.south west)+(0,-\yoff+\cap)$);

  \draw[line width=\thick, color=black!70]
    ($(X.south east)+(0,-\yoff)$) -- ($(X.south east)+(0,-\yoff+\cap)$);

  \draw[line width=\thick, color=black!70]
    ($(X.south)+(0,-\yoff)$) -- ($(X.south)+(0,-\yoff-\cap)$);

  \node[overlay, anchor=north, color=black!70] at
    ($(X.south)+(0,-\yoff-\labsep)$)
    {#2};
}%
}

\newcommand{\tikzGEdge}[5]{
\tikz[baseline=(a.base)]{
  \node[shape=circle, draw=white, text=black, minimum size=6pt, inner sep=1.5pt, fill={#2}, line width=1pt] (a) {#1};
  \node[shape=circle, draw=white, text=black, minimum size=6pt, inner sep=1.5pt, fill={#4}, line width=1pt] (b) at ($(a.east)+(1.5cm,0)$) {#3};
  \draw[-{Latex[length=1.6mm,width=1.1mm]},thick] (a) -- node[fill=white,inner sep=1.2pt,font=\footnotesize, midway, xshift=-2pt] {\texttt{#5}} (b);
}%
}

\newcommand{\tikzGEdgeOpa}[1]{
\tikz[baseline=(a.base)]{
  \node[draw=white, inner sep=0pt, line width=0pt, draw opacity = 0.0, fill opacity = 0.0] (a) {\phantom{i}};
  \node[draw=white, inner sep=0pt, line width=0pt, draw opacity = 0.0, fill opacity = 0.0] (b) at ($(a.east)+(1.5cm,0)$) {\phantom{i}};
  \draw[-{Latex[length=1.6mm,width=1.1mm]},thick] (a) -- node[fill=white,inner sep=1.2pt,font=\footnotesize, midway, xshift=-2pt] {\texttt{#1}} (b);
}%
}

\newcommand{\tikzGTwoEdge}[8]{
\tikz[baseline=(a.base)]{
  \node[shape=circle, draw=white, text=black, minimum size=6pt, inner sep=1.5pt, fill={#2}, line width=1pt] (a) {#1};
  \node[shape=circle, draw=white, text=black, minimum size=6pt, inner sep=1.5pt, fill={#4}, line width=1pt] (b) at ($(a.east)+(1.5cm,0)$) {#3};
  \node[shape=circle, draw=white, text=black, minimum size=6pt, inner sep=1.5pt, fill={#6}, line width=1pt] (c) at ($(b.east)+(1.5cm,0)$) {#5};
  \draw[-{Latex[length=1.6mm,width=1.1mm]},thick] (a) -- node[fill=white,inner sep=1.2pt,font=\footnotesize, midway, xshift=-2pt] {\texttt{#7}} (b);
  \draw[-{Latex[length=1.6mm,width=1.1mm]},thick] (b) -- node[fill=white,inner sep=1.2pt,font=\footnotesize, midway, xshift=-2pt] {\texttt{#8}} (c);
}%
}

\newcommand{\tikzGThreeEdge}[8]{
\tikz[baseline=(a.base)]{
  \node[shape=circle, draw=white, text=black, minimum size=6pt, inner sep=1.5pt, fill={#2}, line width=1pt] (a) {#1};
  \node[shape=circle, draw=white, text=black, minimum size=6pt, inner sep=1.5pt, fill={#4}, line width=1pt] (b) at ($(a.east)+(1.5cm,0)$) {#3};
  \node[shape=circle, draw=white, text=black, minimum size=6pt, inner sep=1.5pt, fill={#6}, line width=1pt] (c) at ($(b.east)+(1.5cm,0)$) {#5};
  \node[shape=circle, draw=white, text=black, minimum size=6pt, inner sep=1.5pt, fill={#8}, line width=1pt] (d) at ($(c.east)+(1.5cm,0)$) {#7};
  \draw[-{Latex[length=1.6mm,width=1.1mm]},thick] (a) -- node[fill=white,inner sep=1.2pt,font=\footnotesize, midway, xshift=-2pt] {\texttt{KNOWS}} (b);
  \draw[-{Latex[length=1.6mm,width=1.1mm]},thick] (b) -- node[fill=white,inner sep=1.2pt,font=\footnotesize, midway, xshift=-2pt] {\texttt{KNOWS}} (c);
  \draw[-{Latex[length=1.6mm,width=1.1mm]},thick] (c) -- node[fill=white,inner sep=1.2pt,font=\footnotesize, midway, xshift=-2pt] {\texttt{LEADS}} (d);
}%
}

\newcommand{\tikzGEdgeUDir}[5]{
\tikz[baseline=(a.base)]{
  \node[shape=circle, draw=white, text=black, minimum size=6pt, inner sep=1.5pt, fill={#2}, line width=1pt] (a) {#1};
  \node[shape=circle, draw=white, text=black, minimum size=6pt, inner sep=1.5pt, fill={#4}, line width=1pt] (b) at ($(a.east)+(1.5cm,0)$) {#3};
  \draw[thick] (a) -- node[fill=white,inner sep=1.2pt,font=\footnotesize, midway] {\texttt{#5}} (b);
}%
}
\newcommand{\fPropGraph}{\mathcal{G}}

\newcommand{\fPGNode}{\mathcal{N}}
\newcommand{\fPGEdge}{\mathcal{E}}
\newcommand{\fPGEdgeDir}{\mathcal{E}^{\scalebox{0.6}{$\!\fPGDir\!$}}}
\newcommand{\fPGEdgeUDir}{\mathcal{E}^{\scalebox{0.6}{$\!\fPGUDir\!$}}}
\newcommand{\fPGNodeElem}{\mathrm{n}}
\newcommand{\fPGEdgeElem}{\mathrm{e}}
\newcommand{\fPGElem}{\iota}
\newcommand{\fPGNodeElems}[1]{\fPGNodeElem_{#1}}

\newcommand{\fPGNodelabel}[1][]{\MetaClr{\text{$l_{\scalebox{0.7}{#1}}$}}}

\newcommand{\fPGNodeSort}{\mathsf{N}}
\newcommand{\fPGEdgeSort}{\mathsf{E}}

\newcommand{\fPGSrcDstFunc}{\Theta}
\newcommand{\fPGLabelFunc}{\Lambda}
\newcommand{\fPGPropFunc}{\Xi}
\newcommand{\fPGDirFunc}{\Delta}
\newcommand{\fPGDirBase}[1]{%
  \mathrel{%
    \tikz[baseline=-0.65ex]{%
      \node (a) {\scalebox{1.2}{$\!\circlearrowright\!$}};%
      \ifnum#1=0\relax
      \else
        \draw[line width=0.5pt] (0.15,0.1) -- (-0.15,-0.1);%
        \draw[line width=0.5pt] (-0.15,0.1) -- (0.15,-0.1);%
      \fi
    }%
  }%
}
\newcommand{\fPGDir}{\fPGDirBase{0}}
\newcommand{\fPGUDir}{\fPGDirBase{1}}

\newcommand{\fPGProp}[1][]{\Pi_{#1}}
\newcommand{\fCyElemNodePattern}[1]{\StxClr{\textbf{(}}#1\StxClr{\textbf{)}}}
\newcommand{\fCyElemEdgePattern}[1]{\StxClr{\textbf{[}}#1\StxClr{\textbf{]}}}
\newcommand{\straightarrowR}{
\tikz[x=1ex,y=1ex,baseline=-0.6ex]{
\draw[-{Latex[length=0.4em,width=0.4em]},inner sep=0pt, outer sep=0pt,  line width=0.45pt]
(0,0) -- (3.3,0);
}
}
\newcommand{\straightarrowL}{
\tikz[x=1ex,y=1ex,baseline=-0.6ex]{
\draw[-{Latex[length=0.4em,width=0.4em]},inner sep=0pt, outer sep=0pt,  line width=0.45pt]
(0,0) -- (-3.3,0);
}
}
\newcommand{\straightarrow}{
\tikz[x=1ex,y=1ex,baseline=-0.6ex]{
\draw[inner sep=0pt, outer sep=0pt, line width=0.45pt]
(0,0) -- (-3.3,0);
}
}

\newcommand{\squigpic}[5]{%
  \tikz[x=1ex,y=1ex,baseline=-0.6ex]{
    \path[use as bounding box] (#3,0) rectangle (#2,0);
    \draw[#1, inner sep=0pt, outer sep=0pt, line width=0.45pt, decorate, decoration={snake,amplitude=.1em,segment length=0.26em,pre length = #4, post length = #5}] (#3,0) -- (#2,0);
  }%
}

\newcommand{\squig}{%
  \squigpic{}{1.5em}{0em}{0.13em}{0.12em}%
}

\newcommand{\squigarrowL}{%
  \squigpic{-{Latex[length=0.4em,width=0.4em]}}{-1.5em}{0em}{0.15em}{0.5em}%
}

\newcommand{\squigarrowR}{%
  \squigpic{-{Latex[length=0.4em,width=0.4em]}}{1.5em}{0em}{0.15em}{0.5em}%
}

\newcommand{\fCyEdgeRight}[1]{\,\straightarrow{#1}\straightarrowR\,}
\newcommand{\fCyEdgeLeft}[1]{\,\straightarrowL{#1}\straightarrow\,}
\newcommand{\fCyEdgeLeftRight}[1]{\,\straightarrowL{#1}\straightarrowR\,}
\newcommand{\fCyEdgeNone}[1]{\,\straightarrow{#1}\straightarrow\,}
\newcommand{\fCyEdgeRightU}[1]{\,\squig{#1}\squigarrowR\,}
\newcommand{\fCyEdgeLeftU}[1]{\,\squigarrowL{#1}\squig\,}
\newcommand{\fCyEdgeNoneU}[1]{\,\squig{#1}\squig\,}

\newcommand{\fCyQuantStar}{\StxClr{*}}
\newcommand{\fCyQuantPlus}{\StxClr{+}}
\newcommand{\fCyQuantQues}{\StxClr{?}}
\newcommand{\fCyQuantExact}[1]{\StxClr{\{}#1\StxClr{\}}}
\newcommand{\fCyQuantBound}[2]{\StxClr{\{}#1,#2\StxClr{\}}}

\ExplSyntaxOn
\cs_new:Npn \fGqAtom:nnnn #1#2#3#4 {
  \str_case:nnF {#1} {
    {vlp} { \textbf{#4}\;\fCyColon {#2}\;\fCyQuantExact{#3} }
    {vl} { \textbf{#4}\;\fCyColon {#2} }    
    {vp} { \textbf{#4}\;\fCyQuantExact{#3} }
    {v} { \textbf{#4} }
    {l} { \fCyColon\texttt{#2}}   
  }
  { \textbf{#4} }
}

\cs_new:Npn \fGqElemAtom:nnnnn #1#2#3#4#5 {
  \str_case:nnF {#1} {
    {node} { \fCyElemNodePattern{\fGqAtom:nnnn {#2} {#3} {#4} {#5}}} 
    {edge} { \fCyElemEdgePattern{\fGqAtom:nnnn {#2} {#3} {#4} {#5}}} 
  } {#1} 
}

\cs_new:Npn \fGqEdge:nnnnn #1#2#3#4#5 {
  \str_case:nnF {#1} {
    {->} { \fCyEdgeRight{\fGqElemAtom:nnnnn {edge} {#2} {#3} {#4} {#5}}}
    {<-} { \fCyEdgeLeft{\fGqElemAtom:nnnnn {edge} {#2} {#3} {#4} {#5}}}
    {<->} { \fCyEdgeLeftRight{\fGqElemAtom:nnnnn {edge} {#2} {#3} {#4} {#5}}}
    {;>} { \fCyEdgeRightU{\fGqElemAtom:nnnnn {edge} {#2} {#3} {#4} {#5}}}
    {<;} { \fCyEdgeLeftU{\fGqElemAtom:nnnnn {edge} {#2} {#3} {#4} {#5}}}
    {;} { \fCyEdgeNoneU{\fGqElemAtom:nnnnn {edge} {#2} {#3} {#4} {#5}}}
    {-} { \fCyEdgeNone{\fGqElemAtom:nnnnn {edge} {#2} {#3} {#4} {#5}}}
    {->e} { \fCyEdgeRight{#5}}
    {<-e} { \fCyEdgeLeft{#5}}
    {<->e} { \fCyEdgeLeftRight{#5}}
    {;>e} { \fCyEdgeRightU{#5}}
    {<;e} { \fCyEdgeLeftU{#5}}
    {;e} { \fCyEdgeNoneU{#5}}
    {-e} { \fCyEdgeNone{#5}}
  } { \fCyEdgeRight{\fGqElemAtom:nnnnn {edge} {#2} {#3} {#4} {#5}}}
}

\cs_new:Npn \fGqEdgeQuant:nnnnnnnn #1#2#3#4#5#6#7#8 {
  \fGqEdge:nnnnn {#1} {#2} {#3} {#4} {#8} 
  \str_case:nnF {#5} {
    {*} { \fCyQuantStar }
    {+} { \fCyQuantPlus }
    {?} { \fCyQuantQues }
    {exact} { \fCyQuantExact{#6} }
    {bound} { \fCyQuantBound{#6}{#7} }
  } { \fCyQuantStar }
}

\NewDocumentCommand{\fGqNode}{O{v} O{} O{} m}{\fGqElemAtom:nnnnn {node} {#1} {#2} {#3} {#4}}
\NewDocumentCommand{\fGqEdgeAtom}{O{v} O{} O{} m}{\fGqElemAtom:nnnnn {edge} {#1} {#2} {#3} {#4}}
\NewDocumentCommand{\fGqEdge}{O{->} O{v} O{} O{} m}{\fGqEdge:nnnnn {#1} {#2} {#3} {#4} {#5}}
\NewDocumentCommand{\fGqEdgeQuant}{O{->} O{v} O{} O{} O{*} O{} O{} m}{\fGqEdgeQuant:nnnnnnnn {#1} {#2} {#3} {#4} {#5} {#6} {#7} {#8}}
\ExplSyntaxOff

\newcommand{\fMatch}{\textbf{\textcolor{\GqlTokenColor}{\textsc{Match}}}}
\newcommand{\fWhere}{\textbf{\textcolor{\GqlTokenColor}{\textsc{Where}}}}
\newcommand{\fReturn}{\textbf{\textcolor{\GqlTokenColor}{\textsc{Return}}}}
\newcommand{\fAs}{\textbf{\textcolor{\GqlTokenColor}{\textsc{As}}}}

\newcommand{\fCyQueryExp}{\QryClr{\text{$q$}}}
\newcommand{\fCyQuery}{\QryClr{\text{$Q$}}}

\newcommand{\fCyNodeVar}{\MetaClr{\text{\emph{v}}}}

\newcommand{\fCyNodeVars}[1]{\fCyNodeVar_{\scalebox{0.7}{#1}}}

\newcommand{\fCyVar}{\MetaClr{\text{$x$}}}
\newcommand{\fCyVars}[1]{\MetaClr{\text{$x$}_{#1}}}
\newcommand{\fCyColon}{\,\StxClr{\textbf{:}}\,}
\newcommand{\fCyComma}{\StxClr{\textbf{,}}\,}
\newcommand{\fCyVarProp}[2]{#1\,\StxClr{\textbf{.}}\,#2}
\newcommand{\fCyProp}{\MetaClr{\text{$k$}}}
\newcommand{\fCyNodeLabel}{\text{$L$}}
\newcommand{\fCyNodeLabels}{\overline{\fCyNodeLabel}}

\newcommand{\fCyQuantifier}{{\text{$K$}}}

\newcommand{\fCyPatternNode}{\PatClr{\text{$N$}}}
\newcommand{\fCyPatternEdge}{\PatClr{\text{$E$}}}
\newcommand{\fCyPatternAtom}{\PatClr{\text{$A$}}}
\newcommand{\fCyPatternNodes}[1]{\PatClr{\text{$N_{#1}$}}}
\newcommand{\fCyPatternEdges}[1]{\PatClr{\text{$E_{#1}$}}}

\newcommand{\fCyVarSyn}{\MetaClr{\textsf{VarDcl}}}
\newcommand{\fCyDscSyn}{\MetaClr{\textsf{Dsc}}}
\newcommand{\fCyLabelSyn}{\MetaClr{\textsf{LblDcl}}}
\newcommand{\fCyPropMap}{\MetaClr{\textsf{PrpDcl}}}
\newcommand{\fCyExp}{\ExpClr{\vartheta}}
\newcommand{\fCyExpConst}[1]{\ExpClr{\text{$c_{#1}$}}}
\newcommand{\fCyExpBConst}[1]{\ExpClr{\text{$b_{#1}$}}}
\newcommand{\fCyExps}[1]{\ExpClr{\text{$\vartheta_{#1}$}}}
\newcommand{\fCyAggExp}{\ExpClr{\text{$\alpha$}}}

\newcommand{\fCyPred}{\ExpClr{\phi}}

\newcommand{\fCyExpLogic}[2]{#1\bm{\ExpClr{\,\otimes\,}}#2}
\newcommand{\fCyExpRel}[2]{#1\bm{\ExpClr{\,\odot\,}}#2}
\newcommand{\fCyExpRelOp}[3]{#1\bm{\textcolor{\GqlTokenColor}{\,#3\,}}#2}
\newcommand{\fCyExpArith}[2]{#1\bm{\ExpClr{\,\oplus\,}}#2}
\newcommand{\fCyExpArithOp}[3]{#1\bm{\textcolor{\GqlTokenColor}{\,#3\,}}#2}
\newcommand{\fCyExpAnd}[2]{#1\,\StxClr{\textbf{\textsc{And}}}\,#2}
\newcommand{\fCyExpOr}[2]{#1\,\StxClr{\textbf{\textsc{Or}}}\,#2}
\newcommand{\fCyExpNeg}[1]{\StxClr{\textbf{\textsc{Not}}}\,#1}
\newcommand{\fCyAExpCount}{\textbf{\textcolor{\GqlTokenColor}{\textsc{Count}}}}
\newcommand{\fCyAExpSum}{\textbf{\textcolor{\GqlTokenColor}{\textsc{Sum}}}}
\newcommand{\fCyAExpMax}{\textbf{\textcolor{\GqlTokenColor}{\textsc{Max}}}}
\newcommand{\fCyAExpMin}{\textbf{\textcolor{\GqlTokenColor}{\textsc{Min}}}}

\newcommand{\fCyLWild}{\textbf{\textcolor{\GqlTokenColor}{\%}}}
\newcommand{\fCyLNeg}[1]{\bm{\textcolor{\GqlTokenColor}{!\,}}#1}
\newcommand{\fCyLAnd}[2]{#1\bm{\textcolor{\GqlTokenColor}{\,\&\,}}#2}
\newcommand{\fCyLOr}[2]{#1\bm{\textcolor{\GqlTokenColor}{\,\mid\,}}#2}
\newcommand{\fCyExpIsNull}[1]{#1\;\textbf{\textcolor{\GqlTokenColor}{\textsc{isNull}}}}
\newcommand{\fCyExpTrue}{\fTrue}
\newcommand{\fCyExpFalse}{\fFalse}
\newcommand{\fCyExpNull}{\fNull}
\newcommand{\fCyProjection}{{\mu}}

\newcommand{\fCyPatternExp}{\PatClr{\text{$p$}}}
\newcommand{\fCyPattern}{\PatClr{\text{$P$}}}
\newcommand{\fCyLabelExp}{{\text{$l$}}}
\newcommand{\fCyLabel}{{\text{$L$}}}
\newcommand{\fCyPatterns}[1]{\PatClr{\text{$P_{#1}$}}}
\newcommand{\fCyPatternAnd}[2]{{#1}\fCyComma {#2}}

\newcommand{\fCyDirection}{\fCyPatternEdge}
\newcommand{\fCyDir}{\MetaClr{\partial}}

\newcommand{\fCyNode}[1]{\textbf{\text{#1}}}
\newcommand{\fCyTuple}{\mathsf{R}}
\newcommand{\fCyTuples}[1]{\fCyTuple_{#1}}
\newcommand{\fCyParenthesis}[1]{\textbf{\textcolor{\GqlTokenColor}{(}}\,#1\,\textbf{\textcolor{\GqlTokenColor}{)}}}
\newcommand{\fCySqParenthesis}[1]{\textbf{\textcolor{\GqlTokenColor}{[}}\,#1\,\textbf{\textcolor{\GqlTokenColor}{]}}}
\newcommand{\fUse}{\textbf{\textcolor{\GqlTokenColor}{\textsc{Use}}}}
\newcommand{\fUnion}{\textbf{\textcolor{\GqlTokenColor}{\textsc{Union}}}}
\newcommand{\fDstnct}{\textbf{\textcolor{\GqlTokenColor}{\textsc{Distinct}}}}
\newcommand{\fAll}{\textbf{\textcolor{\GqlTokenColor}{\textsc{All}}}}
\newcommand{\fIntersectD}{\textbf{\textcolor{\GqlTokenColor}{\textsc{Intersect Distinct}}}}
\newcommand{\fExceptD}{\textbf{\textcolor{\GqlTokenColor}{\textsc{Except Distinct}}}}

\newcommand{\fIntersectA}{\textbf{\textcolor{\GqlTokenColor}{\textsc{Intersect All}}}}
\newcommand{\fIntersectJ}{\textbf{\textcolor{\GqlTokenColor}{\textsc{Intersect}}}}
\newcommand{\fExceptJ}{\textbf{\textcolor{\GqlTokenColor}{\textsc{Except}}}}
\newcommand{\fExceptA}{\textbf{\textcolor{\GqlTokenColor}{\textsc{Except All}}}}
\newcommand{\fOtherwise}{\textbf{\textcolor{\GqlTokenColor}{\textsc{Otherwise}}}}
\newcommand{\fCyCompOp}{\QryClr{\text{$\circledast$}}}

\newcommand{\FTyProjAs}{\textsc{Prj-Atom-ValExp}}
\newcommand{\FTyProj}{\textsc{Prj-Atom-VarRef}}
\newcommand{\FTyProjList}{\textsc{Prj-Compose}}
\newcommand{\FTyMatchRet}{\textsc{Qry-Match}}
\newcommand{\FTyMatchWhereRet}{\textsc{Qry-Match-Filter}}
\newcommand{\FTyCompQueryLift}{\textsc{CompQry-Lift}}
\newcommand{\FTyCompQuery}{\textsc{CompQry-Compose}}
\newcommand{\FSubRefl}{\textsc{S-Reflexive}}
\newcommand{\FSubTrans}{\textsc{S-Transitive}}
\newcommand{\FSubRefineN}{\textsc{S-Ref-Node}}
\newcommand{\FSubRefineE}{\textsc{S-Ref-Edge}}
\newcommand{\FSubRefineNE}{\textsc{S-Ref-Node-Empty}}
\newcommand{\FSubRefineEE}{\textsc{S-Ref-Edge-Empty}}

\newcommand{\FSubUnionL}{\textsc{S-Union-Left}}
\newcommand{\FSubUnionR}{\textsc{S-Union-Right}}
\newcommand{\FSubUnionElim}{\textsc{S-Union}}
\newcommand{\FSubUnionCon}{\textsc{S-Union-Congruence}}
\newcommand{\FSubList}{\textsc{S-List}}
\newcommand{\FSubAny}{\textsc{S-Any}}
\newcommand{\FSubEmpty}{\textsc{S-Empty}}
\newcommand{\FInterL}{\textsc{Intersect-Left}}
\newcommand{\FInterR}{\textsc{Intersect-Right}}
\newcommand{\FInter}{\textsc{Intersect}}
\newcommand{\FPropEmpty}{\textsc{Prp-Empty}}
\newcommand{\FPropAtom}{\textsc{Prp-Atom}}
\newcommand{\FPropInsert}{\textsc{Prp-Insert}}
\newcommand{\FTyPNodeLabelPropLift}{\textsc{Pat-Node}}
\newcommand{\FTyPNodeLabelPropOpen}{\textsc{Atom-Node-Open}}

\newcommand{\FTyPNodeLabelPropClosed}{\textsc{Atom-Node-Closed}}

\newcommand{\FTySAtomLabelPropFresh}{\textsc{Sch-Atom}}

\newcommand{\FTyPEdgeLabelPropOpen}{\textsc{Atom-Edge-Open}}

\newcommand{\FTyPEdgeLabelPropClosed}{\textsc{Atom-Edge-Closed}}

\newcommand{\FTyPPatRefine}{\textsc{Refine-Closed}}

\newcommand{\FTyPPatRefineOpen}{\textsc{Refine-Open}}
\newcommand{\FTyPPatEdge}{\textsc{Pat-Edge}}
\newcommand{\FTyPPatPattern}{\textsc{Pat-Step}}

\newcommand{\FTyPPatQuantEdge}{\textsc{Pat-Quant-Edge}}
\newcommand{\FTyPPatParenPath}{\textsc{Pat-Paren-Path}}
\newcommand{\FTyPPatQuantPath}{\textsc{Pat-Quant-Path}}
\newcommand{\FSPatAtomNode}{\textsc{Atom-Node}}
\newcommand{\FSPatAtomEdge}{\textsc{Atom-Edge}}

\newcommand{\FSPatExpSingle}{\textsc{PatLst-Lift-Step}}
\newcommand{\FSPatExpDone}{\textsc{PatLst-Lift}}
\newcommand{\FSPatExpAndL}{\textsc{PatLst-Conj-L}}
\newcommand{\FSPatExpAndR}{\textsc{PatLst-Conj-R}}
\newcommand{\FSPatExpAnd}{\textsc{PatLst-Conj}}

\newcommand{\FSSegNode}{\textsc{Path-Node}}
\newcommand{\FSSegEdgeStepL}{\textsc{Path-Edge-L}}
\newcommand{\FSSegEdgeStepR}{\textsc{Path-Edge-R}}
\newcommand{\FSSegEdge}{\textsc{Path-Edge}}

\newcommand{\FSPathConcatStepL}{\textsc{Path-Concat-L}}
\newcommand{\FSPathConcatStepR}{\textsc{Path-Concat-R}}
\newcommand{\FSPathConcat}{\textsc{Path-Concat}}

\newcommand{\FSPathQPathStep}{\textsc{Path-QPath-Step}}
\newcommand{\FSPathQPathEval}{\textsc{Path-QPath}}
\newcommand{\FSQPathInit}{\textsc{QPath-Init}}
\newcommand{\FSQPathIter}{\textsc{QPath-Iter}}
\newcommand{\FSQPathFinish}{\textsc{QPath-Finish}}

\newcommand{\FSLabelEmpty}{\textsc{L-Empty}}
\newcommand{\FSLabelAtom}{\textsc{L-Atom}}
\newcommand{\FSLabelAtomFail}{\textsc{L-Atom-Fail}}
\newcommand{\FSLabelWild}{\textsc{L-Wild}}
\newcommand{\FSLabelWildFail}{\textsc{L-Wild-Fail}}
\newcommand{\FSLabelNeg}{\textsc{L-Neg}}
\newcommand{\FSLabelAnd}{\textsc{L-And}}
\newcommand{\FSLabelOr}{\textsc{L-Or}}
\newcommand{\FSCompQueryLift}{\textsc{CQ-Lift}}
\newcommand{\FSCompQueryL}{\textsc{CQ-L}}
\newcommand{\FSCompQueryR}{\textsc{CQ-R}}
\newcommand{\FSCompQueryUnion}{\textsc{CQ-Union}}

\newcommand{\FSQueryUse}{\textsc{Q-Use}}

\newcommand{\FSQueryMatch}{\textsc{Q-Match}}
\newcommand{\FSQueryMatchDone}{\textsc{Q-Where}}
\newcommand{\FSQueryProject}{\textsc{Q-Return}}
\newcommand{\FTyPPatExpSingle}{\textsc{PatList-Lift}}
\newcommand{\FTyPPatExpAnd}{\textsc{PatList-Conjunction}}

\newcommand{\FTySLWildcard}{\textsc{Sch-Lbl-Wildcard}}
\newcommand{\FTySLAtom}{\textsc{Sch-Lbl-Atom}}
\newcommand{\FTySLEmpty}{\textsc{Sch-Lbl-Empty}}
\newcommand{\FTySLNot}{\textsc{Sch-Lbl-Negation}}
\newcommand{\FTySLAnd}{\textsc{Sch-Lbl-Conjunction}}
\newcommand{\FTySLOr}{\textsc{Sch-Lbl-Disjunction}}
\newcommand{\FTySPAtom}{\textsc{Sch-Prp-Atom}}
\newcommand{\fDbWorld}{\Sigma;\fDbGNames}

\newcommand{\fDbWorldS}{\Sigma}
\newcommand{\fDbCatalog}{\Omega}
\newcommand{\fDbSchema}{\Upsilon}
\newcommand{\fDbRSchema}{\SchClr{\Gamma}}
\newcommand{\fSingletonRef}{\mathds{1}}
\newcommand{\fGroupRef}{\bigstar}
\newcommand{\fDbRSchemas}[1]{\SchClr{\Gamma_{#1}}}
\newcommand{\fDbVarUse}{\mathcal{X}}

\newcommand{\fQuantVars}{\mathcal{Y}}

\newcommand{\fDbNodeType}[1][\fDbGNames]{\TyClr{\bm{\mathsf{N}\langle}}{#1}\TyClr{\bm{\rangle}}}
\newcommand{\fDbEdgeType}[1][\fDbGNames]{\TyClr{\bm{\mathsf{E}\langle}}{#1}\TyClr{\bm{\rangle}}}

\newcommand{\fDbGraphSchema}[1][]{\SchClr{\Psi_{\scalebox{0.6}{$#1$}}}}
\newcommand{\fDbGraphSchemas}[1][]{\overline{\fDbGraphSchema[#1]}}
\newcommand{\fDbNodeSchemaSymbol}{\SchClr{\zeta}}
\newcommand{\fDbEdgeSchemaSymbol}{\SchClr{\xi}}
\newcommand{\fDbElemSchemaSymbol}{\SchClr{\varsigma}}
\newcommand{\fDbNodeSchemaElem}[1][]{\fDbNodeSchemaSymbol_{\SchClr{\scalebox{0.6}{$#1$}}}}
\newcommand{\fDbEdgeSchemaElem}[1][]{\fDbEdgeSchemaSymbol_{\SchClr{\scalebox{0.6}{$#1$}}}}
\newcommand{\fDbElemSchemaElem}[1][]{\fDbElemSchemaSymbol_{\SchClr{\scalebox{0.6}{$#1$}}}}
\newcommand{\fDbNodeSchema}[1][]{\overline{\fDbNodeSchemaSymbol}_{\SchClr{\scalebox{0.6}{$#1$}}}}
\newcommand{\fDbEdgeSchema}[1][]{\overline{\fDbEdgeSchemaSymbol}_{\SchClr{\scalebox{0.6}{$#1$}}}}
\newcommand{\fDbElemSchema}[1][]{\overline{\fDbElemSchemaSymbol}_{\SchClr{\scalebox{0.6}{$#1$}}}}
\newcommand{\fSchLbls}[1]{\fElemAccess{1}{#1}}
\newcommand{\fSchProps}[1]{\fElemAccess{2}{#1}}
\newcommand{\fSchEnd}[2]{\fElemAccess{\the\numexpr#2+2\relax}{#1}}
\newcommand{\fSchDir}[1]{\fElemAccess{5}{#1}}
\newcommand{\fDbPropSchema}[1][]{\SchClr{\Phi_{\scalebox{0.6}{$#1$}}}}
\newcommand{\fDbEdgeDir}[1][]{\MetaClr{\delta_{\scalebox{0.6}{$#1$}}}}
\newcommand{\fDbGNames}[1][]{\textsf{G}_{#1}}
\newcommand{\fDbBindTable}[1][]{\textsc{T}_{#1}}
\newcommand{\fDbRVal}{\omega}
\newcommand{\fDbRVals}[1]{\fDbRVal_{#1}}
\newcommand{\fConformSym}[2]{\models^{\scalebox{0.5}{$#1$}}_{\scalebox{0.6}{$#2$}}}

\newcommand{\fConform}[2]{#1\!\fConformSym{}{}\!#2}

\newcommand{\fConformGraph}[4]{#1\!\fConformSym{#3}{#4}\!#2}

\newcommand{\fCompatU}[2]{#1\!\sim_{\scalebox{0.7}{$\bm{\cup}$}}\!#2}

\newcommand{\fCompatJ}[2]{#1\!\sim_{\scalebox{0.7}{$\bm{\Join}$}}\!#2}

\newcommand{\fCompatC}[2]{#1\!\sim_{\scalebox{0.7}{$\fCyCompOp$}}\!#2}

\newcommand{\fPropJudge}[3]{{#1}\vdash_{\scalebox{0.7}{\textsf{Prp}}}{#2}:{#3}}

\newcommand{\fRuntimeStx}[1]{\hat{#1}}
\NewDocumentCommand{\fExpJudgeParam}{O{\fDbWorld;\fDbRSchema} O{\fCyExp} O{\fSortVar} O{\Box} O{\Diamond} O{\fDbVarUse}}{%
{#1}\vdash^{\scalebox{0.6}{$\hspace{1pt}\!#4\!$}}_{\scalebox{0.6}{$\!#5\!$}}{#2}:{#3}\triangleright{#6}%
}
\NewDocumentCommand{\fExpJudgeParamC}{O{\fCyExp} O{\fSortVar} O{\fDbVarUse}}{%
\fExpJudgeParam[\fDbWorld;\fDbRSchema][#1][#2][\Box][\Diamond][#3]
}

\newcommand{\fSAtomJudge}[3]{{#1}\vdash_{\scalebox{0.7}{\textsf{Atom}}}{#2}\rightsquigarrow{#3}}

\newcommand{\fPatAtomJudge}[3]{{#1}\vdash_{\scalebox{0.7}{\textsf{Atom}}}{#2}:{#3}}

\newcommand{\fPatJudgeM}[4]{{#1}\vdash^{\raisebox{0ex}{\scalebox{0.6}{$#4$}}}_{\raisebox{0.1ex}{\scalebox{0.7}{\textsf{Pat}}}}{#2}:{#3}}
\newcommand{\fPatJudgeP}[4]{{#1}\vdash^{\raisebox{0ex}{\scalebox{0.6}{$#4$}}}_{\raisebox{0.1ex}{\scalebox{0.7}{\textsf{PatStep}}}}{#2}:{#3}}
\newcommand{\fLabelSJudge}[3]{{#1}\vdash_{\scalebox{0.7}{\textsf{Lbl}}}{#2}\rightsquigarrow{#3}}
\newcommand{\fPropSJudge}[3]{{#1}\vdash_{\scalebox{0.7}{\textsf{Prp}}}{#2}\rightsquigarrow{#3}}
\newcommand{\fPatExpJudge}[3]{{#1}\vdash_{\scalebox{0.7}{\textsf{PatLst}}}{#2}:{#3}}

\newcommand{\fVisitEdge}[1][]{\mathcal{W}_{#1}}
\newcommand{\fVisitPath}[1][]{\mathcal{F}_{#1}}

\newcommand{\fLiftQuant}[2]{{#2}\!\uparrow_{\!\scalebox{0.7}{$#1$}}}

\newcommand{\fBaseType}[1]{\lfloor{#1}\rfloor}
\newcommand{\fRefineType}[2]{{#1}\!\!\upharpoonright_{\!#2}}
\newcommand{\fEndpointCond}[3]{\bm{\theta_{\fCyDirection}}\textbf{(}#1,#2,#3\textbf{)}}

\newcommand{\fOrientR}{\rightarrow}
\newcommand{\fOrientL}{\leftarrow}
\newcommand{\fOrientU}{\sim}
\newcommand{\fDirDenote}[1]{\bm{\mathsf{dir}(}{#1}\bm{)}}

\newcommand{\fRefineJudge}[3]{{#1}\vdash_{\scalebox{0.7}{\textsf{Rfn}}}{#2}\rightsquigarrow{#3}}
\newcommand{\fProjJudge}[3]{{#1}\vdash{#2}:{#3}}
\newcommand{\fQueryJudge}[3]{{#1}\vdash{#2}:{#3}}
\newcommand{\fPatConcat}[2]{{#1}\,\circ\,{#2}}
\newcommand{\fCompQueryJudge}[3]{{#1}\vdash_{\fCyCompOp}{#2}:{#3}}

\newcommand{\fSAtomJudgeN}[2]{\fPropGraph\vdash_{\scalebox{0.7}{\textsf{Atom}}}{#1}\fOpBigStep{#2}}
\newcommand{\fSAtomJudgeE}[2]{\fPropGraph\vdash_{\scalebox{0.7}{\textsf{Atom}}}{#1}\fOpBigStep{#2}}

\newcommand{\fSPathJudge}[2]{\fPropGraph\vdash_{\scalebox{0.7}{\textsf{Pat}}}{#1}\fOpStep{#2}}
\newcommand{\fSPathJudgeC}[2]{\fPropGraph\vdash_{\scalebox{0.7}{\textsf{Pat}}}{#1}\fOpStep{#2}}

\newcommand{\fSQPathJudge}[2]{\fPropGraph\vdash_{\scalebox{0.7}{\textsf{QPat}}}{#1}\fOpStep{#2}}
\newcommand{\fOpStep}{\!\rightarrow\!}
\newcommand{\fOpStepStar}{\longrightarrow^{\!*}}
\newcommand{\fOpBigStep}{\Downarrow}
\newcommand{\fValSet}{\mathcal{V}}
\newcommand{\fQuantIt}{\kappa}

\newcommand{\fEval}[4]{#1;\,#2\vdash#3\,\fOpStep\,#4}
\newcommand{\fLabelOp}[3]{#1\vdash_{\fPGLabelFunc}#2\fOpBigStep#3}

\newcommand{\fPEStep}[2]{\fPropGraph\vdash_{\scalebox{0.7}{\textsf{PatLst}}}{#1}\fOpStep#2}
\newcommand{\fPEStepStar}[2]{\fPropGraph\vdash_{\scalebox{0.7}{\textsf{PatLst}}}{#1}\fOpStepStar#2}
\newcommand{\fQStep}[3]{#1\vdash_{\fCyCompOp}#2\fOpStep#3}
\newcommand{\fQLinStep}[3]{#1\vdash_{}#2\fOpStep#3}

\newcommand{\fValType}[2]{#1\mathbin{:}#2}
\newcommand{\fConfigType}[2]{\fDbWorldS\vdash_{\mathsf{cfg}}#1\mathbin{:}#2}

\newcommand{\fPConfigType}[3]{\fDbWorld\vdash_{\mathsf{pcfg}}#1\mathbin{:}#2\Rightarrow#3}
\newcommand{\fWFWorld}{\models\fDbWorldS}
\newcommand{\fAnonErase}[1]{\fAnonDrop{\fAttrAnon}{\fBTDrop{#1}}}
\newcommand{\fBag}[1]{\left\{\mkern-6mu\left\{#1\right\}\mkern-6mu\right\}}
\newcommand{\fSet}[1]{\left\{#1\right\}}
\newcommand{\fList}[1]{[#1]}

\newcommand{\fSize}[1]{\lvert{#1}\rvert}
\newcommand{\fPowerSetSub}[2]{\bm{\mathcal{P}_{#2}(}#1\bm{)}}
\newcommand{\fPowerSet}[1]{\fPowerSetSub{#1}{}}
\newcommand{\fTimes}[2]{{#1}\!\times\!{#2}}
\newcommand{\fTermDef}{\mathrel{\vcentcolon\vcentcolon=}}
\newcommand{\fAssign}{\mathrel{\vcentcolon=}}

\newcommand{\fSubType}[2]{#1\bm{<}\!\textbf{:}\,#2}
\newcommand{\fNSubType}[2]{#1\bm{\nless}\!\textbf{:}\,#2}
\newcommand{\fElemAccess}[2]{\bm{\pi_{#1}(}#2\bm{)}}

\newcommand{\fUniverse}[1]{\mathscr{U}_{#1}}
\newcommand{\fUniverseScale}[1]{\mathscr{U}_{\scalebox{0.6}{$#1$}}}
\newcommand{\fInt}{\TyClr{\mathbb{Z}}}
\newcommand{\fStr}{\TyClr{\mathbb{S}}}
\newcommand{\fBool}{\TyClr{\mathbb{B}}}

\newcommand{\fAny}{\TyClr{\textbf{\textsf{Any}}}}
\newcommand{\fEmpty}{\TyClr{\bm{\bot}}}
\newcommand{\fEmptyT}[1]{{#1}^{\fEmpty}}
\newcommand{\fUpToNull}[2]{{#1}\simeq_{\scalebox{0.7}{$\fNullable$}}{#2}}
\newcommand{\fNullable}{\TyClr{\textbf{?}}}
\newcommand{\fListT}{\TyClr{\textbf{\textsf{List}}}}
\newcommand{\fListConcat}[2]{#1::#2}

\newcommand{\fAttr}{\mathcal{A}}
\newcommand{\fAttrAnon}{\mathcal{A}_\mathsf{Anon}}
\newcommand{\fMSort}[1]{\TyClr{\mathcal{T}_{#1}}}
\newcommand{\fMSortPG}{\fMSort{0}}
\newcommand{\fMSortGQL}{\fMSort{1}}
\newcommand{\fMSortGQLN}{\fMSort{2}}

\newcommand{\fSortVar}{\TyClr{\tau}}
\newcommand{\fSortVars}[1]{\TyClr{\tau_{\scalebox{0.65}{$#1$}}}}

\newcommand{\fTrue}{\textbf{\textsc{True}}}
\newcommand{\fFalse}{\textbf{\textsc{False}}}
\newcommand{\fNull}{\textbf{\textsc{Null}}}

\newcommand{\fSchemaDUnion}[2]{#1\sqcup#2}
\newcommand{\fSchemaUnion}[2]{#1\cup#2}
\newcommand{\fSchemaJoin}[2]{#1\,{\scalebox{0.8}{$\Join$}}\,#2}

\newcommand{\fTupleJoin}[2]{{#1}\!\Join\!{#2}}
\newcommand{\fBagVar}[1][]{\textsc{B}_{#1}}
\newcommand{\fBTJoinOp}{\bm{\Join}}

\newcommand{\fBTTrailProd}[2]{{#1}\bm{\otimes}{#2}}
\newcommand{\fBTUnit}{\mathbf{1}_{\mathsf{T}}}

\newcommand{\fPathTable}[1][]{\textsc{P}_{#1}}
\newcommand{\fPTNodeLift}[3]{\bm{\eta}_{\scalebox{0.8}{$#1, #2$}}\textbf{(}#3\textbf{)}}
\newcommand{\fPTEdgeLift}[3]{\bm{\eta}_{\scalebox{0.8}{$#1, #2$}}\textbf{(}#3\textbf{)}}
\newcommand{\fEdgeEnds}[3]{\bm{\mathsf{ends}}_{#1,#2}\textbf{(}#3\textbf{)}}

\newcommand{\fPTJoinOp}{\bm{\diamond}}
\newcommand{\fPTJoinRec}[1]{\bm{\diamond}_{\scalebox{0.85}{$#1$}}}
\newcommand{\fPTDrop}[1]{\bm{\pi}_{\mathsf{T}}\textbf{(}#1\textbf{)}}
\newcommand{\fBTDrop}[1]{\bm{\pi}_{\mathsf{B}}\textbf{(}#1\textbf{)}}
\newcommand{\fQFrame}[3]{\bm{\lbrack}#1\bm{\rbrack}^{#3}_{#2}}
\newcommand{\fQZero}[2]{\mathbf{0}_{#1}^{#2}}

\newcommand{\fQRecExt}{\bm{\star}_{\fCyQuantifier}}

\newcommand{\fQJoinOp}[1]{\bm{\diamond}_{#1}}
\NewDocumentCommand{\fQState}{O{\fPathTable} O{\fQuantVars} O{\fCyQuantifier} m m m}{%
  \fQFrame{#1}{#2}{#3}\bm{\langle}#4,\,#5,\,#6\bm{\rangle}%
}
\newcommand{\fAnonDrop}[2]{\bm{\pi}_{\neg #1}\textbf{(}#2\textbf{)}}
\newcommand{\fVars}[1]{\bm{\mathsf{vars}(}#1\bm{)}}
\newcommand{\fNormOf}[1]{\bm{\mathsf{norm}(}#1\bm{)}}
\newcommand{\fAnonFill}[1]{\bm{\mathsf{anon}(}#1\bm{)}}

\newcommand{\fDom}[1]{\text{$\bm{\mathsf{dom}(}#1\bm{)}$}}

\newcommand{\fQLo}[1]{\text{$\bm{\mathsf{lo}(}#1\bm{)}$}}
\newcommand{\fQHi}[1]{\text{$\bm{\mathsf{hi}_{\fPropGraph}(}#1\bm{)}$}}
\newcommand{\fDir}[1]{\text{$\bm{\mathsf{ornt}(}#1\bm{)}$}}
\newcommand{\fVar}[1]{\text{$\bm{\mathsf{var}(}#1\bm{)}$}}
\newcommand{\fTail}[1]{\text{$\bm{\mathsf{tail}(}#1\bm{)}$}}

\newcommand{\fLbl}[1]{\text{$\bm{\mathsf{lbl}(}#1\bm{)}$}}
\newcommand{\fProp}[1]{\text{$\bm{\mathsf{prp}(}#1\bm{)}$}}
\newcommand{\fMinE}[1]{\text{$\bm{\mathsf{lo}^{\scalebox{0.7}{$\mathsf{\#E}$}}(}#1\bm{)}$}}
\newcommand{\fMinN}[1]{\text{$\bm{\mathsf{lo}^{\scalebox{0.7}{$\mathsf{\#N}$}}(}#1\bm{)}$}}
\newcommand{\fSup}[1]{\text{$\bm{\mathsf{supp}(}#1\bm{)}$}}

\newcommand{\FigCaptionGqlCalculus}{%
The formalized \lang calculus with the relevant
\textcolor{\CyIsoRefColor}{$\mathrm{ISO}$} sections and optional features (\textcolor{\CyIsoRefColor}{\isoFeatureStyle{GXXX}}), described in gray.
\(
\fCyVar\!\in\!\fAttr;\,
i,j\!\in\!\,\mathbb{N};\,
\fUniverseScale{\fBool}\!=\!\fSet{\fTrue,\fFalse}\!;
\)
and
$\fCyExpConst{}\!\in\!\fUniverse{\fSortVar}$, $\fSortVar\!\in\!\fMSortPG.$
We use the metavariable $\fCyPatternAtom$ to denote $\fCyPatternNode$ and $\fCyPatternEdge$.
\label{fig:gql-calculus}
}
\newcommand{\FigCaptionGqlMetafunctions}{%
Metafunctions for parsing \lang's graph pattern matching
fragment. \emph{Atom Descriptor} metafunctions $\bm{\mathsf{var}}$, $\bm{\mathsf{lbl}}$ and 
$\bm{\mathsf{prp}}$; lift to pattern atoms $\fCyPatternNode$ and $\fCyPatternEdge$ via 
their descriptors ($\fCyDscSyn$). Only the quantifier lower bounds 
$\fQLo{\fCyQuantifier}$ are given since interpreting upper bounds requires a 
specific property graph instance (for \pathModeTrail mode).
\label{fig:gql-metafunctions}
}
\newcommand{\FigCaptionPropGraphInstance}{%
An example of a property graph instance that contains:
nodes (\raisebox{0.2ex}{\circledCaption{1}}),
directed edges
(\raisebox{0.2ex}{\circledCaption{2}}),
undirected edges
(\raisebox{0.2ex}{\circledCaption{3}}),
node property maps
(\raisebox{0.2ex}{\circledCaption{4}}),
and labeled edges
(\raisebox{0.2ex}{\circledCaption{5}}).
\label{fig:pgm-instance}
}
\newcommand{\FigCaptionGqlExample}{%
An example of graph pattern matching with \lang.
\label{fig:gql-example}
}

\newcommand{\FigCaptionGqlRuntime}{%
Runtime values, execution constructs, runtime syntax, and runtime
metafunctions used in our semantics formalization. Runtime terms are 
shown with a hat (\eg $\fRuntimeStx{\fCyPattern}$) to differentiate 
them from the source terms of Figure~\ref{fig:gql-calculus}. $\fPathTable[A]$ and
$\fVisitPath$ are the accumulator and the frontier of a quantified
path, and $\fQuantIt$ counts its repetitions.
\label{fig:runtime-entities}
}

\setcopyright{cc}
\setcctype{by}
\acmDOI{10.1145/3839459}
\acmYear{2026}
\acmJournal{PACMPL}
\acmVolume{10}
\acmNumber{OOPSLA2}
\acmArticle{327}
\acmMonth{10}
\acmSubmissionID{oopslab26main-p337-p}
\received{2026-03-18}
\received[accepted]{2026-06-10}

\begin{document}

\title{\Title}

\author{Aditya Thimmaiah}
\orcid{0009-0002-1917-7386}
\affiliation{%
  \institution{University of Texas at Austin}
  \city{Austin}
  \country{USA}
}
\email{auditt@utexas.edu}

\author{Tong-Nong Lin}
\orcid{0009-0005-0403-654X}
\affiliation{%
  \institution{University of Texas at Austin}
  \city{Austin}
  \country{USA}
}
\email{tong-nong@utexas.edu}

\author{Milos Gligoric}
\orcid{0000-0002-5894-7649}
\affiliation{%
  \institution{University of Texas at Austin}
  \city{Austin}
  \country{USA}
}
\email{gligoric@utexas.edu}

\begin{abstract}
  Research and development of graph query languages
  has been gaining traction with the increase in popularity of graph
  databases, specifically due to the flexible schema
  and other rich semantic offerings of the latter's most common underlying
  data model: the property graph. 
  This has culminated in the standardization of the ISO Graph
  Query Language (\lang) as \iso in 2024,
  the first international standard for property graph-based
  graph query languages.
  However, \iso codifies its semantics informally across
  \isoPageCount pages of prose, making it difficult to
  formally reason about the standard or for a standard-faithful implementation.
  
  Existing formalizations are not adequate because they
  either: (1)~significantly reduce the semantic complexity by omitting
  bag semantics, schemas, and composite queries on multiple graphs;
  (2)~or significantly reduce the syntactic complexity by only considering isolated
  fragments such as pattern-matching, leaving the full query pipeline
  unformalized. Yet it is these semantic--syntactic features that 
  make formalizing \lang non-trivial.

  We present \tool, the first mechanized, small-step operational
  semantics for a substantial read-only fragment of \lang that
  is grounded in the \iso standard. Our formalization
  models multi-graph property graphs with mixed edge directionality and
  supports a large fraction of \lang pattern constructs: quantified
  paths and edges, directional and undirected matching, label
  expressions, pattern lists, and composite queries.
  The semantics is supported by a schema-aware type system
  that refines variable types via closed-graph schemas, tracks
  nullability, supports multiple composite
  query operators, and models quantified-path bindings with list types.
  We prove that our type system is sound, ensuring an end-to-end
  guarantee of well-formed queries yielding results that conform
  to their declared schemas. \tool{} provides the first bridge
  between \lang's informal specification and a mechanized implementation,
  enabling formal reasoning about correctness.
\end{abstract}

\begin{CCSXML}
  <ccs2012>
     <concept>
         <concept_id>10003752.10003790.10011740</concept_id>
         <concept_desc>Theory of computation~Type theory</concept_desc>
         <concept_significance>500</concept_significance>
         </concept>
     <concept>
         <concept_id>10003752.10010124.10010131.10010134</concept_id>
         <concept_desc>Theory of computation~Operational semantics</concept_desc>
         <concept_significance>500</concept_significance>
         </concept>
     <concept>
         <concept_id>10011007.10011006.10011039.10011311</concept_id>
         <concept_desc>Software and its engineering~Semantics</concept_desc>
         <concept_significance>500</concept_significance>
         </concept>
     <concept>
         <concept_id>10002951.10002952.10003197.10010825</concept_id>
         <concept_desc>Information systems~Query languages for non-relational engines</concept_desc>
         <concept_significance>500</concept_significance>
         </concept>
   </ccs2012>
\end{CCSXML}
  
  \ccsdesc[500]{Theory of computation~Type theory}
  \ccsdesc[500]{Theory of computation~Operational semantics}
  \ccsdesc[500]{Software and its engineering~Semantics}
  \ccsdesc[500]{Information systems~Query languages for non-relational engines}

\keywords{ISO GQL, Cypher, Graph Query Language, Formal Semantics, Lean, Mechanization}

\maketitle

% 2mm gap between table and captions
\captionsetup[table]{skip=1mm}
% 2mm gap between figure and captions
\captionsetup[figure]{skip=2mm}
% 2mm gap between figure and captions
\captionsetup[subfigure]{skip=2mm}
\setlength{\columnsep}{2mm}%
\setlength{\textfloatsep}{8pt plus 1pt minus 1pt}
\setlength{\intextsep}{6pt plus 1pt minus 1pt}
\setlength{\parskip}{4pt}

\section{Introduction}
\label{sec:intro}

Graph databases have become foundational for modern data management
and power many applications from social networks and fraud
detection to drug discovery and knowledge 
graphs~\cite{angles2018propertygraph, sakr2021future}.
The Graph Query Language (\lang)~\cite{gheerbrant2025gql,deutsch2022graph}, 
standardized in 2024 as \iso, is the first international standard 
of graph query languages that query property graphs~\cite{angles2018propertygraph}. 
The standard unified language features
from other graph query languages such as
\cypher~\cite{francis2018cypher}, PGQL~\cite{van2016pgql}, 
and G-CORE~\cite{angles2018gcore}.
It defines the semantics specification which implementations 
are expected to conform.

However, the semantics is specified informally, spanning more than 
600 pages of prose that interleave data flow, typing,
query evaluation pipelines, and pattern-matching, \etc
making formally reasoning about \lang, such as verifying implementations or 
proving query equivalences, difficult.

Existing formalizations do not address this adequately.
\citet{Francis2023gpc} distilled \lang's pattern matching into a Graph
Pattern Calculus (GPC) with typing rules and a denotational semantics,
and \citet{gheerbrant2025gql} defined Core \lang and Core PGQ to study
its expressive power, but both leave out graph schemas~\cite{angles2023pgschema} 
and bag semantics.
\citet{ye2025flexible} formalized a gradually typed calculus for \lang
path patterns in isolation from the rest of the language, implementing 
it in Python.
No mechanized semantics exists for a fragment of \lang
rich enough to model the interaction between graph schemas, 
bag semantics, and composite queries, \etc 
some of the key unique features of \lang.

We present \tool{}, the first mechanized, executable, small-step operational
semantics for a large read-only fragment of \lang, together with
a schema-aware type system and a type soundness theorem, all
machine-checked in \lean~\cite{deMouraLean2021}; providing a foundation 
for rigorous formal reasoning.

We handle the complexity of \lang's semantics by \emph{stratifying} 
our static typing rules and small-step operational semantics into 
compositional layers---expressions, patterns, and queries. This allows
us to capture the key non-trivial features of \lang.
First, graph element types in \lang are graph-scoped and refined 
by graph schemas, allowing for label and property constraints 
to enable statically detecting patterns that cannot match, but requiring a 
typing system that propagates refined types across the query pipeline.
Second, \lang's three-valued semantics identifies null with the truth 
value \emph{Unknown}, requiring every operator to participate in 
Kleene-style null propagation and every type to track nullability.
Third, \lang's \emph{path modes} and \emph{match modes} impose 
additional constraints on pattern matching such as the \pathModeTrail{} 
path mode which constrains a pattern-match by forbidding repeated edges, 
thus breaking naïve path composition. We
handle this by introducing auxiliary \emph{execution constructs} besides those
specified in the standard, which carry the edges visited during pattern-matching 
alongside the matched elements, making it trivial to enforce edge-disjointness.
Fourth, quantified path patterns iterate over graph structure and bind
their variables to sequences rather than single elements. We compute
their repetitions with a frontier.
Finally, composite queries combine bags whose records carry the same
attributes but not necessarily the same types, yielding heterogeneous
unions that the type system must track to formalize well-formedness of queries.

\smallskip
\noindent
We make the following contributions:

\begin{itemize}[leftmargin=1.5em]
\item We define a \textbf{formal calculus} (\S\ref{sec:prelims-gql}) for a rich read-only
  fragment of \lang, covering graph resolution, pattern matching,
  filtering, projection, and composite queries over multiple graphs.

\item We develop the first \textbf{schema-aware type system} (\S\ref{sec:type-system}) for \lang,
  supporting graph-scoped refinement, nullability tracking,
  heterogeneous unions, and list-typed quantified path bindings.

\item We give the first \textbf{small-step operational semantics} (\S\ref{sec:semantics-operational}) for
  \lang, capturing \pathModeTrail-aware pattern matching,
  quantified patterns, nulls, and composite queries; all under bag semantics.

\item We prove a \textbf{layered type soundness theorem} (\S\ref{sec:metatheory}) guaranteeing that well-formed queries
(on complete evaluation) produce results that conform to the
types assigned to them statically.

\item We \textbf{mechanize} (\S\ref{sec:mechanization}) our semantics in \lean{}, and validate it on queries of the Linked Data Benchmark 
  Council's (LDBC) Social Network Benchmark (SNB) Interactive v2 workload~\cite{ldbc-snb}.
\end{itemize}

%\smallskip
% Every component of \tool{} is derived from the \iso standard.
% %  yet
% % restructured into a compositional and mechanized framework.
% %
% Our formalization provides a foundation for reasoning about
% correctness of \lang implementations, validating query rewrites, and
% guiding future evolution of the standard.

%% \noindent
%% The paper is structured as follows:
%% Section~\ref{sec:prelims} discusses the preliminaries.
%% Sections~\ref{sec:type-system} and~\ref{sec:typing-rules} develop the
%% type system and well-formedness rules for \lang queries.
%% Section~\ref{sec:semantics-operational} formalizes the small-step 
%% operational semantics for \lang, while
%% Section~\ref{sec:metatheory} uses it for proving metatheoretic results.
% Section~\ref{sec:related-work} discusses related work, and
% Section~\ref{sec:conclusion} concludes.

\noindent
All references to the \lang standard are with respect to the \iso First Edition 
2024, which we henceforth refer to as just \isoShort. 
We also abbreviate normative \emph{syntax rules} from the standard as 
\textcolor{\CyIsoRefColor}{$\mathrm{SR}$} when referring to them. 
\tool{} is available at \url{https://github.com/EngineeringSoftware/mgql}.

\section{Preliminaries}
\label{sec:prelims}

% This section establishes the formal foundations on which the rest of
% the paper builds.
% %
We now formally introduce the two core components our 
work is built around: the property graph data model (\S\ref{sec:prelims-pgm}) and the 
\lang graph query language (\S\ref{sec:prelims-gql}) that operates on them.
%
% We first define the property graph data model , 
% followed by the \lang calculus fragment 
% that we consider (\S\ref{sec:prelims-gql}).
%
% Together they establish the vocabulary and notation used
% throughout the remainder of the paper.

\begin{myNotation}[Metafunctions and Shorthands]\label{not:metafunc}
  We use the metafunctions $\fDom{\cdot}$ and $\fSup{\cdot}$ to denote the
  \emph{domain} and the \emph{support} of a function; the uppercase calligraphic 
  letters (\eg $\mathcal{A},\,\mathcal{B},\,\mathcal{C}$) as well as accents 
  (\eg \smash{$\overline{L},\,\overline{\zeta}$}\,) to denote sets; and when convenient, 
  the corresponding unaccented symbols denote elements
  ranging over those sets (\eg \smash{$\zeta\in\overline{\zeta}$}\,).
  For a set $\mathcal{S}$ and $m \in \mathbb{N}$,
  $\fPowerSetSub{\mathcal{S}}{\leq m}$ denotes the set of subsets of
  $\mathcal{S}$ whose cardinality is at most $m$, \ie
  \smash{\(
  \fPowerSetSub{\mathcal{S}}{\leq m}
  \triangleq
  \{\, \mathcal{C} \subseteq \mathcal{S} \mid \fSize{\mathcal{C}} \le m \,\}.
  \)}
  The metafunction $\fElemAccess{k}{\cdot}$ for a natural number $k$, denotes access to the
  \smash{$k^{\text{th}}$} component of a tuple. When applied to a set or a bag of tuples,
  it lifts pointwise and returns the set or bag of the \smash{$k^{\text{th}}$}
  components of those tuples, respectively. \qed
\end{myNotation}

% subsections
\subsection{The Property Graph Data Model}
\label{sec:prelims-pgm}

The property graph data
model~\cite{angles2018propertygraph, rodriguez2010constructions} is
richer than the edge-labeled graphs typically used in theoretical work
on regular path queries~\cite{barcelo2013querying}: nodes and edges
carry \emph{labels} describing their domain role, \emph{properties}
stored as key--value pairs, and \emph{directionality} markers that
distinguish directed from undirected relationships.
This additional structure makes \lang's type system nontrivial with
label expressions, property constraints, join conditions, \etc all
inspecting graph structure absent in simpler models and abstractions.
We formalize it in full without reducing to a minimal abstraction.
%% ---and why we formalize it in full rather than reducing to a
%% minimal graph abstraction.
%
%
%% Our type system exploits closed-graph schemas for static refinement,
%% a capability absent from all prior graph-query formalizations known to
%% us.
%
%% Edges also carry a descriptor that identifies whether they are
%% directed or undirected.
%% %
%% A property graph may enforce a schema that constrains the
%% admissibility of nodes/edges in which case it is referred to
%% as \emph{closed} while a schemaless graph is said to be
%% \emph{open}.

\begin{wrapfigure}{r}{0.38\linewidth}%
  \vspace{-2mm}%
  \centering%
  \resizebox{0.85\linewidth}{!}{\begin{tikzpicture}[
  font=\footnotesize,
  >=Latex,
  node distance=15mm and 20mm,
  vertex/.style={
    circle,
    draw,
    thick,
    white,
    minimum size=6mm,
    inner sep=0pt,
    align=center
  },
  vertexD/.style={
    circle,
    draw,
    thick,
    white,
    minimum size=5.5mm,
    inner sep=0pt,
    align=center
  },
  numberIndicate/.style={
    circle,
    draw=white,
    fill=black,
    text=white,
    thick,
    font=\small,
    minimum size=3.5mm,
    inner sep=0pt,
    align=center
  },
  person/.style={vertex, fill=blue!20},
  company/.style={vertex, fill=orange!20},
  proj/.style={vertex, fill=purple!20},
  personD/.style={vertexD, fill=blue!20},
  companyD/.style={vertexD, fill=orange!20},
  projD/.style={vertexD, fill=purple!20},
  indicate/.style={numberIndicate},
  vname/.style={font=\bfseries},
  vlabel/.style={font=\scriptsize, yshift=-5mm},
  vlabelD/.style={font=\scriptsize, yshift=-4mm},
  ilabel/.style={font=\normalsize, xshift=-0.2pt},
  ilabel2/.style={font=\normalsize, xshift=-0.2pt, yshift=-1pt},
  callout/.style={
    draw=black!80,
    rounded corners=1pt,
    %thin,
    fill=black!6,
    font=\scriptsize,
    inner sep=2pt,
    align=left
  },
  rel/.style={
    -{Latex[length=2mm,width=1.6mm]},
    thick
  },
  relUDir/.style={
    thick
  },
  rlabel/.style={
    fill=white,
    inner sep=1.2pt,
    font=\scriptsize
  },
  frame/.style={draw, rounded corners=3pt, thin, inner sep=7pt}
]

% Persons
\node[person] (a) at (0.5,-0.2) {};
\node[vname] at (a) {a};
\node[vlabel] at (a) {};

\node[person] (o) at (2,1.0) {};
\node[vname] at (o) {o};
\node[vlabel] at (o) {};

\node[person] (c) at (2,-1.4) {};
\node[vname] at (c) {c};
\node[vlabel] at (c) {};

\node[person] (z) at (3.5,-0.2) {};
\node[vname] at (z) {z};
\node[vlabel] at (z) {};

\node[person] (e) at (4.0,-2.0) {};
\node[vname] at (e) {e};
\node[vlabel] at (e) {};

% Companies
\node[company] (x) at (0.0,-2.0) {};
\node[vname] at (x) {x};
\node[vlabel] at (x) {};

\node[company] (u) at (4.0,1.6) {};
\node[vname] at (u) {u};
\node[vlabel] at (u) {};

% Project
\node[proj] (s) at (0.0,1.6) {};
\node[vname] at (s) {s};
\node[vlabel] at (s) {};

% Node types
\node[vlabelD] (spacing) at (0,-2.8) {};
\node[personD] (f) at (0,-3.7) {};
\node[vlabelD] (flab) at (f) {\footnotesize\textbf{\texttt{Person}}};
\node[companyD] (n) at (2,-3.7) {};
\node[vlabelD] (nlab) at (n) {\footnotesize\textbf{\texttt{Company}}};
\node[projD] (q) at (4,-3.7) {};
\node[vlabelD] (qlab) at (q) {\footnotesize\textbf{\texttt{Project}}};
\node[
  draw,
  thick,
  rectangle,
  rounded corners=1pt,
  fit=(spacing)(f)(n)(q)(flab)(nlab)(qlab),
  inner sep=-1.5pt,
] (infoBox) {};

\node[
  draw,
  thick,
  rounded corners=0.5pt,
  fill=white,
  inner sep=2pt,
] at (infoBox.north) {\footnotesize \textbf{Property Graph Node Labels}};

% KNOWS edges
\draw[rel] (a) to[bend left=12] node[rlabel, midway] {\texttt{KNOWS}} (o);
\draw[rel] (o) to[bend left=12] node[rlabel, midway] {\texttt{KNOWS}} (z);
\draw[rel] (a) to[bend right=12] node[rlabel, midway] {\texttt{KNOWS}} (c);
\draw[rel] (c) to[bend right=14] node[rlabel, midway] {\texttt{KNOWS}} (z);
\draw[rel] (c) to[bend right=0] node[rlabel, midway] {\texttt{KNOWS}} (o);
%\draw[rel] (o) to[bend right=30] node[rlabel, midway] {\texttt{KNOWS}} (c);
\draw[rel] (c) to[bend right=12] node[rlabel, midway] {\texttt{KNOWS}} (e);
\draw[relUDir] (z) to[bend left=12] node[rlabel, midway, xshift=1pt, yshift=6pt] (uedge) {\texttt{KNOWS}} (e);

% WORKS edges
\draw[rel] (a)  to[bend right=12] node[rlabel, midway, yshift=2pt] {\texttt{WORKS}} (x);
\draw[rel] (c) to[bend left=14] node[rlabel, midway, xshift=4pt, yshift=1pt] {\texttt{WORKS}} (x);
\draw[rel] (o) -- node[rlabel, midway, xshift=-4pt, yshift=-1pt] {\texttt{WORKS}} (u);
\draw[rel] (z)  to[bend right=14] node[rlabel, midway, xshift=-1pt, yshift=-4pt] (workedge) {\texttt{WORKS}} (u);

\draw[rel] (o) -- node[rlabel, midway, xshift=3pt, yshift=-1pt] {\texttt{LEADS}} (s);

% FUNDS edges
\draw[rel] (u) to[bend right=20] node[rlabel, midway] {\texttt{FUNDS}} (s);
\draw[rel] (x) to[bend left=20] node[rlabel, midway] (dedge) {\texttt{FUNDS}} (s);

% INVESTS edges
\draw[rel] (e) to[bend right=20] node[rlabel, midway] {\texttt{INVESTS}} (u);
\draw[rel] (e) to[bend left=20] node[rlabel, midway] (invstEdge) {\texttt{INVESTS}} (x);

% Properties
\node[callout, anchor=south] (yprop) at ($(u.north)+(-0.2,0.47)$)
{\texttt{name: "Cypher"}\\ \texttt{location: "Earth"}};
\draw[-{Latex[length=2mm,width=1.2mm]}, thin, draw=black!60] ($(yprop.south)+(0.2,0)$) to[bend left=1] node[midway] (yproparrow) {} ($(u.north)+(0,0.1)$);

\node[callout, anchor=south] (pprop) at ($(s.north)+(0.2,0.47)$)
{\texttt{name: "Atlas"}\\ \texttt{projectID: 451}};
\draw[-{Latex[length=2mm,width=1.2mm]}, thin, draw=black!60] ($(pprop.south)+(-0.2,0)$) -- ($(s.north)+(0,0.1)$);

\node[indicate, above=0.2mm] at (o.north) {1};
\node[indicate, below=7pt] at (dedge.north) {2};
\node[indicate, below left=8pt and -4pt of uedge] at (uedge.north) {3};
\node[indicate, below left=-3pt and 7.5pt of yprop] at (yproparrow.east) {4};
\node[indicate, above=0.2mm] at (invstEdge.north) {5};

%% \node[indicate, above right=5mm and 1cm of workedge] (nodeInfo) at (workedge.east) {1};
%% \node[indicate, below=4mm] (dEdgeInfo) at (nodeInfo.south) {2};
%% \node[indicate, below=4mm] (uEdgeInfo) at (dEdgeInfo.south) {3};
%% \node[indicate, below=4mm] (propInfo) at (uEdgeInfo.south) {4};
%% \node[indicate, below=4mm] (labelInfo) at (propInfo.south) {5};

%% \node[ilabel, anchor=west] at (nodeInfo.east) {Node};
%% \node[ilabel2, anchor=west] at (dEdgeInfo.east) {Directed Edge};
%% \node[ilabel2, anchor=west] at (uEdgeInfo.east) {Undirected Edge};
%% \node[ilabel2, anchor=west] at (propInfo.east) {Property Map};
%% \node[ilabel2, anchor=west] at (labelInfo.east) {Edge Label};

\end{tikzpicture}}%
  \vspace{1pt}%
  \caption{\FigCaptionPropGraphInstance}%
  \vspace{-6mm}%
\end{wrapfigure}%
Figure~\ref{fig:pgm-instance} shows an example concrete property graph
instance. The graph contains eight nodes carrying one of three node
labels: five \texttt{Person} nodes
\(%
\{
  \circledColor{\textbf{a}}{blue!20}{black}{1}{0ex},
  \circledColor{\textbf{c}}{blue!20}{black}{1}{0ex},
  \circledColor{\textbf{e}}{blue!20}{black}{1}{0ex},
  \circledColor{\textbf{o}}{blue!20}{black}{1}{0ex},
  \circledColor{\textbf{z}}{blue!20}{black}{1}{0ex}
\},
\)
two \texttt{Company} nodes
\(%
\{
  \circledColor{\textbf{x}}{orange!20}{black}{1}{0ex},
  \circledColor{\textbf{u}}{orange!20}{black}{1}{0ex}
\}
\),
and one \texttt{Project} node
\(%
\{
  \circledColor{\textbf{s}}{purple!20}{black}{1}{0ex}
\}.
\)
Edges carry one of five labels: \texttt{KNOWS},
\texttt{WORKS}, \texttt{LEADS}, \texttt{FUNDS}, and
\texttt{INVESTS}, forming a multigraph with parallel edges
and cycles.
The graph is mixed and contains directed 
edges (\eg the \texttt{WORKS} edge in
\!\tikzGEdge{\textbf{o}}{blue!20}{\textbf{u}}{orange!20}{WORKS})
as well as undirected 
edges (\eg the \texttt{KNOWS} edge in
\!\tikzGEdgeUDir{\textbf{e}}{blue!20}{\textbf{z}}{blue!20}{KNOWS}).
Edge labels denote the semantic role of the relationship connecting
two nodes, and two nodes may be connected by more than one edge
(multigraphs).
Many nodes also carry property maps, like the \texttt{Project}
node \(\circledColor{\textbf{s}}{purple!20}{black}{1}{0ex}\):
\(\fSet{\texttt{name}\mapsto\texttt{``Atlas''},\;
  \texttt{projectID}\mapsto 451}\).
These properties illustrate the key feature of property graphs: nodes
and edges may carry arbitrary and finite (possibly empty) key--value 
maps whose values range over primitive domains such as integers, 
strings, and booleans; they are used for capturing the
attributes of the data being modeled.

\MyPara{Graph Instance}%
We assume a multi-sorted universe of values $\fUniverse{}$\ given as the
disjoint union of sub-universes of the types
$\fMSortPG \triangleq \fInt \mid \fStr \mid \fBool$;
with $\fInt$, $\fStr$, and $\fBool$ denoting the universes of integers, strings,
and booleans, respectively.
We also assume an infinite universe of named identifiers $\fAttr.$
A \emph{property map}
\(
\,\fPGProp\!: \fAttr\rightharpoonup\fUniverse{}\,
\)
is a finite partial map from elements (\emph{keys}) in $\fAttr$ to those (\emph{values})
in $\fUniverse{}.$

\begin{myDef}[Property Graph Instance]\label{def:data-model}%
  A \emph{property graph} $\fPropGraph$ is a tuple
  \(
  (\fPGNode, \fPGEdge, \fPGDirFunc, 
   \fPGSrcDstFunc, \fPGLabelFunc, \fPGPropFunc)
  \)
  where:
  \begin{enumerate}[nosep, leftmargin=7mm, labelsep=0.4em]
    \item $\fPGNode$ and $\fPGEdge$ are finite sets of \emph{nodes} and \emph{edges},
          respectively.                
    \item \smash{\(\fPGDirFunc\!: \fPGEdge \to \{\!\fPGDir\!, \fPGUDir\!\},\)} gives the
          \emph{directionality} of edges; whether directed \smash{($\!\fPGDir\!$)} or undirected
          \smash{($\!\fPGUDir\!$)}.
    \item \(\fPGEdge=\fPGEdgeDir\sqcup\fPGEdgeUDir,\) the set of edges
          $\fPGEdge$ can be partitioned into sets $\fPGEdgeDir$
          and $\fPGEdgeUDir$ based on directionality.
    \item \(
          \fPGSrcDstFunc\!: \fPGEdgeDir\sqcup\fPGEdgeUDir \to
          (\fTimes{\fPGNode}{\fPGNode}) \sqcup \binom{\fPGNode}{2},
          \)
          maps edges to their endpoint nodes.
          For $\fPGEdgeElem\in\fPGEdgeDir$,
          \(
          \fPGSrcDstFunc(\fPGEdgeElem)\!=\!(\fPGNodeElems{1}, \fPGNodeElems{2})
          \in\fTimes{\fPGNode}{\fPGNode}
          \)
          is an \emph{ordered pair} with $\fPGNodeElems{1}$ and $\fPGNodeElems{2}$
          referred to as its \emph{source} and \emph{target} nodes.
          For $\fPGEdgeElem\in\fPGEdgeUDir$,
          \(
          \,\fPGSrcDstFunc(\fPGEdgeElem)=\fSet{\fPGNodeElems{1}, \fPGNodeElems{2}}
          \in\binom{\fPGNode}{2}
          \)
          is an \emph{unordered pair}.
          %
          %% For directed edges $\fPGEdgeElem\in\fPGEdgeDir,$ the nodes
          %% $\fPGNodeElems{1}$ and $\fPGNodeElems{2}$ are referred
          %% to as the \emph{source} and \emph{target} of $\fPGEdgeElem$,
          %% respectively.
          %
          Self-loops are permitted, \ie $\fPGNodeElems{1}=\fPGNodeElems{2}$.
    \item \(\fPGLabelFunc\!: \fPGNode\!\cup\!\fPGEdge\!\to\!\fPowerSetSub{\fAttr}{\leq m},\)
          $m\!\in\!\mathbb{N}$;
          maps nodes/edges to finite (maybe empty) sets of
          \emph{labels} from $\fAttr$.
    \item \(\fPGPropFunc\!: \fPGNode\cup\fPGEdge \to \fPGProp,\)
          maps nodes/edges to their properties. $\fPGProp$ may be empty,
          \ie $\fDom{\fPGPropFunc(\cdot)}=\varnothing$.
  \end{enumerate}
\end{myDef}

\MyPara{Graph Schemas}%
The \lang standard permits property graphs to be associated with
\emph{graph schemas}~\cite{angles2023pgschema}, which constrain
the admissible labels and properties on nodes and edges (\isoSection{\isoGType}).
%
%% These schemas serve as the primary static mechanism for constraining
%% the structure of property graph instances and form the basis for the
%% typing discipline developed in Section~\ref{sec:type-system}.

\begin{myDef}[Property, Node, Edge, and Graph Schemas]\label{def:prop-graph-schemas}
  Let $\fCyNodeLabels\subset\fAttr$ be a set of labels.
  \begin{enumerate}[nosep, leftmargin=8mm, labelsep=0.4em]
  \item[\enumParen{$\fDbPropSchema$}] A \emph{property schema} (\isoSectionShort{\isoPropertyType})
    \(
    \,\fDbPropSchema\!:\fAttr\rightharpoonup\fMSortPG\,
    \)
    is a finite partial, mapping keys to scalar types.
  \item[\enumParen{$\hspace{0.8pt}\fDbNodeSchemaElem\hspace{0.8pt}$}] A \emph{node schema} (\isoSectionShort{\isoNodeType})
    $\,\fDbNodeSchemaElem\!\triangleq\!(\fCyNodeLabels, \fDbPropSchema)$ 
    is a pair of a set of labels $\fCyNodeLabels$ and a property schema $\fDbPropSchema$.
  \item[\enumParen{$\fDbEdgeSchemaElem\,$}] An \emph{edge schema} (\isoSectionShort{\isoEdgeType})
    \(
    \,\fDbEdgeSchemaElem\triangleq(\fCyNodeLabels, \fDbPropSchema,
    \fDbNodeSchemaElem[1], \fDbNodeSchemaElem[2], \fDbEdgeDir)
    \)
    is a tuple of a set of labels $\fCyNodeLabels$, a property schema $\fDbPropSchema$,
    its endpoint node schemas $\fDbNodeSchemaElem[1]$ and $\fDbNodeSchemaElem[2]$,
    and edge directionality \smash{$\fDbEdgeDir\in\{\!\fPGDir\!,\fPGUDir\!\}$}.
    \begin{itemize}[nosep, leftmargin=4mm, labelsep=0.4em]
    \item For \smash{$\fDbEdgeDir=\fPGDir$}, $\fDbNodeSchemaElem[1]$ and $\fDbNodeSchemaElem[2]$
    correspond to the source and target node schemas, respectively.
    \item For \smash{$\fDbEdgeDir=\fPGUDir$}, $\fDbNodeSchemaElem[1]$ and $\fDbNodeSchemaElem[2]$
    are unordered endpoint schemas.
    \end{itemize}
  \item[\enumParen{$\fDbGraphSchema$}] A \emph{graph schema} (\isoSectionShort{\isoGraphType})
    $\,\fDbGraphSchema \triangleq (\fDbNodeSchema, \fDbEdgeSchema\,)$
     is a pair of finite sets of node $\fDbNodeSchema$ and edge $\fDbEdgeSchema$ schemas.
  \end{enumerate}
\end{myDef}

\begin{myDef}[Conformance Relations]\label{def:conformance-relation}
  Let $\fPropGraph\!=\!(\fPGNode, \fPGEdge, \fPGDirFunc,
  \fPGSrcDstFunc, \fPGLabelFunc, \fPGPropFunc)$ be a property graph,
  $\fDbGraphSchema = (\fDbNodeSchema, \fDbEdgeSchema\,)$ a graph
  schema, and $\fDbPropSchema\!:\fAttr\rightharpoonup\fMSortPG$ a property schema.
  Suppose $\fDbNodeSchemaElem\!\in\!\fDbNodeSchema$ and
  $\fDbEdgeSchemaElem\!\in\!\fDbEdgeSchema$.
  \begin{enumerate}[nosep, leftmargin=1.2cm, labelsep=0.4em]
  \item[\enumParen{$\fConform{\fPGProp}{\fDbPropSchema}$}] A property map $\fPGProp$ \emph{conforms} to $\fDbPropSchema$ when
    \(
      \fDom{\fPGProp}\!=\!\fDom{\fDbPropSchema}\land
      \forall \fCyProp\!\in\!\fDom{\fPGProp}.\fPGProp(\fCyProp)\!\in\!\fUniverse{\fDbPropSchema(\fCyProp)}.
    \)
    %It is denoted as $\fConform{\fPGProp}{\fDbPropSchema}$.  
  \item[\enumParen{\hspace{-0.7pt}$\fConformGraph{\fPGNodeElem}{\fDbNodeSchemaElem}{\fPGNodeSort}{\fPropGraph}$}] A node $\fPGNodeElem\!\in\!\fPGNode$ \emph{conforms} to a node
    schema $\fDbNodeSchemaElem\!=\!(\fCyNodeLabels, \fDbPropSchema)$,
    when
    \(
      \fPGLabelFunc(\fPGNodeElem)\!=\!\fCyNodeLabels\;\text{and}\;
      \fConform{\fPGPropFunc(\fPGNodeElem)}{\fDbPropSchema}.
    \)  
  \item[\enumParen{$\fConformGraph{\fPGEdgeElem}{\fDbEdgeSchemaElem}{\fPGEdgeSort}{\fPropGraph}$}] An edge $\fPGEdgeElem\!\in\!\fPGEdge$ \emph{conforms} to an edge schema
    \(\fDbEdgeSchemaElem\!=\!(\fCyNodeLabels, \fDbPropSchema,\fDbNodeSchemaElem[1], \fDbNodeSchemaElem[2], \fDbEdgeDir)\)
    when $\fPGLabelFunc(\fPGEdgeElem)\!=\!\fCyNodeLabels$, $\fConform{\fPGPropFunc(\fPGEdgeElem)}{\fDbPropSchema}$, and:
    \begin{itemize}[nosep, leftmargin=4mm, labelsep=0.4em]
      \item $\fPGEdgeElem\!\in\!\fPGEdgeDir$ for $\fDbEdgeDir\!=\!\fPGDir$ with
        \(
          \fConformGraph{\fPGNodeElems{1}}
                       {\fDbNodeSchemaElem[1]}{\fPGNodeSort}{\fPropGraph}\;\text{and}\;
          \fConformGraph{\fPGNodeElems{2}}
                       {\fDbNodeSchemaElem[2]}{\fPGNodeSort}{\fPropGraph}
        \)
        where $\fPGSrcDstFunc(\fPGEdgeElem)\!=\!(\fPGNodeElems{1},\fPGNodeElems{2})$;                
        or               
      \item $\fPGEdgeElem\!\in\!\fPGEdgeUDir$ for $\fDbEdgeDir\!=\!\fPGUDir$ with
        \(
          \exists i,j\!\in\!\fSet{1,2}.i\!\neq\!j.
          \fConformGraph{\fPGNodeElems{i}}
                       {\fDbNodeSchemaElem[1]}{\fPGNodeSort}{\fPropGraph}\;\text{and}\;
          \fConformGraph{\fPGNodeElems{j}}
                       {\fDbNodeSchemaElem[2]}{\fPGNodeSort}{\fPropGraph}
        \)
        where $\fPGSrcDstFunc(\fPGEdgeElem)\!=\!\fSet{\fPGNodeElems{1},\fPGNodeElems{2}}.$
    \end{itemize}
  \item[\enumParen{$\fConformGraph{\fPropGraph}{\fDbGraphSchema}{}{}$}] The graph $\fPropGraph$ \emph{conforms} to
    $\fDbGraphSchema$ when
    \(
      \forall\fPGNodeElem\!\in\!\fPGNode.\,\exists\fDbNodeSchemaElem\!\in\!\fDbNodeSchema.
      \,\fConformGraph{\fPGNodeElem}{\fDbNodeSchemaElem}{\fPGNodeSort}{\fPropGraph}
    \)
    and  
    \(
      \forall\fPGEdgeElem\!\in\!\fPGEdge.\,\exists\fDbEdgeSchemaElem\!\in\!\fDbEdgeSchema.
      \,\fConformGraph{\fPGEdgeElem}{\fDbEdgeSchemaElem}{\fPGEdgeSort}{\fPropGraph}.
    \)
  \end{enumerate}
\end{myDef}

\begin{myDef}[Closed and Open Property Graphs]\label{def:closed-open-graphs}
Let $\fDbGraphSchemas$ be a set of graph schemas and $\fPropGraph$ a property graph.
$\fPropGraph$ is defined as \emph{closed} or \emph{open}
under $\fDbGraphSchemas$ based on whether
$\exists\fDbGraphSchema\!\in\!\fDbGraphSchemas.\,\fConformGraph{\fPropGraph}{\fDbGraphSchema}{}{}$
or $\not\exists\fDbGraphSchema\!\in\!\fDbGraphSchemas.\,\fConformGraph{\fPropGraph}{\fDbGraphSchema}{}{}$,
respectively, \ie closed graphs conform to atleast one
schema, while open graphs conform to none.
%
%the graph conforms to at least one schema or none.
\end{myDef}
%% there exists at least one graph schema $\fDbGraphSchema\!\in\!\fDbGraphSchemas$
%% such that $\fConformGraph{\fPropGraph}{\fDbGraphSchema}{}{}$ 
%% %
%% A property graph for which

%% is defined as \emph{open} under $\fDbGraphSchemas$.

The \isoShort standard refers to the node, edge, and graph schemas from 
Definition~\ref{def:prop-graph-schemas} as their respective \emph{types}.
This alias is evident from the conformance relations (Definition~\ref{def:conformance-relation}),
where a graph or a graph element, \ie node or edge, \emph{conforms} to a
schema only when it satisfies the schema's constraints on all of its semantic
attributes, \ie labels, properties, directionality, \etc which can be likened to fields.
Therefore, when a graph conforms to a schema, \ie closed graphs, the schema
can be leveraged to type \lang queries operating on that graph more precisely,
\ie assign narrower types. We design our type system
(\S\ref{sec:type-system}) to exploit the type information inferrable from the schemas.

\subsection{\lang}
\label{sec:prelims-gql}

\begin{figure}[t]%
  \centering%
  \begin{GrammarBox}%
    \begin{minipage}{0.95\linewidth}%
      \input{formalizations/gql-calculus}%
    \end{minipage}%
  \end{GrammarBox}%
  \vspace{-3mm}%
  \caption{\FigCaptionGqlCalculus}%
\end{figure}%

\lang~\cite{francis2023digest,iso2024gql} is the first international
standard for a property-graph query language, standardized in 2024 as
\iso; and aimed at unifying ideas and concepts from other popular graph query
languages such as \cypher~\cite{francis2018cypher},
\pgql~\cite{van2016pgql}, and \gcore~\cite{angles2018gcore}, \etc
into a single and vendor-neutral specification.
Similar to relational query languages like \sql, 
a \lang query specifies \emph{what} to retrieve from
data sources, and optionally transform them before retrieving; while
the query engine determines \emph{how} to evaluate the query. 
Broadly, data sources in query languages can be divided
into constructed sources and stored sources. Property graphs 
serve as stored data sources in graph query languages, while 
\emph{base tables}~\cite{iso2023sql} are an example of stored sources in \sql.
Since base tables are highly structured data sources with rigid
schemas, a fragment supporting just column projections is often
sufficient purely for data retrieval in \sql lineage of languages.
Property graphs on the other hand have far more flexible schemas, and so
graph query languages include a rich graph pattern matching fragment
specifically for data retrieval. The fragment for the optional data transformations
before retrieving, largely overlap between relational and graph query languages.

The interplay between the graph pattern matching fragment for data retrieval
and the remaining largely relational fragment for data transformations, make
\lang's semantics non-trivial. The semantics is further complicated by \lang
allowing querying multiple graphs within a single query; with different ways of
combining their results, via composite queries and operators.
We formalize a \lang fragment that captures the difficult semantic
challenges arising from the interplay.

\MyPara{Formalized \lang Fragment}%
Figure~\ref{fig:gql-calculus} shows the \lang fragment we formalize.
A \lang query $\fCyQueryExp$ is either a focused linear query
$\fCyQuery$, \ie a query operating on a single graph instance, without
any composite query operators; or a composite query
$\fCyQueryExp\;\fCyCompOp\;\fCyQuery$ that combines the results of
two or more sub-queries using composite query operators
$\,\fCyCompOp\in\{\fOtherwise$, $\fUnion$, $\fExceptJ$, $\fIntersectJ\}$.

A focused linear query $\fCyQuery$ first declares the graph being queried
using the $\fUse\;\fCyVar$ clause, where $\fCyVar\in\fAttr$
refers to the name assigned to the graph instance in the database's catalog.
We use $\fDbGNames$ specifically to refer to graph names, \ie
$\fUse\;\fDbGNames$ selects the graph instance named $\fDbGNames$
in the catalog as the graph being queried (also called the working graph).
Declaration of the working graph is followed by a $\fMatch$
$\fCyPatternExp$ clause which is the heart of \lang's graph pattern
matching fragment. It specifies a pattern list
$\fCyPatternExp$.
Next, an optional $\fWhere$ $\fCyPred$ clause for filtering the
matched bindings using the predicate expression $\fCyPred$.
Finally, a $\fReturn$ $\fCyProjection$ clause for returning the
query result after projecting the filtered matched bindings
via the list of projections $\fCyProjection$.

A pattern list $\fCyPatternExp$ is built from path patterns
$\fCyPattern$ composed via
conjunction $\fCyPatternAnd{\fCyPatternExp}{\fCyPattern}$: a
comma-separated list requiring \emph{all} patterns to match. 
% or
% disjunction~($\fCyPatternOr{\fCyPatternExp}{\fCyPattern}$, a
% pipe-separated alternative).
%
A path pattern $\fCyPattern$ is a sequence of node atoms
$\fCyPatternNode$ connected by---only directed
($\fGqEdge[->e]{\cdot}$, $\fGqEdge[<-e]{\cdot}$,
$\fGqEdge[<->e]{\cdot}$), right directed 
or undirected ($\fGqEdge[;>e]{\cdot}$),
left directed 
or undirected ($\fGqEdge[<;e]{\cdot}$),
only undirected ($\fGqEdge[;e]{\cdot}$), and unconstrained
($\fGqEdge[-e]{\cdot}$)---edge atoms $\fCyPatternEdge$.
Edges and patterns can be quantified by quantifiers $\fCyQuantifier$
whose semantics parallels that of regular-expression repetition.
The quantifier $\fCyQuantifier$ includes: Kleene star
($\fCyQuantStar$), Kleene plus ($\fCyQuantPlus$), optionality
($\fCyQuantQues$), exact repetition ($\fCyQuantExact{i}$), and bounded
repetition ($\fCyQuantBound{i}{j}$), where
$i,j\in\mathbb{N}$ and $i\leq j$.
Node atoms $\fCyPatternNode$ may declare binding variables
$\fCyVar$, specify a label expression $\fCyLabelExp$, and impose property
constraints via a property map $\fCyPropMap$. Same holds for the edge atoms
$\fCyPatternEdge$.
Label expressions $\fCyLabelExp$ are formed from label names ($\fCyLabel\in\fAttr$) and the
wildcard ($\fCyLWild$) under: conjunction ($\fCyLAnd{}{}$), disjunction
($\fCyLOr{}{}$), and negation ( $\fCyLNeg{\!}$ ). This allows for fine-grained control 
over which graph elements a pattern atom matches.

Value expressions $\fCyExp$ include constants, variables, property
accesses $\fCyVarProp{\fCyVar}{\fCyVar}$, arithmetic, and aggregate
function applications $\fCyAggExp$.
Predicates $\fCyPred$ are formed from Boolean constants $\fTrue$/$\fFalse$, relational
operations $\fCyExpRel{\fCyExp}{\fCyExp}$, the null test
$\fCyExpIsNull{\fCyExp}$, and the logical connectives
($\fCyExpAnd{\!}{\!}$, $\fCyExpOr{\!}{\!}$, $\fCyExpNeg{\!}$).
Aggregate functions $\fCyAExpCount$, $\fCyAExpSum$, $\fCyAExpMax$, and
$\fCyAExpMin$ may be applied with a $\fDstnct$ or $\fAll$ (default)
qualifier, which decides if the aggregating operation should be 
performed under set or bag semantics.

\begin{figure}[t]%
  \centering%
  \begin{minipage}{0.52\linewidth}%
    \begin{minipage}{1.0\linewidth}%
\hspace*{-1cm}
\begin{CypherBox}        
(*@$\fUse\;\fDbGNames\;\fMatch\;\;$@*)(*@{$\textcolor{\ExamplePatMatchNodeColor}{\fCyElemNodePattern{\fCyNode{m}}}\textcolor{\ExamplePatKnowsEdgeColor}{\fCyEdgeRight{\fCyElemEdgePattern{\text{\texttt{\!:KNOWS}}}}\!\fCyQuantBound{1}{2}}\,\textcolor{\ExamplePatPersonNodeColor}{\fCyElemNodePattern{\phantom{i}}}\textcolor{\ExamplePatLeadsEdgeColor}{\fCyEdgeRight{\fCyElemEdgePattern{\text{\texttt{\!:LEADS}}}}}\,\textcolor{\ExamplePatProjectNodeColor}{\fCyElemNodePattern{\phantom{i}}}$}@*)
\end{CypherBox}%
    \end{minipage}%
    \vspace{3mm}%
    \\
    \begin{minipage}{1.0\linewidth}%
    \resizebox{0.9\linewidth}{!}{\begin{tikzpicture}[
  font=\footnotesize,
  >=Latex,
  node distance=15mm and 20mm,
  vertex/.style={
    circle,
    draw,
    thick,
    white,
    minimum size=6mm,
    inner sep=0pt,
    draw opacity=0,
    fill opacity=0.5,
    text opacity=0.5,
    align=center
  },
  vertexMatch/.style={
    circle,
    draw,
    very thick,
    white,
    minimum size=6mm,
    inner sep=0pt,
    align=center,
  },
  numberMatch/.style={
    circle,
    draw=black,
    fill=white,
    text=black,
    thick,
    font=\bfseries\normalsize,
    minimum size=3.6mm,
    inner sep=0pt,
    align=center
  },
  person/.style={vertex, fill=blue!20},
  personMatch/.style={vertexMatch, fill=blue!20, draw=\ExamplePatMatchNodeColor},
  personMatchInt/.style={vertexMatch, fill=blue!20, draw=\ExamplePatPersonNodeColor},
  personMatchIntAnon/.style={vertexMatch, fill=blue!20},
  pattern/.style={numberMatch},
  company/.style={vertex, fill=orange!20},
  proj/.style={vertexMatch, fill=purple!20, draw=purple!90!black},
  projMatch/.style={vertexMatch, fill=purple!20, draw=\ExamplePatProjectNodeColor},
  vname/.style={font=\bfseries, text opacity=0.5},
  vnameMatch/.style={font=\bfseries},
  vlabel/.style={font=\scriptsize, yshift=-5mm, text opacity=0},
  callout/.style={
    draw=black!80,
    rounded corners=1pt,
    %thin,
    fill=black!6,
    font=\scriptsize,
    inner sep=2pt,
    align=left
  },
  rel/.style={
    -{Latex[length=2mm,width=1.6mm]},
    thin,
    draw=black!30
  },
  relMatch/.style={
    -{Latex[length=2mm,width=1.8mm]},
    very thick,
    draw=purple!90!black
  },
  rlabel/.style={
    fill=white,
    inner sep=1.2pt,
    font=\scriptsize,
    text opacity=0.35
  },
  relMatchKnows/.style={
    -{Latex[length=2mm,width=1.8mm]},
    very thick,
    draw=\ExamplePatKnowsEdgeColor
  },
  relMatchLeads/.style={
    -{Latex[length=2mm,width=1.8mm]},
    very thick,
    draw=\ExamplePatLeadsEdgeColor
  },
  rlabelMatch/.style={
    fill=white,
    inner sep=1.2pt,
    text=purple!90!black,
    font=\scriptsize
  },
  rlabelMatchKnows/.style={
    fill=white,
    inner sep=1.2pt,
    text=\ExamplePatKnowsEdgeColor,
    font=\bfseries\footnotesize
  },
  rlabelMatchLeads/.style={
    fill=white,
    inner sep=1.2pt,
    text=\ExamplePatLeadsEdgeColor,
    font=\bfseries\footnotesize
  },
  frame/.style={draw, rounded corners=3pt, thin, inner sep=7pt}
]

% Persons
\node[personMatch] (a) at (0.8,0) {};
\node[vnameMatch] at (a) {a};
\node[vlabel] at (a) {};

\node[pattern, right=3.5mm] at (a.east) {1};

\node[personMatchInt] (b) at (2.3,1.4) {};
\node[vnameMatch] at (b) {o};
\node[vlabel] at (b) {};

\node[person] (c) at (2.3,-1.4) {};
\node[vname] at (c) {c};
\node[vlabel] at (c) {};

% Project
\node[projMatch] (p) at (0.3,2.0) {};
\node[vnameMatch] at (p) {s};
\node[vlabel] at (p) {};

% KNOWS edges
\draw[relMatchKnows] (a) to[bend left=12] node[rlabelMatchKnows, midway] {\texttt{KNOWS}} (b);
\draw[rel] (a) to[bend right=12] node[rlabel, midway] {\texttt{KNOWS}} (c);
\draw[rel] (c) -- node[rlabel, midway] {\texttt{KNOWS}} (b);

\draw[relMatchLeads] (b) -- node[rlabelMatchLeads, midway, xshift=3pt, yshift=-1pt] {\texttt{LEADS}} (p);

\node[person] (a1) at (7.8,0) {};
\node[vname] at (a1) {a};
\node[vlabel] at (a1) {};

\node[personMatchInt] (b1) at (9.3,1.4) {};
\node[vnameMatch] at (b1) {o};
\node[vlabel] at (b1) {};

\node[pattern, right=3.5mm] at (a1.east) {3};

\node[personMatch] (c1) at (9.3,-1.4) {};
\node[vnameMatch] at (c1) {c};
\node[vlabel] at (c1) {};

% Project
\node[projMatch] (p1) at (7.3,2.0) {};
\node[vnameMatch] at (p1) {s};
\node[vlabel] at (p1) {};

% KNOWS edges
\draw[rel] (a1) to[bend left=12] node[rlabel, midway] {\texttt{KNOWS}} (b1);
\draw[rel] (a1) to[bend right=12] node[rlabel, midway] {\texttt{KNOWS}} (c1);
\draw[relMatchKnows] (c1) -- node[rlabelMatchKnows, midway] {\texttt{KNOWS}} (b1);

\draw[relMatchLeads] (b1) -- node[rlabelMatchLeads, midway, xshift=3pt, yshift=-1pt] {\texttt{LEADS}} (p1);

\node[personMatch] (a2) at (4.3,0) {};
\node[vnameMatch] at (a2) {a};
\node[vlabel] at (a2) {};

\node[pattern, right=3.5mm] at (a2.east) {2};

\node[personMatchInt] (b2) at (5.8,1.4) {};
\node[vnameMatch] at (b2) {o};
\node[vlabel] at (b2) {};

\node[personMatchIntAnon] (c2) at (5.8,-1.4) {};
\node[vnameMatch] at (c2) {c};
\node[vlabel] at (c2) {};

% Project
\node[projMatch] (p2) at (3.8,2) {};
\node[vnameMatch] at (p2) {s};
\node[vlabel] at (p2) {};

% KNOWS edges
\draw[rel] (a2) to[bend left=12] node[rlabel, midway] {\texttt{KNOWS}} (b2);
\draw[relMatchKnows] (a2) to[bend right=12] node[rlabelMatchKnows, midway] {\texttt{KNOWS}} (c2);
\draw[relMatchKnows] (c2) -- node[rlabelMatchKnows, midway] {\texttt{KNOWS}} (b2);

\draw[relMatchLeads] (b2) -- node[rlabelMatchLeads, midway, xshift=3pt, yshift=-1pt] {\texttt{LEADS}} (p2);

% Node types
\node[vlabel] (spacing) at (3,-2.6) {};
\node[personMatch] (f) at (3,-3.5) {};
\node[vnameMatch] at (f) {a};
\node[personMatch] (z) at (5,-3.5) {};
\node[vnameMatch] at (z) {a};
\node[personMatch] (q) at (7,-3.5) {};
\node[vnameMatch] at (q) {c};
\node[
  draw,
  thick,
  rectangle,
  rounded corners=1pt,
  fit=(spacing)(f)(z)(q),
  inner sep=6pt,
] (infoBox) {};

\node[
  draw,
  thick,
  rounded corners=0.5pt,
  fill=white,
  inner sep=3pt,
] at (infoBox.north) {\normalsize $\fReturn$ \textcolor{\ExamplePatMatchNodeColor}{\fCyNode{m}}; \textbf{(Bag Semantics)}};

\end{tikzpicture}}%
    \end{minipage}%
    %\hspace*{-6mm}%
  \end{minipage}%
  \caption{\FigCaptionGqlExample}
\end{figure}%

\MyPara{Execution Constructs}%
A \lang query executes within a \emph{database world}
$\fDbWorldS = (\fDbCatalog, \fDbSchema)$.
The \emph{graph catalog}
$\fDbCatalog\!:\fAttr\rightharpoonup\fPropGraph$ is a fixed
collection that resolves graph names ($\fDbGNames\in\fAttr$) to property graph instances
(\isoSection{4.2.5}).
The \emph{schema mapper}
$\fDbSchema\!:\fAttr\rightharpoonup\fDbGraphSchema$ associates each
graph name with a graph schema (if defined). Schema information is
available only for closed graphs.
The \emph{working graph site} $\fDbGNames$ designates the graph
currently in scope and is selected from the catalog via the $\fUse$ clause. 
All pattern matching in the query after graph selection operates against that graph 
(\isoSection{4.7.5}).

We formally define \lang's execution constructs when formalizing 
the semantics (\S\ref{sec:semantics-operational}). Here, we
introduce them informally for contextual reasons.
A \emph{record} is a finite set of fields, each pairing a variable
name with a runtime value (\S\ref{sec:semantics-runtime-dom})---a scalar, a graph-element identity, or the
null value (\isoSection{4.15.4}).
A \emph{binding table} is a bag of records having compatible schema (Definition~\ref{def:record-schema}).
It serves as the primary execution construct in query evaluation pipeline,
by holding intermediary results such as the matches found by
graph pattern matching, and constituting the execution result returned
after query evaluation (\isoSection{4.3.6}).
Every clause in a query transforms a binding table:
$\fMatch$ produces one, $\fWhere$ filters it, and $\fReturn$ projects
it into the query's result.

Evaluating $\fMatch\;\fCyPatternExp$ against the working graph
enumerates all graph homomorphisms
$h\!:\fCyPatternExp\!\to\!\fPropGraph$ that satisfy the structural,
label, and property constraints of $\fCyPatternExp$.
Each such homomorphism yields a record that binds the pattern variables
to the matched graph elements, and the collection of all such records forms
the binding table produced by the $\fMatch$ clause.
Pattern conjunction combines binding tables via a join on shared
variables.

The $\fWhere$ clause retains only those records for which the
predicate $\fCyPred$ evaluates to true, and the $\fReturn$ clause
maps each surviving record to its projected expressions, producing the
output binding table of the query.
Composite queries compose queries by combining their binding tables
under different composite query operators: $\fUnion$---union,
$\fIntersectJ$---intersection, $\fExceptJ$---difference,
and $\fOtherwise$ returns the left operand's table when non-empty otherwise the
right's.

\begin{myExample}[\lang Graph Pattern Matching]\label{ex:prelims-gql}%
Consider the example \lang query shown in Figure~\ref{fig:gql-example}:

{  
  \ExampleFont%
  \setlength{\abovedisplayskip}{-3mm}%
  \setlength{\belowdisplayskip}{0mm}%
  \begin{align*}
    \fUse\;\fDbGNames\;
    \fMatch\;
    \fGqNode{m}
    \fGqEdgeQuant[->][l][KNOWS][][bound][1][2]{}
    \fGqNode{\phantom{i}}
    \fGqEdge[->][l][LEADS]{}
    \fGqNode{\phantom{i}}
    \;\;\fReturn\; \fCyNode{m};
  \end{align*}
}  

%In our core calculus from Figure~\ref{fig:gql-calculus}, 
\noindent%
This query has the shape
\(
  \fMatch\; \fCyPattern \;\fReturn\; \fCyProjection,
\)
from the core calculus in Figure~\ref{fig:gql-calculus}.
The projection item is $\fCyProjection=\fCyNode{m}$ and a single
path pattern
\(
  \fCyPattern=
  \fGqNode{m}
  \fGqEdgeQuant[->][l][KNOWS][][bound][1][2]{}
  \fGqNode{\phantom{i}}
  \fGqEdge[->][l][LEADS]{}
  \fGqNode{\phantom{i}}.
\)
The pattern $\fCyPattern$ requires a node $\fCyNode{m}$ from which
there exists a path of one or two \texttt{KNOWS} edges to some
intermediate node, followed by a single outgoing \texttt{LEADS} edge
to a final node. Evaluation against the property graph instance
$\fPropGraph$ in Figure~\ref{fig:pgm-instance} proceeds by
enumerating all graph homomorphisms
$h\!:\fCyPattern\!\to\!\fPropGraph$ satisfying the path constraint,
yielding three records in the binding table:
\begin{enumerate}[nosep, leftmargin=7mm]
  \item[\circledWhite{\textbf{1}}]
    \(\fCyNode{m}=\circledColor{\textbf{a}}{blue!20}{black}{1}{0ex}\), via the path
    \tikzGTwoEdge{\textbf{a}}{blue!20}{\textbf{o}}{blue!20}{\textbf{s}}{purple!20}{KNOWS}{LEADS}.
  \item[\circledWhite{\textbf{2}}]
    \(\fCyNode{m}=\circledColor{\textbf{a}}{blue!20}{black}{1}{0ex}\), via the
    longer path
    \tikzGThreeEdge{\textbf{a}}{blue!20}{\textbf{c}}{blue!20}{\textbf{o}}{blue!20}{\textbf{s}}{purple!20} from quantifier $\{1,2\}$.
  \item[\circledWhite{\textbf{3}}]
    \(\fCyNode{m}=\circledColor{\textbf{c}}{blue!20}{black}{1}{0ex}\), via the path
    \tikzGTwoEdge{\textbf{c}}{blue!20}{\textbf{o}}{blue!20}{\textbf{s}}{purple!20}{KNOWS}{LEADS}.
\end{enumerate}

\noindent%
Since no $\fWhere$ clause is present, all three records survive
filtering. The $\fReturn\;\fCyNode{m}$ clause then projects each
record onto the variable $\fCyNode{m}$, and under bag semantics the
result is the bag
\(
\fBag{%
  \circledColor{\textbf{a}}{blue!20}{black}{1}{0ex},
  \circledColor{\textbf{a}}{blue!20}{black}{1}{0ex},
  \circledColor{\textbf{c}}{blue!20}{black}{1}{0ex}
}%
\)
containing one entry per homomorphism, including the duplicate binding
of~$\fCyNode{m}$ to~$\circledColor{\textbf{a}}{blue!20}{black}{1}{0ex}$.
\qed\end{myExample}

\begin{figure}[t]%
  \centering%
  %\begin{GrammarBox}%
    \begin{minipage}{0.95\linewidth}%
      \input{formalizations/gql-metafunctions}%
    \end{minipage}%
  %\end{GrammarBox}%
  \caption{\FigCaptionGqlMetafunctions}%
\end{figure}%

\MyPara{Metafunctions for Parsing Patterns}%
Our formalization uses several metafunctions for parsing \lang query patterns
as shown in Figure~\ref{fig:gql-metafunctions}.
$\fVar{\cdot}$, $\fLbl{\cdot}$, and $\fProp{\cdot}$
return the binding variable, label expression, and property map of a
pattern atom (node or edge).
We interpret the property map $\fPGProp$ returned by $\fProp{\cdot}$ 
as a partial, mapping property names to values.
$\fQLo{\cdot}$ returns the lower bound of a quantifier.
$\fTail{\cdot}$ returns the
binding variable of a pattern's rightmost node atom.
$\fMinE{\cdot}$ returns the minimum number of edge atoms in a pattern, while 
$\fMinN{\cdot}$ returns the minimum number of node atoms in a pattern. Both
account for quantifiers.
Finally, $\fDir{\cdot}$ maps an edge atom based on its direction,
to a set of edge \emph{orientations}: left directed
($\fOrientL$), undirected ($\fOrientU$), and right directed
($\fOrientR$); while $\fDirDenote{\cdot}$ maps orientations to directionalities: directed
(\smash{$\fPGDir$}) or undirected (\smash{$\fPGUDir$}).

\section{Type System}
\label{sec:type-system}

The graph schema support intrinsic to \lang's standard allows
for more precise typing. 
We design our type system to exploit the availability of graph schemas
to propagate their graph-structural invariants through query patterns into
query results.
Broadly, our design includes four features of the \lang specification,
some of which are absent from simpler formalisms~\cite{Francis2023gpc,deutsch2022graph}:
(1)~\emph{graph scoping}---elements of different graphs cannot be conflated with each other;
(2)~\emph{schema refinement}---closed graphs carry schemas that constrain semantic
attributes of their elements allowing for more precise typing;
(3)~\emph{nullability}---missing properties and failed matches produce
null values which obey three-valued logic (\threeVL);
(4)~and \emph{heterogeneous unions}---composite queries and property accesses
produce bindings whose runtime type varies across records within the same binding table.

{
\setlength{\abovedisplayskip}{-2mm}%
\setlength{\belowdisplayskip}{1mm}%
\begin{align*}
  \small
  \begin{array}{@{}c@{\qquad\quad\;}c@{}@{\quad\qquad\;}c@{}}
    \begin{array}{@{}r@{\;\;}c@{\;\;}l@{\;\;}l@{}}
      \fMSortPG
      & \fTermDef
      & \fInt
      & \text{\emph{integer}}
      \\[1pt]
      & \mid
      & \fBool
      & \text{\emph{boolean}}
      \\[1pt]
      & \mid
      & \fStr
      & \text{\emph{string}}
      \\[1pt]
      \\[1pt]
      \\[1pt]
    \end{array}
    &
    \begin{array}{@{}r@{\;\;}c@{\;\;}l@{\;\;}l@{}}
      \fMSortGQL
      & \fTermDef
      & \fMSortPG
      & \text{\emph{scalars}}
      \\[1pt]
      & \mid
      & \fDbNodeType[-]
      & \multirow{2}{*}{\text{\emph{node}}}
      \\[1pt]
      & \mid
      & \fDbNodeType[-,-] 
      & 
      \\[1pt]
      & \mid
      & \fDbEdgeType[-] 
      & \multirow{2}{*}{\text{\emph{edge}}}
      \\[1pt]
      & \mid
      & \fDbEdgeType[-,-]
      &
    \end{array}
    &
    \begin{array}{@{}r@{\;\;}c@{\;\;}l@{\quad}l@{}}
      \fMSortGQLN
      & \fTermDef
      & \fMSortGQL
      & \text{\emph{value types}}
      \\[1pt]
      & \mid
      & \fAny
      & \text{\emph{top type}}
      \\[1pt]
      & \mid
      & \fEmpty
      & \text{\emph{bottom type}}
      \\[1pt]
      & \mid
      & \fNullable
      & \text{\emph{null type}}
      \\[1pt]
      & \mid
      & \fMSortGQLN \cup \fMSortGQLN \mid \fListT\;\fMSortGQLN \mid \fEmptyT{\fMSortGQLN}
      & \text{\emph{type formers}}
    \end{array}
  \end{array}
\end{align*}
}

\noindent%
Our type system is organized into three tiers.
$\fMSortPG$ gives \emph{scalar types} (property-value primitives).
$\fMSortGQL$ extends scalars with \emph{graph-scoped element
types}---node and edge identity types indexed by a graph site and
optionally refined by a schema entry.
$\fMSortGQLN$ gives the full \emph{value type} language: element types
$\fMSortGQL$, the top type $\fAny$, the bottom type $\fEmpty$, the null type
$\fNullable$, and is closed under the types formed from the union,
the list, and the empty~$\fEmptyT{(\cdot)}$ formers. Unless otherwise stated, we use
$\fSortVar$ to range over $\fMSortGQLN$.

We now motivate our types and type formers, starting with
graph-indexed and schema-refined types
(\S\ref{subsec:typing-refinement-type}), followed by the union former
(\S\ref{subsec:typing-dut}), then the list former
(\S\ref{subsec:typing-list-types}), and finally, the null type and the empty former
(\S\ref{subsec:typing-nullability}). We conclude with the subtyping
rules for our type system (\S\ref{subsec:typing-subtyping}).

%% dynamic refinement
%% (Appendix~\ref{subsec:typing-coercion}), and typing environments
%% (\S\ref{subsec:typing-auxiliary}).

\subsection{Graph-Indexed and Schema-Refined Types}
\label{subsec:typing-refinement-type}

%\MyPara{Graph Scoping}
\lang supports composite queries where the individual queries may
query different property graphs. However, graph elements are
inherently \emph{graph-local}, \ie nodes and edges of one property
graph cannot be conflated with another.
We prevent cross-graph element conflation through \emph{graph-indexed
types}: for a graph site $\fDbGNames$, the types $\fDbNodeType$ and
$\fDbEdgeType$ classify node and edge identities scoped
to~$\fDbGNames$.

%\MyPara{Schema Refinement}
For graph sites referring to closed graphs
% , it is \emph{closed} under a graph schema
% $\fDbGraphSchema = (\fDbNodeSchema, \fDbEdgeSchema)$
(Definition~\ref{def:closed-open-graphs}), the type system refines
graph-indexed types with schema entries.
The schema-refined types
$\fDbNodeType[\fDbGNames,\fDbNodeSchemaElem]$ and
$\fDbEdgeType[\fDbGNames,\fDbEdgeSchemaElem]$ classify elements whose
labels, properties, \etc conform (Definition~\ref{def:conformance-relation}) to their respective schema entries.
Refined types carry strictly more
static information, enabling more precise type assignments and propagations.

% and the subtyping relation
% (\S\ref{subsec:typing-subtyping}) makes every schema-refined type a
% subtype of its graph-indexed counterpart.

\subsection{\lang's Dynamic Union Types: Union Type Former}
\label{subsec:typing-dut}

\begin{myExample}[Heterogeneous Unions]\label{ex:heterogenous-union}
  Consider the composite \lang query,

  {
  \ExampleFont%
  \setlength{\abovedisplayskip}{-2mm}%
  \setlength{\belowdisplayskip}{4mm}%
  \begin{align*}
  \underboxcap{
  \fUse\;\fDbGNames\;
  \fMatch\;%
  \fCyElemNodePattern{\fCyNode{$\fCyNodeVars{1}$}\textbf{:}\texttt{COMPANY}}\;%
  \;\fReturn\; \fCyNode{$\fCyNodeVars{1}$}\; \fAs\; \fCyNode{x}}{{\scriptsize \text{Returns a binding table with column \texttt{\textbf{x}} typed as $\fDbNodeType[{\fDbGNames,\fDbNodeSchemaElem[1]}]$}}}\;
  \;\fUnion\;
  \;\underboxcap{
  \fUse\;\fDbGNames\;
  \fMatch\;
  \fCyElemNodePattern{\fCyNode{$\fCyNodeVars{2}$}\textbf{:}\texttt{PROJECT}}\;
  \;\fReturn\; \fCyNode{$\fCyNodeVars{2}$}\; \fAs\; \fCyNode{x}}{{\scriptsize \text{Returns a binding table with column \texttt{\textbf{x}} typed as $\fDbNodeType[{\fDbGNames,\fDbNodeSchemaElem[2]}]$}}}
  \end{align*}
  }

  \noindent%
  evaluated against the property graph (assuming closed) from
  Figure~\ref{fig:pgm-instance}.
  Suppose the two individual focused linear queries assign different
  types ($\fDbNodeType[{\fDbGNames,\fDbNodeSchemaElem[1]}]$ and
  $\fDbNodeType[{\fDbGNames,\fDbNodeSchemaElem[2]}]$) to the
  column~$\fCyNode{x}$ in their respective binding tables; the
  combined binding table must classify $\fCyNode{x}$ as inhabiting
  \emph{either} type.
\qed\end{myExample}

\noindent%
Example~\ref{ex:heterogenous-union} illustrates the need for
supporting heterogeneous type unions due to \lang's composite queries.
Similar situations may arise from property accesses. We model
heterogeneous unions by closing our type system under the union former.
For any two types $\fSortVars{1}$ and $\fSortVars{2}$, the dynamic
union $\fSortVars{1}\cup\fSortVars{2}$ is a static type that
admits values whose type inhabits either branch, \ie the union
former does not introduce a new kind of runtime value, every
inhabiting value belongs to either $\fSortVars{1}$ or $\fSortVars{2}$.

%\ie every inhabiting value belongs to one of the constituent
%base types.

\begin{myNotation}[Schema-Refined Union Types]\label{not:schema-refined}
  Consider a graph site $\fDbGNames$ with graph schema
  $(\fDbNodeSchema,\fDbEdgeSchema)$, and some \emph{non-empty} set of
  node \smash{$\fDbNodeSchema'\!\subseteq\fDbNodeSchema$} and edge
  \smash{$\fDbEdgeSchema'\!\subseteq\fDbEdgeSchema$} schemas. We denote the
  node and the edge schema refined types formed from these sets of
  node and edge schemas, closed under the union type former, using the
  syntactic sugars:
  \smash{\(
    \fDbNodeType[\fDbGNames,\fDbNodeSchema']\triangleq
    \bigcup_{\fDbNodeSchemaElem\in\fDbNodeSchema'}\fDbNodeType[\fDbGNames,\fDbNodeSchemaElem]
  \)}
  and
  \smash{\(
    \fDbEdgeType[\fDbGNames,\fDbEdgeSchema']\triangleq
    \bigcup_{\fDbEdgeSchemaElem\in\fDbEdgeSchema'}\fDbEdgeType[\fDbGNames,\fDbEdgeSchemaElem].
  \)}
\qed\end{myNotation}

\subsection{List Types and Reference Modes}
\label{subsec:typing-list-types}

Binding variables declared inside quantified path patterns in \lang can contribute
multiple bindings to a single match (\isoSection{\isoRefContext}).
Consider Example~\ref{ex:prelims-gql} with the quantified
\texttt{KNOWS} segment annotated with edge variable~$\fCyNode{x}$:
\(
\,\fCyElemNodePattern{\fCyNode{m}}\fCyEdgeRight{\fCyElemEdgePattern{\fCyNode{x}\text{\texttt{:KNOWS}}}}\!\fCyQuantBound{1}{2}\fCyElemNodePattern{\phantom{i}}\fCyEdgeRight{\fCyElemEdgePattern{\text{\texttt{\!:LEADS}}}}\!\fCyElemNodePattern{\phantom{i}}\,.
\)
The variable $\fCyNode{m}$ (outside the quantifier) remains
\emph{singleton}, \ie each match binds it to exactly one node.
The variable $\fCyNode{x}$ (inside the quantifier $\fCyQuantBound{1}{2}$)
is \emph{group-reference}, \ie contributes multiple bindings per match:
a one-hop match yields [\!\!\!\tikzGEdgeOpa{KNOWS}\!\!], a two-hop
match yields [\!\!\!\tikzGEdgeOpa{KNOWS}\!,\!\tikzGEdgeOpa{KNOWS}\!\!].
The binding exposed outside the quantifier is not a single edge but
the finite list of all edges matched by that segment.

We model this statically via $\fListT\,\fSortVar$: if a binding variable has type
$\fSortVar$ inside the quantifier, crossing the quantifier boundary
lifts it to $\fListT\,\fSortVar$.
List types thus serve as the static representation of group-reference
bindings. In the example above, $\fCyNode{x}$ is assigned
$\fListT\;\fDbEdgeType$ outside the quantified segment.

\begin{myDef}[Component Type]%
\label{def:base-type}
The \emph{component type} $\fBaseType{\cdot}$ strips the outermost
$\fListT$ wrapper, \ie $\fBaseType{\fListT\;\fSortVar} = \fSortVar$. 
The component type of a non-list type is the type itself.
% {
%   \small
%   \setlength{\abovedisplayskip}{4pt}%
%   \setlength{\belowdisplayskip}{4pt}%
%   \[
%   \fBaseType{\fSortVar}\;\triangleq\;
%   \begin{cases}
%     \fSortVar' & \text{if } \fSortVar = \fListT\;\fSortVar'\\
%     \fSortVar  & \text{else}
%   \end{cases}
%   \]
% }
\end{myDef}

\subsection{\lang's Immaterial Types: Null and Empty}
\label{subsec:typing-nullability}

%\MyPara{Null Type}
Null values are pervasive in \lang, as in \sql~\cite{guagliardo2017formal};
arising from missing properties, optional or failed matches, and undefined
intermediate results, etc.
The null type $\fNullable$ is inhabited solely by the value $\fNull$.
We use the syntactic sugar $\fSortVar$ $\!\fNullable$ to denote the
static type $\fSortVar$ $\cup$ $\fNullable$ formed using the union
former.

\lang's standard specifies an empty type $\fEmpty$ which no value
inhabits, and motivates it as an opportunity to embed richer
statically inferrable information such as knowledge that a binding
does not produce any runtime values (\eg failed matches). The empty
type is the bottom type.

We retain \lang's empty type $\fEmpty$, and use it as the bottom type
of our type system, but in addition, we also introduce the empty
type former $\fEmptyT{\fSortVar}$ to denote a binding
that is statically known to produce no runtime value while retaining the
underlying type $\fSortVar$ as an annotation (\lang's bare empty type
$\fEmpty$ erases that information). We motivate this design choice
using the Example~\ref{ex:empty-type-former} given below.

\begin{myExample}[Empty Former]\label{ex:empty-type-former}
Consider the query with patterns statically known to fail matching.

{
  \ExampleFont
  \setlength{\abovedisplayskip}{-3mm}%
  \setlength{\belowdisplayskip}{0mm}%
  \begin{align*}
    \begin{array}{@{}l@{\hspace{12mm}}l@{}}
      \fUse\;\fDbGNames\;
      \fMatch\;
      {
      \fGqNode[vl][\texttt{COMPANY}]{x}\fCyComma\hspace{1mm}
      \fGqNode[v]{x}
      \fGqEdge[->][vl][\texttt{KNOWS}]{}
      \fGqNode[v]{\phantom{i}}
      }\;\;
      \fReturn\;\fCyNode{x}
      &\emph{Type Check Pass}\;\textcolor{DarkGreen}{\bm{\checkmark}}
    \end{array}
  \end{align*}
}

\noindent%
Typing $\fCyNode{x}$ as $\fDbNodeType$ preserves the base type $\fDbNodeType$ but 
loses the emptiness guarantee, while typing it as $\fEmpty$ captures emptiness but
erases the base type $\fDbNodeType$.
To understand why erasing the base type is problematic, consider 
a small change to the query where $\fCyNode{x}$ binds to an edge in
the right pattern:

{
\ExampleFont%
\setlength{\abovedisplayskip}{-3mm}%
\setlength{\belowdisplayskip}{-1pt}%
\newcommand{\mDel}{ \mathop{ \tcboxmath[ colback=red!15,
      colframe=black, boxrule=0.4pt, arc=0.5pt,
      left=-1pt,right=-1pt,top=-1pt,bottom=-1pt]{\textbf{x}}}\limits_{---}
}%
\newcommand{\mAdd}{ \mathop{ \tcboxmath[ colback=green!15,
      colframe=black, boxrule=0.4pt, arc=0.5pt,
      left=-1pt,right=-1pt,top=-1pt,bottom=-1pt]{\textbf{x}}}\limits_{+++}
}%
\begin{align*}%
  \begin{array}{@{\hspace{-5pt}}ll@{}}%
    \fUse\;\fDbGNames\;
    \fMatch\;%
    {
      \fGqNode[vl][\texttt{COMPANY}]{x}\fCyComma\hspace{1mm}
      \fGqNode[v]{$\mDel$}
      \fGqEdge[->][vl][\texttt{KNOWS}]{$\mAdd$}
      \fGqNode[v]{\phantom{i}}
    }\;%
    \;\fReturn\; \fCyNode{x}
    &\emph{Type Check Fail}\;\textcolor{red}{\bm{\times}}
  \end{array}
\end{align*}
}

\noindent%
This query must be rejected since $\fCyNode{x}$ cannot bind both a node
and an edge at the same depth of graph pattern matching.
But typing $\fCyNode{x}$ as $\fEmpty$ erases the base type
($\fDbNodeType$ in the left pattern vs.\ $\fDbEdgeType$ in the right) 
required for type checking to detect the inconsistency and reject the query.
\qed\end{myExample}

\noindent%
The empty type former $\fEmptyT{(\cdot)}$ resolves the problem in
Example~\ref{ex:empty-type-former} by encoding emptiness while
preserving the underlying type as an annotation.  This allows
emptiness to propagate without sacrificing the ability to reject
ill-formed queries.
We restrict $\fEmptyT{(\cdot)}$ to the element types
$\fDbNodeType[-]$ and $\fDbEdgeType[-]$. We use
$\fDbNodeType[-,\varnothing]$ and $\fDbEdgeType[-,\varnothing]$
as syntactic sugars to denote $\fEmptyT{\fDbNodeType}$ and\,
$\fEmptyT{\fDbEdgeType}$, respectively.

\subsection{Subtyping}
\label{subsec:typing-subtyping}

The subtyping relation
$\fSubType{\fSortVars{1}}{\fSortVars{2}}$ is the least reflexive and
transitive relation closed under:

\begin{RulesDisplayCustomFont}{\small}
\begin{mathpar}
  \infer[\FSubRefl]
  {\phantom{a}}
  {\fSubType{\fSortVar}{\fSortVar}}
  \and
  \infer[\FSubTrans]
  {\fSubType{\fSortVars{1}}{\fSortVars{2}}\\
   \fSubType{\fSortVars{2}}{\fSortVars{3}}}
  {\fSubType{\fSortVars{1}}{\fSortVars{3}}}
  \and
  \infer[\FSubAny]
  {\phantom{a}}
  {\fSubType{\fSortVar}{\fAny}}
  \and
  \infer[\FSubEmpty]
  {\phantom{a}}
  {\fSubType{\fEmpty}{\fSortVar}}
  \and
  \infer[\FSubList]
  {\fSubType{\fSortVars{1}}{\fSortVars{2}}}
  {\fSubType{\fListT\;\fSortVars{1}}{\fListT\;\fSortVars{2}}}
  \\
  \infer[\FSubUnionL]
  {\phantom{a}}
  {\fSubType{\fSortVars{1}}{\fSortVars{1}\cup\fSortVars{2}}}
  \and
  \infer[\FSubUnionR]
  {\phantom{a}}
  {\fSubType{\fSortVars{2}}{\fSortVars{1}\cup\fSortVars{2}}}
  \and
  \infer[\FSubUnionElim]
  {\fSubType{\fSortVars{1}}{\fSortVar}\\
   \fSubType{\fSortVars{2}}{\fSortVar}}
  {\fSubType{\fSortVars{1}\cup\fSortVars{2}}{\fSortVar}}
  \and
  \infer[\FSubUnionCon]
  {\fSubType{\fSortVars{1}}{\fSortVars{1}'}\\
   \fSubType{\fSortVars{2}}{\fSortVars{2}'}}
  {\fSubType{\fSortVars{1}\cup\fSortVars{2}}{\fSortVars{1}'\cup\fSortVars{2}'}}
   \\
  \infer[\FSubRefineN]
  {\phantom{a}}
  {\fSubType{\fDbNodeType[\fDbGNames,-]}{\fDbNodeType}}
  \and
  \infer[\FSubRefineNE]
  {\phantom{a}}
  {\fSubType{\fEmptyT{\fDbNodeType}}{\fDbNodeType[\fDbGNames,-]}}
  \and
  \infer[\FSubRefineE]
  {\phantom{a}}
  {\fSubType{\fDbEdgeType[\fDbGNames,-]}{\fDbEdgeType}}
  \and
  \infer[\FSubRefineEE]
  {\phantom{a}}
  {\fSubType{\fEmptyT{\fDbEdgeType}}{\fDbEdgeType[\fDbGNames,-]}}
\end{mathpar}

\end{RulesDisplayCustomFont}

\noindent
\FSubRefl{}/\FSubTrans{} makes the subtyping relation a preorder.
\FSubAny{}/\FSubEmpty{} bound the type system with top $\fAny$ and bottom $\fEmpty$ types.
\FSubList{} lifts types to their list types.
The union rules---\FSubUnionL{}, \FSubUnionR{}, \FSubUnionElim{},
and \FSubUnionCon{}---make $\fSortVars{1}\cup\fSortVars{2}$ operate as
least-upper-bound.
The schema rules---\FSubRefineN{} and \FSubRefineNE{}---connect graph-indexed 
and schema-refined node element types while bounding this sublattice with the annotated 
empty type $\fEmptyT{\fDbNodeType}$ at the bottom. \FSubRefineE{} and 
\FSubRefineEE{} do the same for edge types.

\begin{myDef}[Type Intersection]\label{def:type-intersect}
  Let $\fSortVars{1}$ and $\fSortVars{2}$ be any two types. We define
  their intersection, written $\fSortVars{1}\cap\fSortVars{2}$, as the
  least reflexive and transitive relation closed under the following judgements:
  \begin{RulesDisplayCustomFont}{\small}
    \begin{mathpar}
  \infer[\FInterL]
  {
    \fSubType{\fSortVars{1}\!}{\fSortVars{2}}
  }
  {
    \fSortVars{1}\cap\fSortVars{2}:\fSortVars{1}
  }
  \and
  \infer[\FInterR]
  {
    \fSubType{\fSortVars{2}\!}{\fSortVars{1}}
  }
  {
    \fSortVars{1}\cap\fSortVars{2}:\fSortVars{2}
  }
  \and
  \infer[\FInter]
  {
    \fSubType{\fSortVar\!}{\fSortVars{1}}\\
    \fSubType{\fSortVar\!}{\fSortVars{2}}\\
    \fNSubType{\fSortVars{1}\!}{\fSortVars{2}}\\
    \fNSubType{\fSortVars{2}\!}{\fSortVars{1}}
  }
  {
    \fSortVars{1}\cap\fSortVars{2}:\fSortVar
  }
\end{mathpar}

  \end{RulesDisplayCustomFont}
\end{myDef}

\begin{myLemma}[Absorption]\label{lemma:top-bot-absorb}
%For any type $\fSortVar$: %
% \begin{enumerate}[nosep,leftmargin=2em, topsep=-1pt]
  (1)~$\fSortVar \cup \fAny = \fAny$,
  (2)~$\fSortVar \cup \fEmpty = \fSortVar$,
  (3)~$\fSortVar \cap \fAny = \fSortVar$, and
  (4)~$\fSortVar \cap \fEmpty = \fEmpty$.
% \end{enumerate}
% \vspace{-1.5mm}%
% \begin{proof}%
%   (1)--(2) from $\FSubAny$/$\FSubEmpty$;
%   (3)--(4) from $\FInterL$/$\FInterR$ on $\fAny$/$\fEmpty$.
% \end{proof}
\end{myLemma}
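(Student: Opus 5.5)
The equalities in the statement are not syntactic identities; the term $\fSortVar\cup\fAny$ is a different syntax tree from $\fAny$. I will read ``$=$'' for unions as equivalence under the subtyping preorder, i.e.\ mutual subtyping $\fSubType{\fSortVars{1}}{\fSortVars{2}}$ and $\fSubType{\fSortVars{2}}{\fSortVars{1}}$. For intersections I will read it via the judgement of Definition~\ref{def:type-intersect}: $\fSortVars{1}\cap\fSortVars{2}=\fSortVar'$ means $\fSortVars{1}\cap\fSortVars{2}:\fSortVar'$ is derivable. So I will prove both inclusions for (1)--(2) and exhibit a derivation for (3)--(4). Each part is then a short composition of the subtyping rules of \S\ref{subsec:typing-subtyping}, with no induction on types needed.

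For (1): $\fSubType{\fSortVar\cup\fAny}{\fAny}$ follows from \FSubAny{} applied to the union type. The converse $\fSubType{\fAny}{\fSortVar\cup\fAny}$ is exactly \FSubUnionR{} with $\fSortVars{1}=\fSortVar$ and $\fSortVars{2}=\fAny$. (Alternatively, \FSubUnionElim{} with premises $\fSubType{\fSortVar}{\fAny}$ and $\fSubType{\fAny}{\fAny}$ gives the first direction.)

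For (2): $\fSubType{\fSortVar}{\fSortVar\cup\fEmpty}$ is \FSubUnionL{}. The converse follows from \FSubUnionElim{} with premises $\fSubType{\fSortVar}{\fSortVar}$ (\FSubRefl{}) and $\fSubType{\fEmpty}{\fSortVar}$ (\FSubEmpty{}).

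For (3): \FSubAny{} gives $\fSubType{\fSortVar}{\fAny}$, so \FInterL{} yields $\fSortVar\cap\fAny:\fSortVar$.

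For (4): \FSubEmpty{} gives $\fSubType{\fEmpty}{\fSortVar}$, so \FInterR{} (with $\fSortVars{1}=\fSortVar$, $\fSortVars{2}=\fEmpty$) yields $\fSortVar\cap\fEmpty:\fEmpty$.

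The one delicate point is that the intersection judgement is not obviously functional. In the degenerate case $\fSortVar=\fAny$ in (3), or $\fSortVar=\fEmpty$ in (4), both \FInterL{} and \FInterR{} apply, so another result type might seem derivable. Any other derivation must still use \FInterL{} or \FInterR{}, because \FInter{} requires $\fNSubType{\fSortVars{1}}{\fSortVars{2}}$ and $\fNSubType{\fSortVars{2}}{\fSortVars{1}}$, and one of these fails here. The two results it can produce are then $\fSortVar$ and $\fAny$ (resp.\ $\fSortVar$ and $\fEmpty$), and these are mutually subtype-equivalent in exactly the degenerate cases where both rules fire. Hence the intersection is determined up to the equivalence, and (3)--(4) hold in the stated sense.

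If one insists on literal equality of types, the lemma would instead need a normalization step that quotients types by mutual subtyping. I expect this interpretive issue, rather than any of the rule applications, to be the only real obstacle.
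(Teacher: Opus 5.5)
Your proposal is correct and follows the paper's own argument. Parts (1)--(2) come from \FSubAny{}/\FSubEmpty{}, with \FSubUnionL{}, \FSubUnionR{}, and \FSubUnionElim{} supplying the reverse inclusions the paper leaves implicit, and parts (3)--(4) come from \FInterL{}/\FInterR{} applied to $\fAny$/$\fEmpty$; your extra remark that the intersection judgement is determined up to mutual subtyping is a sound clarification but not needed for the statement.
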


\begin{myLemma}[Schema-Refined Type Bounds]\label{lemma:top-bot-schema-ref}
  Consider a graph site $\fDbGNames$ with graph schema
  $(\fDbNodeSchema,\fDbEdgeSchema)$.
  Then
  \(
  \fSubType{\fEmptyT{\fDbNodeType}\!}{\fSubType{\fDbNodeType[\fDbGNames,\fDbNodeSchemaElem]\!}{\fDbNodeType}}
  \)
  and
  \(
  \fSubType{\fEmptyT{\fDbEdgeType}\!}{\fSubType{\fDbEdgeType[\fDbGNames,\fDbEdgeSchemaElem]\!}{\fDbEdgeType}}
  \)
  for $\fDbNodeSchemaElem \in \fDbNodeSchema$ and $\fDbEdgeSchemaElem \in \fDbEdgeSchema$,
  respectively.
% \vspace{-2.5mm}%
% \begin{proof}%
%   Immediate from $\FSubRefineN$/$\FSubRefineNE$.
% \end{proof}
\end{myLemma}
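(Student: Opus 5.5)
The statement is a chain of two subtyping judgements for each element sort, so I will establish each link separately. The two sorts are treated identically, so I spell out the node case and then transfer it to edges.

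For the node case, fix a graph site $\fDbGNames$ whose schema is $(\fDbNodeSchema,\fDbEdgeSchema)$, and pick any $\fDbNodeSchemaElem\in\fDbNodeSchema$. The refined type $\fDbNodeType[\fDbGNames,\fDbNodeSchemaElem]$ is then a well-formed instance of the schematic form $\fDbNodeType[\fDbGNames,-]$ used in the schema rules of the subtyping relation.

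\begin{itemize}
\item \textbf{Upper link.} Instantiating the metavariable ``$-$'' in \FSubRefineN{} with $\fDbNodeSchemaElem$ yields $\fSubType{\fDbNodeType[\fDbGNames,\fDbNodeSchemaElem]}{\fDbNodeType}$ directly. The rule has no premises, so no side conditions need checking.
\item \textbf{Lower link.} Instantiating \FSubRefineNE{} with the same $\fDbNodeSchemaElem$ yields $\fSubType{\fEmptyT{\fDbNodeType}}{\fDbNodeType[\fDbGNames,\fDbNodeSchemaElem]}$, again as a premise-free axiom.
\item \textbf{Composition.} These two judgements give the chained statement. Applying \FSubTrans{} also gives the outer bound $\fSubType{\fEmptyT{\fDbNodeType}}{\fDbNodeType}$, should the chain be read as also asserting it.
\end{itemize}

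The edge case repeats the argument verbatim. One takes $\fDbEdgeSchemaElem\in\fDbEdgeSchema$, uses \FSubRefineE{} for the upper link and \FSubRefineEE{} for the lower link, and composes with \FSubTrans{}.

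The only point needing care is confirming that the schematic ``$-$'' in the schema rules ranges over every schema entry of the site's schema, and not only over some distinguished one. I will also check that the empty former $\fEmptyT{\cdot}$ is legitimately applied here, since it is restricted to the element types $\fDbNodeType[-]$ and $\fDbEdgeType[-]$. Both checks follow from the grammar of $\fMSortGQL$ and the stated restriction of $\fEmptyT{(\cdot)}$, so I expect no genuine obstacle; the lemma is essentially a restatement of the schema axioms.
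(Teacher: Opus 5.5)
Your proposal is correct and is the same argument as the paper's: the lemma follows immediately by instantiating the premise-free axioms \FSubRefineN{} and \FSubRefineNE{} at the given schema entry, and likewise \FSubRefineE{} and \FSubRefineEE{} for edges. Applying \FSubTrans{} to obtain the outer bound is harmless, but the chained statement does not need it.
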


\begin{myDef}[Record and Binding Table Schemas]\label{def:record-schema}
  The schema $\fDbRSchema\!: \fAttr \rightharpoonup \fSortVar$ of a record specifies the type for each binding variable in its domain. The schema
  of a binding table is the union (Definition~\ref{def:record-schema-union}) of the
  schemas of all the records in its support.
\end{myDef}

\begin{myDef}[Record Schema Unions and Joins]\label{def:record-schema-compat}\label{def:record-schema-union}
  Let $\fDbRSchemas{1}$ and $\fDbRSchemas{2}$ be two record schemas.
  \begin{enumerate}[nosep, leftmargin=1.5cm]
    \item[\enumParen{$\fCompatU{\fDbRSchemas{1}\hspace{0.3pt}}{\hspace{0.3pt}\fDbRSchemas{2}}$}] They are \emph{union compatible} when
      $\fDom{\fDbRSchemas{1}}=\fDom{\fDbRSchemas{2}}.$
    %% \item They are \emph{disjoint union compatible}, written
    %%   $\fCompatDU{\fDbRSchemas{1}}{\fDbRSchemas{2}}$, when
    %%   $\fDom{\fDbRSchemas{1}}\cap\fDom{\fDbRSchemas{2}}=\varnothing.$
    \item[\enumParen{$\fCompatJ{\fDbRSchemas{1}}{\fDbRSchemas{2}}$}] They are \emph{join compatible} when
      $\forall \fCyVar \in \fDom{\fDbRSchemas{1}}\cap\fDom{\fDbRSchemas{2}}$:
      \begin{itemize}[nosep,leftmargin=2em, topsep=-0.3em]
      \item $\fSubType{\fDbRSchemas{1}(\fCyVar)}{\fDbNodeType[-]}$ (or $\fDbEdgeType[-]$) and
        $\fSubType{\fDbRSchemas{2}(\fCyVar)}{\fDbNodeType[-]}$ (or $\fDbEdgeType[-]$); and
      \item ${\fDbRSchemas{1}(\fCyVar)}\cap{\fDbRSchemas{2}(\fCyVar)}\neq\fEmpty.$
      \end{itemize}
    \item[\enumParen{$\fSchemaUnion{\fDbRSchemas{1}\hspace{0.3pt}\,}{\,\hspace{0.3pt}\fDbRSchemas{2}}$}] If $\fCompatU{\fDbRSchemas{1}}{\fDbRSchemas{2}}$, their
    \emph{union} is:
    \(
    \fSchemaUnion{\fDbRSchemas{1}}{\fDbRSchemas{2}}(\fCyVar) \triangleq\fDbRSchemas{1}(\fCyVar)\cup\fDbRSchemas{2}(\fCyVar), \fCyVar\in\fDom{\fDbRSchemas{1}}.
    \)
  %% \item If $\fCompatDU{\fDbRSchemas{1}}{\fDbRSchemas{2}}$, their
  %%   \emph{disjoint union} is:
  %%   \begin{TypingDisplay}[\small]
  %%   \[
  %%   \fSchemaDUnion{\fDbRSchemas{1}}{\fDbRSchemas{2}}(\fCyVar) \triangleq
  %%   \begin{cases}
  %%     \fDbRSchemas{1}(\fCyVar) &\fCyVar\in\fDom{\fDbRSchemas{1}}\\
  %%     \fDbRSchemas{2}(\fCyVar) &\fCyVar\in\fDom{\fDbRSchemas{2}}
  %%   \end{cases}
  %%   \]
  %%   \end{TypingDisplay}
    \item[\enumParen{$\fSchemaJoin{\fDbRSchemas{1}\hspace{0.9pt}\,}{\,\hspace{0.9pt}\fDbRSchemas{2}}$}] If $\fCompatJ{\fDbRSchemas{1}}{\fDbRSchemas{2}}$, their \emph{join} is:
    \begin{TypingDisplay}[\small]
    \[
    \fSchemaJoin{\fDbRSchemas{1}}{\fDbRSchemas{2}}(\fCyVar) \triangleq
    \begin{cases}
      \fDbRSchemas{1}(\fCyVar) &\fCyVar\in\fDom{\fDbRSchemas{1}}\hspace{0.8pt}\setminus\hspace{0.8pt}\fDom{\fDbRSchemas{2}}\\
      \fDbRSchemas{2}(\fCyVar) &\fCyVar\in\fDom{\fDbRSchemas{2}}\hspace{0.8pt}\setminus\hspace{0.8pt}\fDom{\fDbRSchemas{1}}\\
      \fDbRSchemas{1}(\fCyVar)\cap\fDbRSchemas{2}(\fCyVar) &\fCyVar\in\fDom{\fDbRSchemas{1}}\cap\fDom{\fDbRSchemas{2}}
    \end{cases}
    \]
    \end{TypingDisplay}
  \end{enumerate}
\end{myDef}

\section{Well-formedness of \lang Queries}
\label{sec:typing-rules}

We now use our type system (\S\ref{sec:type-system}) to define 
well-formedness rules for \lang queries and assign
binding table schemas to their results.
\lang's native support for graph schemas (\isoSection{4.13})
allows fine-grained static typing. Schema information flows from 
typing rules for pattern atoms through patterns and queries,
providing static guarantees that are unavailable in schema-free formalisms.

We organize \lang query well-formedness into three layers that mirror 
the syntactic hierarchy of our calculus in Figure~\ref{fig:gql-calculus}: value 
expressions, patterns (\S\ref{sec:typing-rules-patterns}), and 
queries (\S\ref{sec:typing-rules-queries}).
This organization simplifies the type-soundness argument (\S\ref{sec:metatheory}) 
by allowing it to proceed compositionally, lifting guarantees from expressions to 
patterns and ultimately to queries.
We only consider queries where all pattern atoms are named (non-anonymous), \ie 
they declare their binding variables: $\fVar{\fCyPatternAtom}\neq\epsilon$.

\subsection{\lang Value Expressions}
\label{sec:typing-rules-expressions}

%% The complete typing rules for value expressions can be found in the
%% accompanying supplementary material.

Due to space constraints, we only introduce the typing
judgement for \lang value expressions here for clarity when
referencing them in later sections.
We type value expressions under the judgement:
\(
\fExpJudgeParam,
\)
which assigns type~$\fSortVar$ to the
value expression~$\fCyExp$ under the context containing: the 
database world ($\fDbWorldS$), a working graph site
$\fDbGNames$, a record schema ($\fDbRSchema$),
the binding variables used in the expression ($\fDbVarUse$),
the permitted number of aggregate functions ($\Box\in\{0,1\}$), and the reference-mode ($\Diamond\in\{\fSingletonRef,\fGroupRef\}$) 
that captures if the expression is inside or outside a quantifier context.
%
% When an operator is applied to a union-typed argument, the rules use
% dynamic refinement (Appendix~\ref{subsec:typing-coercion}) to check
% that at least one branch is compatible, producing~$\fNull$ for
% incompatible branches---faithfully modeling~\isoSection{\isoDutTest}.
%

\subsection{\lang Patterns}%
\label{sec:typing-rules-patterns}

Typing patterns is one of the most technically challenging aspects of
defining well-formedness of \lang queries because they introduce all
bindings to graph elements, so their typing must statically predict
the schema that the operational semantics produces at runtime.
Moreover, since our \lang formalism includes graphs with schemas,
typing becomes \emph{strictly more precise} than in schema-free
approaches. For instance, a binding variable can be statically
assigned $\fEmptyT{\fSortVar}$ when the schema's label constraints
together with the pattern rule out any valid homomorphism.
These refinements compose in non-trivial ways because:
(a)~\emph{surrounding context}---a pattern list is composed
of several path patterns that may have overlaps in their declared binding
variables, and (b)~\emph{internal context}---a pattern is composed of
several pattern atoms that may similarly have overlaps in their
declared binding variables.
We reconcile the two by structuring our typing rules across different levels
that mirror the grammar of Figure~\ref{fig:gql-calculus}:
(1)~\emph{atom typing} types individual node and edge atoms in
isolation from all context;
(2)~\emph{path pattern typing} composes atoms only under
internal context, staying isolated from surrounding context; and
(3)~\emph{pattern list typing} composes path patterns via
conjunction under surrounding context.
Levels (2) and (3) refine overlapping variables left-to-right in one forward pass without fixed-point
iteration, which may assign wider types than necessary.
We formalize pattern well-formedness using three main judgement
forms.

\begin{TypingDisplay}
\[
\fPatAtomJudge{\fDbWorld}{\fCyPatternAtom}{\fDbRSchema}
\qquad\qquad
\fPatJudgeM{\fDbWorld}{\fCyPattern}{\fDbRSchema}{\Diamond}
\qquad\qquad
\fPatExpJudge{\fDbWorld}{\fCyPatternExp}{\fDbRSchema}
\]
\end{TypingDisplay}

\noindent
The \emph{atom judgement} types a pattern atom $\fCyPatternAtom$, \ie
a node $\fCyPatternNode$ or an edge $\fCyDirection$ atom, in isolation
to produce a singleton binding table schema $\fDbRSchema$.
The \emph{pattern judgement} composes atoms into path patterns. Its
reference mode $\Diamond\in\fSet{\fSingletonRef,\fGroupRef}$ records
whether the pattern sits outside ($\fSingletonRef$) or inside
($\fGroupRef$) a quantifier context.
The \emph{pattern list judgement} composes path patterns using
conjunction ($\fCyPatternAnd{\fCyPatternExp}{\fCyPattern}$).

\MyPara{Atom Typing (Level 1)}\label{sec:atom-typing}%
Atom typing resolves the label expression $\fCyLabelExp$ and property
map $\fPGProp$ of a pattern atom $\fCyPatternAtom$ against the graph schema
\smash{$(\fDbNodeSchema,\fDbEdgeSchema)$}, producing a
schema-refined type.
Two auxiliary judgements perform the filtering and operate uniformly
on node and edge schemas, so we write $\fDbElemSchemaElem$ for a
schema entry of either kind.
The judgement
$\fLabelSJudge{\fDbElemSchema}{\fLbl{\fCyPatternAtom}}{\fDbElemSchema'}$
filters element schemas by structural induction on the label
expression, covering: labels $\fCyNodeLabel$, the wildcard~($\fCyLWild$),
conjunction~($\fCyLAnd{}{}$), disjunction~($\fCyLOr{}{}$), and
negation~($\hspace{2pt}\fCyLNeg{}$). The judgement
$\fPropSJudge{\fDbElemSchema}{\fProp{\fCyPatternAtom}}{\fDbElemSchema'}$
similarly filters element schemas, but by property maps.
Intersecting their results yields the compatible element schemas.

\begin{RulesDisplayCustomFont}{\small}
\begin{mathpar}
  \infer[\RuleNameFont\FTySLEmpty]
  {
    \phantom{i}
  }
  {
    \fLabelSJudge{\fDbElemSchema}{\varepsilon}{\fDbElemSchema}
  }
  \and
  \infer[\RuleNameFont\FTySLAtom]
  {
    \fDbElemSchema[\fCyLabel]\fAssign\{\fDbElemSchemaElem\in\fDbElemSchema\mid\fCyLabel\in\fSchLbls{\fDbElemSchemaElem}\}
  }
  {
    \fLabelSJudge{\fDbElemSchema}{\fCyLabel}{\fDbElemSchema[\fCyLabel]}
  }
  \and
  \infer[\RuleNameFont\FTySLNot]
  {
    \fLabelSJudge{\fDbElemSchema}{\fPGNodelabel}{\fDbElemSchema[l]}
  }
  {
    \fLabelSJudge{\fDbElemSchema}{\fCyLNeg{\fPGNodelabel}}{\fDbElemSchema\setminus\fDbElemSchema[l]}
  }
  \and
  \infer[\RuleNameFont\FTySLAnd]
  {
    \fLabelSJudge{\fDbElemSchema}{\fPGNodelabel[1]}{\fDbElemSchema[1]}\\
    \fLabelSJudge{\fDbElemSchema}{\fPGNodelabel[2]}{\fDbElemSchema[2]}
  }
  {
    \fLabelSJudge{\fDbElemSchema}{\fCyLAnd{\fPGNodelabel[1]}{\,\fPGNodelabel[2]}}{\fDbElemSchema[1]\cap\fDbElemSchema[2]}
  }
  \and
  \infer[\RuleNameFont\FTySLOr]
  {
    \fLabelSJudge{\fDbElemSchema}{\fPGNodelabel[1]}{\fDbElemSchema[1]}\\
    \fLabelSJudge{\fDbElemSchema}{\fPGNodelabel[2]}{\fDbElemSchema[2]}
  }
  {
    \fLabelSJudge{\fDbElemSchema}{\fCyLOr{\fPGNodelabel[1]}{\,\fPGNodelabel[2]}}{\fDbElemSchema[1]\cup\fDbElemSchema[2]}
  }
  \and
  \infer[\RuleNameFont\FTySLWildcard]
  {
    \fDbElemSchema[\%]\fAssign\{\fDbElemSchemaElem\in\fDbElemSchema\mid\fSchLbls{\fDbElemSchemaElem}\neq\varnothing\}
  }
  {
    \fLabelSJudge{\fDbElemSchema}{\fCyLWild}{\fDbElemSchema[\%]}
  }
  \and
  \infer[\RuleNameFont\FTySPAtom]
  {
    \fDbElemSchema[\fDbPropSchema]\fAssign\{\fDbElemSchemaElem\in\fDbElemSchema\mid\fDbPropSchema\subseteq\fSchProps{\fDbElemSchemaElem}\}
  }
  {
    \fPropSJudge{\fDbElemSchema}{\fDbPropSchema}{\fDbElemSchema[\fDbPropSchema]}
  }
\end{mathpar}

\end{RulesDisplayCustomFont}

\noindent
\FTySLEmpty{} returns the schema set as is since there is no label expression. 
\FTySLAtom{} retains schemas whose label set
contains $\fCyLabel$, while \FTySLWildcard{} retains schemas with at
least one label. \FTySLNot{}, \FTySLAnd{}, and \FTySLOr{} filter 
schemas based on logical operations corresponding to their names.
\FTySPAtom{} filters by property maps.

\begin{RulesDisplayCustomFont}{\small}
\begin{mathpar}
  \infer[\RuleNameFont\FPropEmpty]
  {
    \phantom{\fCyProp}      
  }
  {
    \fPropJudge{}{\varnothing}{\varnothing}
  }
  \and
  \infer[\RuleNameFont\FPropAtom]
  {
    \fCyProp\in\fAttr\\
    \fSortVar\in\fMSortPG\\
    \fCyExpConst{}\in\fUniverse{\fSortVar}
  }
  {
    \fPropJudge{}{[\fCyProp\mapsto\fCyExpConst{}]}{[\fCyProp\mapsto\fSortVar]}
  }
  \and
  \infer[\RuleNameFont\FPropInsert]
  {
    \fCyProp\not\in\fDom{\fPGProp}\\
    \fPropJudge{}{\fPGProp}{\fDbPropSchema[\fPGProp]}\\
    \fPropJudge{}{[\fCyProp\mapsto\fCyExpConst{}]}{\fDbPropSchema[k]}
  }
  {
    \fPropJudge{}{\fPGProp\sqcup[\fCyProp\mapsto\fCyExpConst{}]}{\fSchemaDUnion{\fDbPropSchema[\fPGProp]}{\fDbPropSchema[k]}}
  }
\end{mathpar}

\end{RulesDisplayCustomFont}

\noindent
The judgement
$\fPropJudge{}{\fPGProp}{\fDbPropSchema}$ types a property map $\fPGProp$ with 
a property schema $\fDbPropSchema$.
\FPropEmpty{} types the empty map, \FPropAtom{} types a singleton map
by inferring the constant's scalar type, and \FPropInsert{}
combines property schemas via disjoint union provided their domains are also disjoint.

% \begin{RulesDisplayCustomFont}{\small}
% \input{formalizations/gql-node}
% \end{RulesDisplayCustomFont}

\begin{RulesDisplayCustomFont}{\small}
  \begin{mathpar}
  \infer[\RuleNameFont\FTySAtomLabelPropFresh]
  {
    \fVar{\fCyPatternAtom} \in \fAttr\\
    \fPropJudge{}{\fProp{\fCyPatternAtom}}{\fDbPropSchema}\\
    %% \fPropJudge{\fDbWorld;\varnothing}
    %%            {\fPGProp}
    %%            {\fDbPropSchema[\fPGProp]}\\
    \fLabelSJudge{\fDbElemSchema}{\fLbl{\fCyPatternAtom}}{\fDbElemSchema[\textsf{l}]}\\
    \fPropSJudge{\fDbElemSchema}{\fDbPropSchema}{\fDbElemSchema[\fDbPropSchema]}
  }
  {
    \fSAtomJudge{\fDbElemSchema}
               {\fCyPatternAtom}
               {\fDbElemSchema[\textsf{l}]\cap\fDbElemSchema[\fDbPropSchema]}
  }
  \and
    \infer[\RuleNameFont\FTyPNodeLabelPropOpen]
    {
      \fDbGNames \not\in \fDom{\fDbSchema}\\
      \fVar{\fCyPatternNode}\in\fAttr
    }
    {
      \fPatAtomJudge{\fDbWorld}
                 {\fCyPatternNode}
                 {[\fVar{\fCyPatternNode} \mapsto \fDbNodeType]} 
    }
    \and
    \infer[\RuleNameFont\FTyPNodeLabelPropClosed]
    {
      \fDbGNames \in \fDom{\fDbSchema}\\
      (\fDbNodeSchema, -)=\fDbSchema(\fDbGNames)\\
      \fSAtomJudge{\fDbNodeSchema}
                 {\fCyPatternNode}
                 {\fDbNodeSchema[N]}%
      % \\
      % \fDbNodeSchema\neq\varnothing
    }
    {
      \fPatAtomJudge{\fDbWorld}
                 {\fCyPatternNode}
                 {[\fVar{\fCyPatternNode} \mapsto \fDbNodeType[\fDbGNames,{\fDbNodeSchema[N]}]]} 
    }
    % \quad
    % \infer[\RuleNameFont\FTyPNodeLabelPropClosedFail]
    % {
    %   \fDbGNames \in \fDom{\fDbSchema}\\\\      
    %   \fSNodeJudge{\fDbWorld}
    %              {\fCyPatternNode}
    %              {\varnothing}
    % }
    % {
    %   \fPatAtomJudge{\fDbWorld}
    %              {\fCyPatternNode}
    %              {[\fVar{\fCyPatternNode} \mapsto \fEmptyT{\fDbNodeType}]} 
    % }
    \and
    \infer[\RuleNameFont\FTyPEdgeLabelPropOpen]
    {
      \fDbGNames \not\in \fDom{\fDbSchema}\\
      \fVar{\fCyPatternEdge} \in \fAttr
    }
    {
      \fPatAtomJudge{\fDbWorld}
                 {\fCyPatternEdge}
                 {[\fVar{\fCyPatternEdge} \mapsto \fDbEdgeType]} 
    }
    \and
    \infer[\RuleNameFont\FTyPEdgeLabelPropClosed]
    {
      \fDbGNames \in \fDom{\fDbSchema}\\
      (-,\fDbEdgeSchema)=\fDbSchema(\fDbGNames)\\     
      \fSAtomJudge{\fDbEdgeSchema}
                 {\fCyPatternEdge}
                 {\fDbEdgeSchema[E]}%
      %            \\
      % \fDbEdgeSchema\neq\varnothing
    }
    {
      \fPatAtomJudge{\fDbWorld}
                 {\fCyPatternEdge}
                 {[\fVar{\fCyPatternEdge} \mapsto \fDbEdgeType[\fDbGNames,{\fDbEdgeSchema[E]}]]}
    }
    % \and
    % \infer[\RuleNameFont\FTyPEdgeLabelPropClosedFail]
    % {
    %   \fDbGNames \in \fDom{\fDbSchema}\\
    %   \fCyEdgeVar\fAssign\fVar{\fCyPatternEdge}\\\\      
    %   \fSEdgeJudge{\fDbWorld}
    %              {\fCyPatternEdge}
    %              {\varnothing}
    % }
    % {
    %   \fPatAtomJudge{\fDbWorld}
    %              {\fCyPatternEdge}
    %              {[\fCyEdgeVar \mapsto \fEmptyT{\fDbEdgeType}]}
    % }
    %% \and
    %% \infer[\RuleNameFont\FTyPEdgeLabelPropDir]
    %% {
    %%   \fCyPatternEdge=\fCyElemEdgePattern{\fCyEdgeVar\,\fPGNodelabel\,\fPGProp\,\fCyDir}\\
    %%   \fCyDirection\in\{
    %%   \fCyEdgeRight{\fCyPatternEdge},\;
    %%   \fCyEdgeLeft{\fCyPatternEdge},\;
    %%   \fCyEdgeLeftRight{\fCyPatternEdge},\;
    %%   \fCyEdgeRightU{\fCyPatternEdge},\;
    %%   \fCyEdgeLeftU{\fCyPatternEdge},\;
    %%   \fCyEdgeNoneU{\fCyPatternEdge},\;
    %%   \fCyEdgeNone{\fCyPatternEdge}
    %%   \}\\
    %%   \fPatAtomJudge{\fDbWorld}
    %%              {\fCyPatternEdge}
    %%              {\fDbRSchema}
    %% }
    %% {
    %%   \fPatAtomJudge{\fDbWorld}
    %%              {\fCyDirection}
    %%              {\fDbRSchema}
    %% }
\end{mathpar}

  \end{RulesDisplayCustomFont}

\noindent
\FTySAtomLabelPropFresh{} resolves $\fCyPatternAtom$'s schema by intersecting 
its label-expression and property map filter results.
The atom-typing rules handle open and 
closed graphs separately.
Open graphs (\FTyPNodeLabelPropOpen) have no schema, so the atom's binding 
variable $\fVar{\fCyPatternNode}$ is assigned just 
the graph-indexed type~$\fDbNodeType$.
For closed graphs (\FTyPNodeLabelPropClosed), the variable is assigned
the schema-refined type \smash{$\fDbNodeType[\fDbGNames,{\fDbNodeSchema[N]}]$}.
If the atom's label-expression and property map combination does not 
satisfy the constraints of any node schema in 
$\fDbGNames$'s graph schema, \ie \smash{$\fDbNodeSchema[N]\!=\!\varnothing$}; the variable 
is assigned the annotated empty type \smash{$\fEmptyT{\fDbNodeType}$} 
since \smash{$\fDbNodeType[\fDbGNames,\varnothing]\!\triangleq\!\fEmptyT{\fDbNodeType}$} 
(\S~\ref{subsec:typing-nullability}).
Symmetric rules cover edge atoms.

%% \FTyPEdgeLabelPropDir{} lifts any
%% directional wrapper to the same schema, since directionality does not
%% affect the bound variable's type.

\MyPara{Path Pattern Typing (Level 2)}\label{sec:pattern-typing}%
Path pattern typing composes atoms into complete path patterns.
When composing a node--edge--node step 
$\fCyPatternNodes{1}\fCyPatternEdges{2}\fCyPatternNodes{2}$, the types assigned to 
the atoms---$\fCyPatternNodes{1}$, $\fCyPatternEdges{2}$, and $\fCyPatternNodes{3}$---in 
isolation may be overly permissive. This is because \emph{atom typing} (Level 1) types 
atoms with refined element schemas, with refinement based solely on their label-expressions 
and property maps.
While this ``may be'' sufficient for typing the node atoms $\fCyPatternNodes{1}$ and 
$\fCyPatternNodes{2}$, it is not the case for the edge atom $\fCyPatternEdges{2}$,
as edge schemas impose additional constraints beyond label sets and property maps: endpoint 
node schemas and edge directionality (Definition~\ref{def:prop-graph-schemas}). 
Moreover, refining $\fCyPatternEdges{2}$'s type based on the additional constraints 
usually requires refining the types of its endpoints $\fCyPatternNodes{1}$ and 
$\fCyPatternNodes{2}$ as well. 

We formalize path pattern typing such that type refinement for the composing 
node-edge-node step is done jointly, allowing typing to prune infeasible triples 
and thus assign more precise types.
For the path pattern $\fCyPattern\,\fCyDirection\,\fCyPatternNode$,
the judgement
$\fRefineJudge{\fDbWorld;\fDbRSchemas{{\scriptscriptstyle P}};\fDbRSchemas{{\scriptscriptstyle E}};\fDbRSchemas{{\scriptscriptstyle N}}}
              {\fCyPattern\,\fCyDirection\,\fCyPatternNode}
              {\fDbRSchemas{{\scriptscriptstyle P}}',\fDbRSchemas{{\scriptscriptstyle E}}',\fDbRSchemas{{\scriptscriptstyle N}}'}$
jointly refines the types of the binding variables corresponding to the 
rightmost node atom (\ie $\fTail{\cdot}$) of the preceding pattern $\fCyPattern$, the edge atom $\fCyDirection$, 
and the following node atom $\fCyPatternNode$; from their individual table schemas 
(typed in isolation)---$\fDbRSchemas{{\scriptscriptstyle P}}$, $\fDbRSchemas{{\scriptscriptstyle E}}$, and 
$\fDbRSchemas{{\scriptscriptstyle N}}$---to produce the type refined schemas 
$\fDbRSchemas{{\scriptscriptstyle P}}'$, $\fDbRSchemas{{\scriptscriptstyle E}}'$, and $\fDbRSchemas{{\scriptscriptstyle N}}'$.

\begin{myDef}[Type Preserving Update for Schema-Refined Types]\label{def:lift-update}
  Consider a graph site $\fDbGNames$ with graph schema
  $(\fDbNodeSchema,\fDbEdgeSchema)$.
  Let $\fSortVar$ be a schema-refined type $\fDbNodeType[\fDbGNames,-]$ 
  (or $\fDbEdgeType[\fDbGNames,-]$) or its list-lifted version; and 
  let $\fDbElemSchema$, $\fDbElemSchema'$ range over the subsets of 
  $\fDbNodeSchema$ (or $\fDbEdgeSchema$).
  We then define the type preserving update $\fRefineType{\fSortVar}{\fDbElemSchema}$ as:

{
  \small
  \setlength{\abovedisplayskip}{-2mm}%
  \setlength{\belowdisplayskip}{1mm}%
  \begin{align*}
  \fRefineType{\fSortVar}{\fDbElemSchema}\;\triangleq\;
  \begin{cases}
    \fListT\;\bigl(\fBaseType{\fSortVar}[\fDbElemSchema/\fDbElemSchema']\bigr)
      & \text{if } \fSortVar = \fListT\;\fDbNodeType[\fDbGNames,\fDbElemSchema']\;\textbf{or}\;
        \fListT\;\fDbEdgeType[\fDbGNames,\fDbElemSchema']\\[3pt]
    \fSortVar[\fDbElemSchema/\fDbElemSchema']
      & \text{if } \fSortVar = \fDbNodeType[\fDbGNames,\fDbElemSchema']\;\textbf{or}\;
        \fDbEdgeType[\fDbGNames,\fDbElemSchema']
  \end{cases}
  \end{align*}
}

\noindent%
The operation $\fSortVar[\fDbElemSchema/\fDbElemSchema']$ denotes replacing the 
schema set~$\fDbElemSchema'$ of the schema-refined type $\fSortVar$ with the set~$\fDbElemSchema$.
It yields the appropriate annotated empty type when $\fDbElemSchema=\varnothing$
(\eg $\fDbNodeType[\fDbGNames,\fDbNodeSchema'][\varnothing/\fDbNodeSchema']=\fEmptyT{\fDbNodeType}$).
\end{myDef}

%
% To keep the rules compact, we write
% $\fUpToNull{\fSortVar'}{\fSortVar}$ for 
% $\fSubType{\fSortVar}{\fSubType{\fSortVar'}{\fSortVar\fNullable}}$.
% The endpoint condition further relies on the orientation metafunction
% $\fDir{\cdot}$ (Figure~\ref{fig:gql-metafunctions}), which
% maps an edge-pattern direction to the set of edge orientations it
% admits.

\begin{RulesDisplayCustomFont}{\small}
\begin{mathpar}
  \mprset{vskip=1mm}
  {\footnotesize%
    \infer[\RuleNameFont\FTyPPatRefineOpen]
    {
      \fDbGNames \not\in \fDom{\fDbSchema}\\
      \fCompatJ{}{}(\fDbRSchemas{{\scriptscriptstyle P}},\fDbRSchemas{{\scriptscriptstyle E}},\fDbRSchemas{{\scriptscriptstyle N}})
    }
    {
      \fRefineJudge{\fDbWorld;\fDbRSchemas{{\scriptscriptstyle P}};\fDbRSchemas{{\scriptscriptstyle E}};\fDbRSchemas{{\scriptscriptstyle N}}}
                   {\fCyPattern\,\fCyDirection\,\fCyPatternNode}%
                   {\fDbRSchemas{{\scriptscriptstyle P}},\fDbRSchemas{{\scriptscriptstyle E}},\fDbRSchemas{{\scriptscriptstyle N}}}
    }    
  }
    \quad
    {\footnotesize%
    \fEndpointCond{\fDbNodeSchemaElem[1]}{\fDbEdgeSchemaElem[2]}{\fDbNodeSchemaElem[3]}\triangleq%
    \hspace{-4mm}\bigvee\limits_{\fCyDir\,\in\,\fDir{\fCyDirection}}%
    \begin{cases}
      \big(\fDbNodeSchemaElem[1],\fDbNodeSchemaElem[3],\fPGDir\!\!\big)=
      \big(\fSchEnd{\fDbEdgeSchemaElem[2]}{1},\fSchEnd{\fDbEdgeSchemaElem[2]}{2},\fSchDir{\fDbEdgeSchemaElem[2]}\big) & \text{if } \fCyDir=\,\fOrientR\\[-1pt]
      \big(\fDbNodeSchemaElem[1],\fDbNodeSchemaElem[3],\fPGDir\!\!\big)=
      \big(\fSchEnd{\fDbEdgeSchemaElem[2]}{2},\fSchEnd{\fDbEdgeSchemaElem[2]}{1},\fSchDir{\fDbEdgeSchemaElem[2]}\big) & \text{if } \fCyDir=\,\fOrientL\\[-1pt]
      \big(\fDbNodeSchemaElem[1],\fDbNodeSchemaElem[3],\fPGUDir\!\!\big)=
      \big(\fSchEnd{\fDbEdgeSchemaElem[2]}{1},\fSchEnd{\fDbEdgeSchemaElem[2]}{2},\fSchDir{\fDbEdgeSchemaElem[2]}\big) & \multirow{1}{*}{\text{else}}\\[-4pt]
      \multicolumn{1}{c}{\textbf{or}} & \\[-4pt]
      \big(\fDbNodeSchemaElem[1],\fDbNodeSchemaElem[3],\fPGUDir\!\!\big)=
      \big(\fSchEnd{\fDbEdgeSchemaElem[2]}{2},\fSchEnd{\fDbEdgeSchemaElem[2]}{1},\fSchDir{\fDbEdgeSchemaElem[2]}\big)&\\
    \end{cases}
    }
    \and
    % {\footnotesize%
    % \infer[\RuleNameFont\FTyPPatRefine]
    % {
    %   \begin{array}{@{}l@{}}
    %     \fUpToNull{\fBaseType{\fDbRSchemas{1}(\fCyNodeVars{1})}}{\fDbNodeType[{\fDbGNames,\fDbNodeSchema[1]}]}\cr
    %     \fUpToNull{\fBaseType{\fDbRSchemas{2}(\fCyEdgeVars{2})}}{\fDbEdgeType[{\fDbGNames,\fDbEdgeSchema[2]}]}\cr
    %     \fUpToNull{\fBaseType{\fDbRSchemas{3}(\fCyNodeVars{3})}}{\fDbNodeType[{\fDbGNames,\fDbNodeSchema[3]}]}  
    %   \end{array}\\
    %   \mathcal{S}\fAssign\{(\fDbNodeSchemaElem[1],\fDbEdgeSchemaElem[2],\fDbNodeSchemaElem[3])\in(\fDbNodeSchema[1]\times\fDbEdgeSchema[2]\times\fDbNodeSchema[3])\mid\fEndpointCond{\fDbNodeSchemaElem[1]}{\fDbEdgeSchemaElem[2]}{\fDbNodeSchemaElem[3]}\}\\
    %   % \fDbNodeSchema[1]'\fAssign\fElemAccess{1}{\mathcal{S}}\\
    %   % \fDbEdgeSchema[2]'\fAssign\fElemAccess{2}{\mathcal{S}}\\
    %   % \fDbNodeSchema[3]'\fAssign\fElemAccess{3}{\mathcal{S}}
    %   \fDbRSchema\fAssign (\fSchemaJoin{\fDbRSchemas{1}}{\fSchemaJoin{\fDbRSchemas{2}}{\fDbRSchemas{3}}})
    %   \left[\begin{array}{@{}l@{}}
    %     {\fCyNodeVars{1} \mapsto \fRefineType{\fDbRSchemas{1}(\fCyNodeVars{1})}{\fDbNodeSchema[1]'}},\cr
    %     {\fCyEdgeVars{2} \mapsto \fRefineType{\fDbRSchemas{2}(\fCyEdgeVars{2})}{\fDbEdgeSchema[2]'}},\cr
    %     {\fCyNodeVars{3} \mapsto \fRefineType{\fDbRSchemas{3}(\fCyNodeVars{3})}{\fDbNodeSchema[3]'}}
    %   \end{array}\right]
    % }
    % {
    %   \fRefineJudge{\fDbWorld;\fDbRSchemas{1};\fDbRSchemas{2};\fDbRSchemas{3}}
    %                {\fCyNodeVars{1},\fCyEdgeVars{2},\fCyNodeVars{3},\fCyDirection}
    %                {\fDbRSchema}
    % }
    % }
    \infer[\RuleNameFont\FTyPPatRefine]
    {
      \fDbGNames \in \fDom{\fDbSchema}\\
      \fCompatJ{}{}(\fDbRSchemas{{\scriptscriptstyle P}},\fDbRSchemas{{\scriptscriptstyle E}},\fDbRSchemas{{\scriptscriptstyle N}})\\
      (\fSortVars{1},\fSortVars{2},\fSortVars{3})\fAssign%
      (\fDbRSchemas{{\scriptscriptstyle P}}(\fTail{\fCyPattern}),\,
      \fDbRSchemas{{\scriptscriptstyle E}}(\fVar{\fCyDirection}),\,
      \fDbRSchemas{{\scriptscriptstyle N}}(\fVar{\fCyPatternNode}))\\\\
      \fUpToNull{\fBaseType{\fSortVars{1}}}{\fDbNodeType[{\fDbGNames,\fDbNodeSchema[1]}]}\quad
      \fUpToNull{\fBaseType{\fSortVars{2}}}{\fDbEdgeType[{\fDbGNames,\fDbEdgeSchema[2]}]}\quad
      \fUpToNull{\fBaseType{\fSortVars{3}}}{\fDbNodeType[{\fDbGNames,\fDbNodeSchema[3]}]}\qquad
      \mathcal{S}\fAssign\{(\fDbNodeSchemaElem[1],\fDbEdgeSchemaElem[2],\fDbNodeSchemaElem[3])\!\in\!(\fDbNodeSchema[1]\times\fDbEdgeSchema[2]\times\fDbNodeSchema[3])\mid\fEndpointCond{\fDbNodeSchemaElem[1]}{\fDbEdgeSchemaElem[2]}{\fDbNodeSchemaElem[3]}\}
    }
    {
      \fRefineJudge{\fDbWorld;\fDbRSchemas{{\scriptscriptstyle P}};\fDbRSchemas{{\scriptscriptstyle E}};\fDbRSchemas{{\scriptscriptstyle N}}}
                   {\fCyPattern\,\fCyDirection\,\fCyPatternNode}
                   {\fDbRSchemas{{\scriptscriptstyle P}}\big[{\fTail{\fCyPattern} \mapsto \fRefineType{\fSortVars{1}}{\fElemAccess{1}{\mathcal{S}}}}\big],\;
                    \fDbRSchemas{{\scriptscriptstyle E}}\big[{\fVar{\fCyDirection} \mapsto \fRefineType{\fSortVars{2}}{\fElemAccess{2}{\mathcal{S}}}}\big],\;
                    \fDbRSchemas{{\scriptscriptstyle N}}\big[{\fVar{\fCyPatternNode} \mapsto \fRefineType{\fSortVars{3}}{\fElemAccess{3}{\mathcal{S}}}}\big]%
                   }
    }
\end{mathpar}

\end{RulesDisplayCustomFont}

\noindent
\FTyPPatRefineOpen{} handles type refinement in open graphs and
simply checks pairwise join compatibility (Definition~\ref{def:record-schema-union}, 
via shorthand:\;$\fCompatJ{}{}(\dots)$) of the individual schemas 
corresponding to the preceding pattern $\fCyPattern$, edge atom $\fCyDirection$, and 
following node atom $\fCyPatternNode$; before returning them unchanged.
\FTyPPatRefine{} handles closed graphs. 
Let $\fSortVars{1}$, $\fSortVars{2}$, and $\fSortVars{3}$ correspond to 
the types assigned in isolation to the binding variables: 
$\fTail{\fCyPattern}$, $\fVar{\fCyDirection}$, and $\fVar{\fCyPatternNode}$, respectively.
We first check if the types are the appropriate schema-refined types up to nullability, 
using the shorthand: 
$\fUpToNull{\fSortVar'\!}{\!\fSortVar}\triangleq\fSubType{\fSortVar\!}{\fSubType{\fSortVar'\!}{\fSortVar\fNullable}}$.
Next, we form the Cartesian product of their schema sets, retaining only the triples
$(\fDbNodeSchemaElem[1],\fDbEdgeSchemaElem[2],\fDbNodeSchemaElem[3])$
that satisfy the \emph{additional constraints} imposed by edge schemas ($\fDbEdgeSchemaElem[2]$) via 
the endpoint condition $\fEndpointCond{\fDbNodeSchemaElem[1]}{\fDbEdgeSchemaElem[2]}{\fDbNodeSchemaElem[3]}$.
%
% If any atom carries an empty type ($\fDbNodeType[-,\varnothing]$ or
% $\fDbEdgeType[-,\varnothing]$), the Cartesian product is empty and all
% three variables are refined to their empty-type counterparts via
% Definition~\ref{def:lift-update}.
%
The endpoint condition holds when the node-edge-node schema triple is 
compatible under some orientation in $\fDir{\fCyDirection}$ (Figure~\ref{fig:gql-metafunctions}). 
Each orientation ($\fCyDir\in\{\fOrientR,\fOrientL,\fOrientU\}$) compares the endpoint node schemas $\fDbNodeSchemaElem[1]$ and 
$\fDbNodeSchemaElem[3]$ against that of the edge schema $\fDbEdgeSchemaElem[2]$ 
as a tuple, with ordering determined by the orientation; the edge directionality $\fSchDir{\fDbEdgeSchemaElem[2]}$
must match $\fDirDenote{\fCyDir}$.
Orientation ``$\fOrientR$'' takes the endpoint node schemas in order, 
``$\fOrientL$'' swaps them, but both require the edge to be directed; while ``$\fOrientU$'' compares them 
as an unordered pair for undirected edges. Finally, the filtered schema triples 
are used to update the variables' types via the type-preserving update from Definition~\ref{def:lift-update}.

\begin{myDef}[Quantifier Lift]%
  \label{def:quant-lift}
  Given a record schema~$\fDbRSchema$ and a quantifier $\fCyQuantifier$,
  the quantifier lift operation $\fLiftQuant{\fCyQuantifier}{\fDbRSchema}$
  is defined as: $\forall\fCyVar\in\fDom{\fDbRSchema}$:

  {
    \small
    \setlength{\abovedisplayskip}{-2mm}%
    \setlength{\belowdisplayskip}{1mm}%
    \begin{align*}
    \fLiftQuant{\fCyQuantifier}{\fDbRSchema}(\fCyVar) \;\triangleq\;
    \begin{cases}
      \fDbRSchema(\fCyVar)\fNullable
        & \text{if } \fCyQuantifier=\fCyQuantQues\\
      \fListT\;\fDbRSchema(\fCyVar)
        & \text{otherwise }
    \end{cases}
    \end{align*}
  }

% \noindent
\end{myDef}

Path patterns are described in \isoSectionShort{\isoPathPattern}, and 
\isoSyntxRulePg{\isoSyntxPatEdgeCount}{\isoPgPatEdgeCount} states that patterns 
may only be quantified if they have a minimum path length of one. We enforce this
during path pattern typing via the premise $\fMinE{\fCyPattern}>0$. \isoSyntxRulePg{\isoSyntxPatNodeCount}{\isoPgPatNodeCount}
imposes another restriction of path patterns having at least one node atom regardless of quantification. We account for this while
typing pattern lists.
%
% Binding variables within a quantified context may accumulate multiple bindings per match, and 
% so are lifted to list or nullable types,
% based on whether the quantifier is the optional quantifier~$\fCyQuantQues$ or not.

\begin{RulesDisplayCustomFont}{\small}
  \begin{mathpar}
    \infer[\RuleNameFont\FTyPNodeLabelPropLift]
    {
      \phantom{a}\\\\
      \fPatAtomJudge{\fDbWorld}
                 {\fCyPatternNode}
                 {\fDbRSchema}
    }
    {
      \fPatJudgeM{\fDbWorld}
                 {\fCyPatternNode}
                 {\fDbRSchema}
                 {\Diamond}
    }
    \and
    \infer[\RuleNameFont\FTyPPatPattern]
    {      
      \fPatJudgeM{\fDbWorld}
                 {\fCyPattern}
                 {\fDbRSchemas{{\scriptscriptstyle P}}}
                 {\Diamond}\\
      \fPatJudgeM{\fDbWorld}
                 {\fCyPatternNode}
                 {\fDbRSchemas{{\scriptscriptstyle N}}}
                 {\Diamond}\\
      \fRefineJudge{\fDbWorld;\fDbRSchemas{{\scriptscriptstyle P}};\fDbRSchemas{{\scriptscriptstyle E}};\fDbRSchemas{{\scriptscriptstyle N}}}
      {\fCyPattern\,\fCyDirection\,\fCyPatternNode}
      {\fDbRSchemas{{\scriptscriptstyle P}}',\fDbRSchemas{{\scriptscriptstyle E}}',\fDbRSchemas{{\scriptscriptstyle N}}'}
    }
    {
      \fPatJudgeP{\fDbWorld;\fDbRSchemas{{\scriptscriptstyle E}}}
                {\fCyPattern\,\fCyDirection\,\fCyPatternNode}
                {\fSchemaJoin{\fDbRSchemas{{\scriptscriptstyle P}}'}
                             {\fSchemaJoin{\fDbRSchemas{{\scriptscriptstyle E}}'}
                                          {\fDbRSchemas{{\scriptscriptstyle N}}'}
                             }
                }
                {\Diamond}
    }
    \and
    \infer[\RuleNameFont\FTyPPatEdge]
    {
      \fPatAtomJudge{\fDbWorld}
                    {\fCyDirection}
                    {\fDbRSchemas{{\scriptscriptstyle E}}}\\\\
      \fPatJudgeP{\fDbWorld;\fDbRSchemas{{\scriptscriptstyle E}}}
                {\fCyPattern\,\fCyDirection\,\fCyPatternNode}
                {\fDbRSchema}
                {\Diamond}
    }
    {
      \fPatJudgeM{\fDbWorld}
                {\fCyPattern\,\fCyDirection\,\fCyPatternNode}
                {\fDbRSchema}
                {\Diamond}
    }
    \quad
    \infer[\RuleNameFont\FTyPPatQuantEdge]
    {
      \fPatAtomJudge{\fDbWorld}
                    {\fCyDirection}
                    {\fDbRSchemas{{\scriptscriptstyle E}}}\\\\
      \fPatJudgeP{\fDbWorld;(\fLiftQuant{\fCyQuantifier}{\fDbRSchemas{{\scriptscriptstyle E}}})}
                {\fCyPattern\,\fCyDirection\,\fCyPatternNode}
                {\fDbRSchema}
                {\fGroupRef}
    }
    {
      \fPatJudgeM{\fDbWorld}
                {\fCyPattern\,\fCyDirection\fCyQuantifier\,\fCyPatternNode}
                {\fDbRSchema}
                {\fSingletonRef}
    }    
    \quad
    \infer[\RuleNameFont\FTyPPatParenPath]
    {
      \phantom{a}\\\\
      \fPatJudgeM{\fDbWorld}
                 {\fCyPattern}
                 {\fDbRSchema}
                 {\Diamond}
    }
    {
      \fPatJudgeM{\fDbWorld}
                 {\fCyParenthesis{\fCyPattern}}
                 {\fDbRSchema}
                 {\Diamond}
    }
    \quad
    \infer[\RuleNameFont\FTyPPatQuantPath]
    {
      \fMinE{\fCyPattern}>0\\\\
      \fPatJudgeM{\fDbWorld}
                 {\fCyParenthesis{\fCyPattern}}
                 {\fDbRSchema}
                 {\fGroupRef}
    }
    {
      \fPatJudgeM{\fDbWorld}
                 {\fCyParenthesis{\fCyPattern}\,\fCyQuantifier}
                 {\fLiftQuant{\fCyQuantifier}{\fDbRSchema}}
                 {\fSingletonRef}
    }
\end{mathpar}

\end{RulesDisplayCustomFont}

\noindent
\FTyPNodeLabelPropLift{} lifts a node atom as a path pattern.
\FTyPPatPattern{} is an auxiliary judgement that type refines the 
binding variables of a path pattern using the refinement rules 
described earlier.
\FTyPPatEdge{} introduces an edge atom, delegating to \FTyPPatPattern{}.
\FTyPPatQuantEdge{} handles quantified edges by first lifting their 
singleton table schemas through $\fLiftQuant{\fCyQuantifier}{(\cdot)}$ from 
Definition~\ref{def:quant-lift} before delegating to \FTyPPatPattern{} (similar to \FTyPPatEdge{}).
\FTyPPatParenPath{} types parenthesized path patterns by simply typing the enclosed pattern.
\FTyPPatQuantPath{} handles quantified path patterns by
checking if they have at least one edge atom, and then typing them, before
lifting their table schemas through $\fLiftQuant{\fCyQuantifier}{(\cdot)}$.

\MyPara{Pattern List Typing (Level 3)}\label{sec:pattern-expression-typing}%
Pattern list typing composes path patterns via conjunction
($\fCyPatternAnd{\fCyPatternExp}{\fCyPattern}$).
We account for the minimum node count restriction imposed by 
\isoSyntxRulePg{\isoSyntxPatNodeCount}{\isoPgPatNodeCount} here, via 
$\fMinN{\fCyPattern}>0$.
% Types assigned to shared variables must be schema-join compatible
% (Definition~\ref{def:record-schema-union}).

\begin{RulesDisplayCustomFont}{\small}
\begin{mathpar}
    \infer[\RuleNameFont\FTyPPatExpSingle]
    {
      \fMinN{\fCyPattern}>0\\
      \fPatJudgeM{\fDbWorld}
                {\fCyPattern}
                {\fDbRSchema}
                {\fSingletonRef}
    }
    {
      \fPatExpJudge{\fDbWorld}
                   {\fCyPattern}
                   {\fDbRSchema}
    }
    \and
    \infer[\RuleNameFont\FTyPPatExpAnd]
    {
      \fPatExpJudge{\fDbWorld}
                   {\fCyPatternExp}
                   {\fDbRSchemas{\scriptscriptstyle p}}\\
      \fPatExpJudge{\fDbWorld}
                 {\fCyPattern}
                 {\fDbRSchemas{\scriptscriptstyle P}}\\
      \fCompatJ{\fDbRSchemas{\scriptscriptstyle p}}{\fDbRSchemas{\scriptscriptstyle P}}          
    }
    {
      \fPatExpJudge{\fDbWorld}
                   {\fCyPatternAnd{\fCyPatternExp}{\fCyPattern}}
                   {\fSchemaJoin{\fDbRSchemas{\scriptscriptstyle p}}{\fDbRSchemas{\scriptscriptstyle P}}}
    }
    %% \and
    %% \infer[\RuleNameFont\FTyPPatExpOr]
    %% {
    %%   \fPatExpJudge{\fDbWorld}
    %%                {\fCyPatternExp}
    %%                {\fDbRSchemas{1}}\\
    %%   \fPatJudgeM{\fDbWorld}
    %%              {\fCyPattern}
    %%              {\fDbRSchemas{2}}
    %%              {\fSingletonRef}
    %% }
    %% {
    %%   \fPatExpJudge{\fDbWorld}
    %%                {\fCyPatternOr{\fCyPatternExp}{\fCyPattern}}
    %%                {\fSchemaNullUnion{\fDbRSchemas{1}}{\fDbRSchemas{2}}}
    %% }
\end{mathpar}

\end{RulesDisplayCustomFont}

\noindent
\FTyPPatExpSingle{} lifts a path pattern into a pattern list if it has 
at least one node atom. 
\FTyPPatExpAnd{} types a conjunction. It types both
sub-components---the preceding path pattern list $\fCyPatternExp$
and the following path pattern $\fCyPattern$---independently, and verifies
the join compatibility of their table schemas
$\fDbRSchemas{{\scriptscriptstyle p}}$ and
$\fDbRSchemas{{\scriptscriptstyle P}}$
before joining them to produce a single table schema for the
conjunction. Since path pattern lists are typed under
surrounding context in one forward left-to-right pass, only those
variables declared in $\fCyPatternExp$ that overlap with the ones in
$\fCyPattern$ are refined during the schema join
\(
\fSchemaJoin{\fDbRSchemas{\scriptscriptstyle p}}
            {\fDbRSchemas{\scriptscriptstyle P}}.
\)

\subsection{\lang Queries}%
\label{sec:typing-rules-queries}

\lang queries are either focused linear queries ($\fCyQuery$) that
match patterns against a single working graph and optionally filter
the matched bindings before projecting them via projection
expressions; or composite queries ($\fCyQueryExp$) that compose
several focused linear queries via composite operators ($\fCyCompOp$).

Query typing is relatively simple since much of the complexity has
been deliberately offloaded to typing: (1)~value expressions
(\S\ref{sec:typing-rules-expressions})---used for typing the predicate
and projection expressions in $\fWhere$ and $\fReturn$ clauses,
respectively; and (2)~patterns
(\S\ref{sec:typing-rules-patterns})---used for typing the patterns in
$\fMatch$ clauses.
We type \lang queries under three judgement forms,

\begin{TypingDisplay}
\[
\underset{\scriptsize\text{Projection}}{\fProjJudge{\fDbWorld;\fDbRSchema}{\fCyProjection}{\fDbRSchemas{\scriptscriptstyle \fCyProjection}}}
\qquad\qquad\quad
\underset{\scriptsize\text{Query}}{\fQueryJudge{\fDbWorldS}{\fCyQuery}{\fDbRSchema}}
\qquad\qquad\quad
\underset{\scriptsize\text{Composite Query}}{\fCompQueryJudge{\fDbWorldS}{\fCyQueryExp}{\fDbRSchema}}
\]
\vspace{-2mm}%
\end{TypingDisplay}

\noindent%
The \emph{projection judgement} types projection expressions by projecting an
upstream table schema in the context of a working graph. The projected
table schema is the schema of a focused linear query's result binding table.
The \emph{query judgement} assigns focused linear queries with the
schemas of their result binding tables.
The \emph{composite query judgement} types composite queries
similarly, but under the additional context of the composite
operators used for focused linear query composition.

\MyPara{Projection Typing}%
Projection expressions project upstream table schemas ($\fDbRSchema$) using
value expressions ($\fCyExp$).
Projection expressions are covered in \isoSectionShort{\isoQueryProjection},
and its \isoSyntxRulePg{\isoSyntxPrjAlias}{\isoPgPrjAlias} states that they
must be aliased via $\fAs$ unless they are just references to binding
variables ($\fCyVar\in\fDom{\fDbRSchema}$). We choose the fragment for
projection expressions and structure its typing rules to match the standard.

\begin{RulesDisplayCustomFont}{\small}
\begin{mathpar}
  \infer[\RuleNameFont\FTyProjAs]
  {
    \phantom{a}\\\\
    \fCyVar\in\fAttr\\
    \fExpJudgeParam[\fDbWorld;\fDbRSchema]
                   [\fCyExp][\fSortVar][1][\fGroupRef][\fDbVarUse]
  }
  {
    \fProjJudge{\fDbWorld;\fDbRSchema}
               {\fCyExp\;\fAs\;\fCyVar}
               {[\fCyVar \mapsto \fSortVar]}
  }
  \and
  \infer[\RuleNameFont\FTyProj]
  {
    \phantom{a}\\\\
    \fProjJudge{\fDbWorld;\fDbRSchema}
               {\fCyVar\;\fAs\;\fCyVar}
               {\fDbRSchemas{\scriptscriptstyle x}}
  }
  {
    \fProjJudge{\fDbWorld;\fDbRSchema}
               {\fCyVar}
               {\fDbRSchemas{\scriptscriptstyle x}}
  }
  \and
  \infer[\RuleNameFont\FTyProjList]
  {        
    \fProjJudge{\fDbWorld;\fDbRSchema}
               {\fCyProjection}
               {\fDbRSchemas{\scriptscriptstyle \fCyProjection}}\\
    \fProjJudge{\fDbWorld;\fDbRSchema}
               {\_\fCyProjection}
               {\fDbRSchemas{\scriptscriptstyle \_\fCyProjection}}\\\\
    \fDom{\fDbRSchemas{\scriptscriptstyle \fCyProjection}}\cap\fDom{\fDbRSchemas{\scriptscriptstyle \_\fCyProjection}}=\varnothing
  }
  {
    \fProjJudge{\fDbWorld;\fDbRSchema}
               {\fCyProjection\fCyComma\_\fCyProjection}
               {\fSchemaJoin{\fDbRSchemas{\scriptscriptstyle \fCyProjection}}{\fDbRSchemas{\scriptscriptstyle \_\fCyProjection}}}
  }
\end{mathpar}

\end{RulesDisplayCustomFont}

\noindent
\FTyProjAs{} covers value expressions generally, and so compulsorily
includes an alias. It types value expressions in the context of an
upstream table schema, and produces a singleton table schema with the
alias mapped to that type.
Value expressions that are just binding variable references are
handled by \FTyProj{}, which first rewrites them vacuously into
aliased projection expressions, before delegating them to
\FTyProjAs{}.
\FTyProjList{} covers projection composition by typing their atoms
left-to-right, joining their schemas only when they are disjoint.

\MyPara{Linear Query Typing}%
Linear queries first select a working graph via
$\fUse$, then introduce binding tables via $\fMatch$ and optionally
filter them via $\fWhere$, before projecting them using $\fReturn$.

\begin{RulesDisplayCustomFont}{\small}
\begin{mathpar}
  \infer[\RuleNameFont\FTyMatchWhereRet]
  {
    \fDbGNames\in\fDom{\fDbCatalog}\\  
    \fPatExpJudge{\fDbWorld}
                 {\fCyPatternExp}
                 {\fDbRSchemas{\scriptscriptstyle p}}\\\\
    \fExpJudgeParam[\fDbWorld;\fDbRSchemas{\scriptscriptstyle p}]
                   [\fCyPred][\fBool\fNullable][1][\fSingletonRef][\fDbVarUse]\\
    \fProjJudge{\fDbWorld;\fDbRSchemas{\scriptscriptstyle p}}
               {\fCyProjection}
               {\fDbRSchema}
  }
  {
    \fQueryJudge{\fDbWorldS}{\fUse\;\fDbGNames\;\fMatch\;\fCyPatternExp\;\fWhere\;\fCyPred\;\fReturn\;\fCyProjection}{\fDbRSchema}
  }
  \quad
  \infer[\RuleNameFont\FTyMatchRet]
  {
    \phantom{a}\\\\
    \fQueryJudge{\fDbWorldS}{\fUse\;\fDbGNames\;\fMatch\;\fCyPatternExp\;\fWhere\;\fTrue\;\fReturn\;\fCyProjection}{\fDbRSchema}
  }
  {
    \fQueryJudge{\fDbWorldS}{\fUse\;\fDbGNames\;\fMatch\;\fCyPatternExp\;\fReturn\;\fCyProjection}{\fDbRSchema}
  }
\end{mathpar}

\end{RulesDisplayCustomFont}

\noindent
\FTyMatchWhereRet{} checks if the specified graph $\fDbGNames$
is in the catalog $\fDbCatalog$, and types the pattern expression
$\fCyPatternExp$ of the $\fMatch$ clause to obtain its table schema
$\fDbRSchemas{\scriptscriptstyle p}$, which is used for typing the
predicate of the $\fWhere$ clause as a nullable Boolean in
singleton-reference context; and the projection expression of the
$\fReturn$ clause to obtain the result table schema $\fDbRSchema$.
\FTyMatchRet{} first rewrites with a tautological predicate~$\fTrue$
for $\fWhere$, and then delegates to \FTyMatchWhereRet{}.

\MyPara{Composite Query Typing}%
Composite queries compose linear queries under composite operators
such as $\fUnion$ and $\fExceptJ$, \etc with an optional qualifier
($\fDstnct$ v. $\fAll$) for choosing set or bag semantics for the
operation.
Composite queries are covered in \isoSectionShort{\isoCompositeQuery},
and its \isoSyntxRulePg{\isoSyntxCompOp}{\isoPgCompOp} states that
the composite operators used within a single composite query must
all be identical (including the qualifier that decides set v.
bag semantics). We design our typing rules to match this.

\begin{RulesDisplayCustomFont}{\small}
\begin{mathpar}
  \infer[\RuleNameFont\FTyCompQueryLift]
  {
    \fQueryJudge{\fDbWorldS}{\fCyQuery}{\fDbRSchema}\\
  }
  {
    \fCompQueryJudge{\fDbWorldS}{\fCyQuery}{\fDbRSchema}
  }
  \and
  \infer[\RuleNameFont\FTyCompQuery]
  {
    \fCompQueryJudge{\fDbWorldS}{\fCyQueryExp}{\fDbRSchemas{\scriptscriptstyle q}}\\
    \fCompQueryJudge{\fDbWorldS}{\fCyQuery}{\fDbRSchemas{\scriptscriptstyle Q}}\\
    \fCompatC{\fDbRSchemas{\scriptscriptstyle q}}{\fDbRSchemas{\scriptscriptstyle Q}}
  }
  {
    \fCompQueryJudge{\fDbWorldS}{\fCyQueryExp\;\fCyCompOp\;\fCyQuery}{\fDbRSchemas{\scriptscriptstyle q}\,\fCyCompOp\,\fDbRSchemas{\scriptscriptstyle Q}}
  }
\end{mathpar}

\end{RulesDisplayCustomFont}

\noindent
\FTyCompQueryLift{} lifts a linear query into the composite judgement.
\FTyCompQuery{} covers query composition by typing the linear query
atoms in the context of the composite operator $\fCyCompOp$, 
checking compatibility, and combining their table schemas
based on the operator's semantics.
Since the judgement carries the composite operator in its
context, it automatically enforces the \lang standard's requirement of
a single composite query using the same composite operator throughout.

\section{Small-Step Operational Semantics}
\label{sec:semantics-operational}

We defined well-formedness of \lang queries in \S\ref{sec:typing-rules} by 
structuring our typing rules compositionally, \ie lifting context from 
query-level down to pattern- and expression-level, and typing
them with binding table schemas.
We now formally define how these queries are executed.
While most prior works have formalized query language semantics 
denotationally in general, including \lang~\cite{deutsch2022graph},
we use a small-step operational approach. This allows us to explicitly 
show the complexities of interleaving states during query execution, arising
from the non-trivial semantics of \lang.

Since \lang allows queries to specify multiple ways of traversing graphs for 
pattern matching (\isoSectionShort{\isoPatternMatching}), we first fix the 
variant that we formalize.
Graph pattern matching semantics is configured primarily, and jointly, through 
two parameters: \emph{path modes} 
(\isoSectionShort{\isoPatternMatchingPathMode}) and \emph{match modes} 
(\isoSectionShort{\isoPatternMatchingMatchMode}).
The path modes impose constraints on matched edge/node frequencies for 
individual path patterns ($\fCyPattern$), while the match modes control 
whether these constraints are enforced across the entire pattern list ($\fCyPatternExp$).
Since every successful pattern match implies a new record containing bindings 
to the matched elements, the action of path and match modes can be understood as
filtering the binding tables of the composing path patterns and pattern lists, 
respectively.
% %
% Every record in a path pattern's binding table represents (always) a contiguous
% path in a graph, but this is not necessarily true for pattern lists.
%
Path modes filter a path pattern's records, which always represent 
contiguous paths in a graph, by:
(1)~\pathModeTrail{}---graph edges may not be bound more than once;
(2)~\pathModeAcyclic{}---graph nodes may not be bound more than once, 
\ie matched paths must be acyclic;
(3)~\pathModeSimple{}---similar to \pathModeAcyclic{} except, graph nodes 
bound to the head/tail variables of a path pattern may repeat, \ie matched paths 
may be cyclic; and
(4)~\pathModeWalk{}---no constraints.
Match modes filter a pattern list's records by:
(1)~\matchModeDiff{}---graph edges may not be bound more than once; and
(2)~\matchModeRepeat{}---no additional constraints.

We use the \pathModeTrail{} path mode and \matchModeDiff{} match mode
in our formalization, as they are the default in popular graph 
query languages like \cypher, and because they guarantee termination
of pattern matching, \ie graph traversal during pattern matching terminates (regardless of quantifiers) since
traversed paths cannot repeat edges and hence are 
upper-bounded by the graph's edge count.

% %
% This design exposes three challenges implicit in the standard:
% %
% \begin{itemize}[nosep,leftmargin=2em]
%   \item \emph{Three-valued null propagation.}
%     The standard identifies null with Unknown (\iso~\S4.16.2) and
%     mandates Kleene three-valued logic for all predicates
%     (\iso~\S20.20); a $\fWhere$ clause retains only records where the
%     condition evaluates to $\fTrue$ (\iso~\S16.13).

%   \item \emph{Trail-aware pattern composition.}
%     Under the \textsc{Trail} path mode (\iso~\S4.11.7), no edge may
%     appear more than once in a path---a global constraint that cuts
%     across local atom matches, requiring every join to check
%     edge-disjointness.

%   \item \emph{Finiteness of quantified paths.}
%     Unbounded quantifiers ($\fCyQuantStar$, $\fCyQuantPlus$) are
%     admissible only with a restrictive path mode (\iso~\S16.4);
%     under \textsc{Trail}, iteration is bounded by~$\fSize{\fPGEdge}$.
% \end{itemize}
% %
% \noindent
We start by introducing the runtime values and syntax used in our 
formalization (\S\ref{sec:semantics-runtime-dom}); before 
moving onto the actual \lang semantics, which similar to our typing 
rules, is organized into three layers mirroring the calculus in 
Figure~\ref{fig:gql-calculus}:
(1)~\emph{Value Expressions} (not shown due to space constraints);
(2)~\emph{Patterns}
(\S\ref{sec:sem-pattern})---graph pattern matching; and
(3)~\emph{Queries}
(\S\ref{sec:sem-op-query})---the full query evaluation pipeline.
%
%% All three share the runtime domain of
%% \S\ref{sec:semantics-runtime-dom} and define step relations
%% $\mathcal{E}\vdash t\,\fOpStep\,t'$ whose transitive--reflexive
%% closure $\fOpStepStar$ reduces a source term to its final value.

\subsection{Runtime Values, Execution Constructs, and Runtime Syntax}
\label{sec:semantics-runtime-dom}

\newcommand{\rtPanel}[1]{\fbox{\footnotesize #1}}
\newcommand{\rtRule}{\vspace{3pt}\hrule\vspace{3pt}}
\newcommand{\rtGloss}[1]{\text{\emph{\footnotesize #1}}}
\newcommand{\rtPanelTight}[1]{%
  \setlength{\fboxsep}{1.8pt}%
  \fbox{\footnotesize #1}%
}

\begin{figure}[t]
  \centering
  \footnotesize
  \setlength{\tabcolsep}{2.5pt}
  \setlength{\extrarowheight}{2pt}
  \begin{tabular}{@{}p{0.228\linewidth}|p{0.324\linewidth}|p{0.425\linewidth}@{}}
    \rtPanel{Runtime Values ($\fDbRVal\!\in\!\fValSet$)} &
    \rtPanel{Execution Constructs} &
    \rtPanelTight{Runtime Syntax}\\[6pt]

    $\begin{array}{@{}r@{\,}c@{\;}l@{\quad}l@{}}
       \fDbRVal & \fTermDef & \fCyExpConst{}      & \rtGloss{constant} \\
                & \mid      & \fNull              & \rtGloss{null} \\
                & \mid      & \fPGNodeElem        & \rtGloss{node} \\
                & \mid      & \fPGEdgeElem        & \rtGloss{edge} \\
                & \mid      & \fList{\fDbRVal,\dots,\fDbRVal}
                                                  & \rtGloss{finite list}
     \end{array}$
    &
    $\begin{array}{@{}l@{\colon}l@{\quad}l@{}}
       \fCyTuple      & \fAttr\rightharpoonup\fValSet
                      & \rtGloss{record} \\
       \fBagVar       & \fCyTuple\rightarrow\mathbb{Z}_{\geq 0}
                      & \rtGloss{bind. table} \\
       \fDbBindTable  & \fCyTuple\times\fPowerSet{\fPGEdge}\rightarrow\mathbb{Z}_{\geq 0}
                      & \rtGloss{trail table} \\
       \fPathTable    & \fCyTuple\!\times\!\fPGNode^{2}\!\times\!\fPowerSet{\fPGEdge}\rightarrow\mathbb{Z}_{\geq 0}
                      & \rtGloss{path table}
     \end{array}$
     &
     $\begin{array}{@{}r@{\;}l@{\quad}l@{}}
      \fRuntimeStx{\fCyPattern} &\fTermDef \fCyPatternNode
           \mid \fRuntimeStx{\fCyPattern}\,\fCyDirection\,\fRuntimeStx{\fCyPattern}
           \mid \fPatConcat{\fRuntimeStx{\fCyPattern}}{\fRuntimeStx{\fCyPattern}}
           \mid \fRuntimeStx{\fCyQuantifier}
           \mid \fPathTable
           & \rtGloss{path pattern} \\
       \fRuntimeStx{\fCyQuantifier}
         &\fTermDef \fQFrame{\fRuntimeStx{\fCyPattern}}{\fQuantVars}{\fCyQuantifier}
           \mid \fQState{\fPathTable[A]}{\fVisitPath}{\fQuantIt}
           \mid \fPathTable
         & \rtGloss{quantified path} \\
       \fRuntimeStx{\fCyPatternExp} &\fTermDef \fRuntimeStx{\fCyPattern}
           \mid \fDbBindTable
           \mid \fCyPatternAnd{\fRuntimeStx{\fCyPatternExp}}{\fRuntimeStx{\fCyPattern}}
           \mid \fCyPatternAnd{{\fDbBindTable}}{{\fDbBindTable}}
         & \rtGloss{pattern list} \\
       \fRuntimeStx{\fCyQuery} &\fTermDef \fCyQuery
           \mid\langle\fPropGraph,\fRuntimeStx{\fCyPatternExp},\fCyPred,\fCyProjection\rangle
           \mid\langle\fPropGraph,\fBagVar,\fCyProjection\rangle
         & \rtGloss{query} \\
       \fRuntimeStx{\fCyQueryExp} &\fTermDef \fRuntimeStx{\fCyQuery}
           \mid\fBagVar
           \mid \fRuntimeStx{\fCyQueryExp}\,\fCyCompOp\,\fRuntimeStx{\fCyQuery}
           \mid \fBagVar\,\fCyCompOp\,\fBagVar
         & \rtGloss{comp. query}
     \end{array}$
  \end{tabular}

  \rtRule

  \setlength{\abovedisplayskip}{0pt}%
  \setlength{\belowdisplayskip}{0pt}%

  \begin{minipage}[t]{0.578\linewidth}%
  \[
  \begin{array}{@{}r@{\,}c@{\,}l@{}}
    \multicolumn{3}{@{}l@{}}{\rtPanelTight{Lifting and Lowering between Execution Constructs}}\\[1.5mm]
    \fPTEdgeLift{\fPropGraph}{\fCyDirection}{\fBagVar}&\triangleq&
      \fBag{(\fCyTuple,\fPGNodeElems{\texttt{lft}},\fPGNodeElems{\texttt{rht}},\fSet{\fPGEdgeElem})
            \,\Bigg|\,
            \begin{array}{@{}l@{}}
            \fCyTuple\in\fBagVar,
            \fPGEdgeElem\fAssign\fCyTuple(\fVar{\fCyDirection}),\\
            (\fPGNodeElems{\texttt{lft}},\fPGNodeElems{\texttt{rht}})
              \in\fEdgeEnds{\fPropGraph}{\fCyDirection}{\fPGEdgeElem}
            \end{array}}\\[4mm]
    \fEdgeEnds{\fPropGraph}{\fCyDirection}{\fPGEdgeElem}&\triangleq&
        \hspace{-2mm}\bigcup\limits_{\substack{\fCyDir\,\in\,\fDir{\fCyDirection}\\
                            \fPGDirFunc(\fPGEdgeElem)\,=\,\fDirDenote{\fCyDir}}}
        \begin{cases}
          \fSet{\fPGSrcDstFunc(\fPGEdgeElem)}
            & \text{if } \fCyDir = \fOrientR \\[-2pt]
          \fSet{\bigl(
            \fElemAccess{2}{\fPGSrcDstFunc(\fPGEdgeElem)},
                \fElemAccess{1}{\fPGSrcDstFunc(\fPGEdgeElem)}\bigr)}
            & \text{if } \fCyDir = \fOrientL \\[-2pt]
          \fSet{(\fPGNodeElems{1},\fPGNodeElems{2}),
                (\fPGNodeElems{2},\fPGNodeElems{1})
                \big| \fSet{\fPGNodeElems{1},\fPGNodeElems{2}}\!=\!\fPGSrcDstFunc(\fPGEdgeElem)
                }
            & \text{if } \fCyDir = \fOrientU
        \end{cases}
  \end{array}
  \]%
  \end{minipage}%
  \hfill
  \begin{minipage}[t]{0.388\linewidth}%
  \[
    \begin{array}{@{}r@{\,}c@{\,}l@{}}
      \multicolumn{3}{@{}l@{}}{}\\[1.5mm]
      \fPTNodeLift{\fPropGraph}{\fCyPatternNode}{\fBagVar}&\triangleq&
      \fBag{(\fCyTuple,\fPGNodeElem,
             \fPGNodeElem,\varnothing)
             \;\Bigg|\;
             \begin{array}{@{}l@{}}
             \fCyTuple\in\fBagVar,\\
             \fPGNodeElem\fAssign\fCyTuple(\fVar{\fCyPatternNode})
             \end{array}
             }\\[6mm]
      \fPTDrop{\fPathTable}&\triangleq&
      \fBag{(\fCyTuple,\fVisitEdge)\mid
            (\fCyTuple,\fPGNodeElems{\texttt{lft}},\fPGNodeElems{\texttt{rht}},\fVisitEdge)
            \in\fPathTable}\\[2mm]
    \fBTDrop{\fDbBindTable}&\triangleq&
      \fBag{\fCyTuple\mid(\fCyTuple,\fVisitEdge)\in\fDbBindTable}
    \end{array}
    \]%
  \end{minipage}%

  \rtRule

  \begin{minipage}[t]{0.60\linewidth}%
  \[
  \begin{array}{@{}r@{\,}c@{\,}l@{}}
    \multicolumn{3}{@{}l@{}}{\rtPanelTight{Pattern Normalization}}\\[1.5mm]
    \fNormOf{\fCyPatternExp}&\triangleq&
    \begin{cases}
      \fAnonFill{\fCyPatternNode}
        & \text{if } \fCyPatternExp\!=\!\fCyPatternNode\\[-1pt]
      \fNormOf{\fCyPattern}\,\fAnonFill{\fCyDirection}\,
        \fAnonFill{\fCyPatternNode}
        & \text{if } \fCyPatternExp\!=\!
          \fCyPattern\,\fCyDirection\,\fCyPatternNode\\[-1pt]
      \fPatConcat{\fNormOf{\fCyPattern}}
        {\fPatConcat{\fQFrame{\fRuntimeStx{\fCyPattern}_{\fCyQuantifier}}
                             {\fVars{\fRuntimeStx{\fCyPattern}_{\fCyQuantifier}}}
                             {\fCyQuantifier}}
                    {\fAnonFill{\fCyPatternNode}}}
        & \text{if } \fCyPatternExp\!=\!
          \fCyPattern\,\fCyDirection\fCyQuantifier\,\fCyPatternNode\\[-2pt]
      \fNormOf{\fCyPattern}
        & \text{if } \fCyPatternExp\!=\!\fCyParenthesis{\fCyPattern}\\[-1pt]
      \fQFrame{\fNormOf{\fCyParenthesis{\fCyPattern}}}
              {\fVars{\fNormOf{\fCyParenthesis{\fCyPattern}}}}
              {\fCyQuantifier}
        & \text{if } \fCyPatternExp\!=\!
          \fCyParenthesis{\fCyPattern}\,\fCyQuantifier\\[-1pt]
      \fCyPatternAnd{\fNormOf{\fCyPatternExp'}}{\fNormOf{\fCyPattern}}
        & \text{if } \fCyPatternExp\!=\!
          \fCyPatternAnd{\fCyPatternExp'}{\fCyPattern}
    \end{cases}\\[1.5mm]
    \multicolumn{3}{@{}l@{}}{
      \text{where }
      \fRuntimeStx{\fCyPattern}_{\fCyQuantifier}\triangleq
        \fAnonFill{\fCyParenthesis{\epsilon\epsilon\epsilon}}\,\fAnonFill{\fCyDirection}\,
        \fAnonFill{\fCyParenthesis{\epsilon\epsilon\epsilon}}}
  \end{array}
  \]%
  \end{minipage}%
  \hfill\vrule\hfill
  \begin{minipage}[t]{0.37\linewidth}%
  \[
    \begin{array}{@{}r@{\,}c@{\,}l@{}}
      \multicolumn{3}{@{}l@{}}{\rtPanelTight{Quantifier Upper Bound}}\\[1mm]
      \fQHi{\fCyQuantifier}&\triangleq&
      \begin{cases}
        \fSize{\fPGEdge}
          & \text{if } \fCyQuantifier\!\in\!\fSet{\fCyQuantStar,\fCyQuantPlus}\\[-5pt]
        1 & \text{if } \fCyQuantifier\!=\!\fCyQuantQues\\[-5pt]
        i & \text{if } \fCyQuantifier\!=\!\fCyQuantExact{i}\\[-5pt]
        j & \text{if } \fCyQuantifier\!=\!\fCyQuantBound{i}{j}
      \end{cases}\\[5mm]
      \multicolumn{3}{@{}l@{}}{\rtPanelTight{Rename Anonymous Atoms}}\\[1mm]
      \fAnonFill{\fCyPatternAtom}&\triangleq&
      \begin{cases}
        \fCyPatternAtom
          & \text{if } \fVar{\fCyPatternAtom}\neq\epsilon\\[-4pt]
        \fCyPatternAtom[\fCyVars{a}/\epsilon]
          & \text{otherwise}
      \end{cases}\\[1.5mm]
      \multicolumn{3}{@{}l@{}}{
        \text{where } \fCyVars{a}\in\fAttrAnon \text{ fresh}}
    \end{array}
    \]%
  \end{minipage}%
  \caption{\FigCaptionGqlRuntime}
\end{figure}

Figure~\ref{fig:runtime-entities} summarizes the runtime 
values, execution constructs, runtime syntax, and metafunctions
used in our semantics formalization.
The \emph{runtime value domain} $\fValSet$ extends the
multi-sorted value universe $\fUniverse{}$ from
\S\ref{sec:prelims-pgm} with property graph 
elements---nodes ($\fPGNodeElem$) and edges ($\fPGEdgeElem$)---and 
finite lists.
The former ranges over the node/edge sets of all the graphs
in a database instance's catalog $\fDbCatalog$, while the latter is constructed
at runtime for accumulating bindings across quantified pattern
iterations. Lists are also paired with the standard \emph{list concatenation} 
operation ``$\fListConcat{}$'' for concatenating two lists.

\begin{myDef}[Records and Binding Tables]\label{def:records-binding-tables}
  A \emph{record} $\fCyTuple\colon\fAttr\rightharpoonup\fValSet,$ is a
  partial function from binding variables to runtime values.
  Two records are \emph{compatible}: $\fCyTuples{1}\asymp\fCyTuples{2}$, when they agree on their common
  domain; their natural join: $\fTupleJoin{\fCyTuples{1}}{\fCyTuples{2}}$
  is then their union as partial functions, and $\bot$ otherwise.
  A \emph{binding table} $\fBagVar: \fCyTuple\rightarrow\mathbb{Z}_{\geq 0},$ 
  is a bag of records $\fCyTuple$; their join is the join
  of all compatible record pairs.
  
\end{myDef}

\noindent%
The \lang standard specifies two \emph{execution constructs}: 
records ($\fCyTuple$) and binding tables ($\fBagVar$), which are
used throughout the entire query evaluation pipeline to hold 
intermediate as well as the final results (Definition~\ref{def:records-binding-tables}).
However, since we consider the \pathModeTrail{} path mode 
under the \matchModeDiff{} matching mode for our pattern matching semantics,
we introduce two more execution constructs to simplify formalizing the
additional constraints they impose. 
The first is the \emph{trail table} ($\fDbBindTable$),
which simplifies enforcing the \matchModeDiff{} matching mode constraint
across the entire pattern list. Similar to binding tables, trail tables
are also bags of records, but each record is also annotated with a set of
graph edges ($\fVisitEdge$) bound in that record. So edge-disjointness across trail tables
corresponding to different path patterns in a pattern list can be trivially 
enforced by checking the intersection of the bound-edge sets of their 
respective records.
The second is the \emph{path table} ($\fPathTable$), for simplifying 
enforcing the \pathModeTrail{} constraint when matching path patterns.
Unlike pattern lists, the bindings in a single path pattern match 
correspond to a single path in the working graph. So the path table builds 
on top of the trail table entries by annotating them with the graph nodes 
corresponding to the endpoints of the matched path. This simplifies 
path composition to a trivial endpoint continuity check, while 
the same bound-edge sets from the trail tables can be reused for the
\pathModeTrail{} constraint.

\begin{myDef}[Trail Tables]\label{def:trail-monoid}
  A \emph{trail table entry} is a tuple $(\fCyTuple,\fVisitEdge)$ of a 
  record $\fCyTuple$ and the edges $\fVisitEdge\in\fPowerSet{\fPGEdge}$ bound in 
  that record, \ie a bound-edge set.
  A \emph{trail table} $\fDbBindTable$ is a bag of trail table entries.
  We equip trail tables with a commutative monoid $(\mathsf{T},\;\fBTJoinOp\,,\;\fBTUnit)$,
  where $\fBTUnit\!\triangleq\!\fBag{([\,],\,\varnothing)}$; 
  for implicitly enforcing edge-disjointness when joining them. 
  The \emph{trail product} $\!\fBTTrailProd{}{}\!$ combines two trail table 
  entries only when their records join and their 
  bound-edge sets are disjoint, and $\fBTJoinOp$ lifts it to trail tables:
  
  {
    \small
    \setlength{\abovedisplayskip}{-3mm}%
    \setlength{\belowdisplayskip}{0mm}%
    \begin{align*}
      \fDbBindTable[1]\!\fBTJoinOp\!\fDbBindTable[2]
      \!\triangleq\!
      \fBag{
        \fBTTrailProd{t_1}{t_2}
        \Big|\;
        \begin{array}{@{}l@{}}
        t_1\!\in\!\fDbBindTable[1],\;
        t_2\!\in\!\fDbBindTable[2],\\
        \fBTTrailProd{t_1}{t_2}\neq\bot
        \end{array}
      }
      \quad
      \fBTTrailProd{(\fCyTuples{1},\fVisitEdge[1])\!}{\!(\fCyTuples{2},\fVisitEdge[2])}
      \!\triangleq\!
      \begin{cases}
        \multirow{2}{*}{$(\fTupleJoin{\fCyTuples{1}}{\fCyTuples{2}},
         \fVisitEdge[1]\cup\fVisitEdge[2])$}
          & \text{if }\fTupleJoin{\fCyTuples{1}}{\fCyTuples{2}}\neq\bot
            \textbf{ and }\\[-2pt]
          & \fVisitEdge[1]\cap\fVisitEdge[2]=\varnothing \\[-2pt]
        \bot & \text{otherwise}
      \end{cases}
    \end{align*}
  }

\end{myDef}

\begin{myDef}[Path Tables]\label{def:path-tables}
  A \emph{path entry} is a tuple
  $(\fCyTuple,\fPGNodeElems{\texttt{lft}},\fPGNodeElems{\texttt{rht}},\fVisitEdge)$
  containing a record, the matched path's endpoint nodes,
  and its bound-edges. A \emph{path table} $\fPathTable$ is a bag of
  path entries. 
  Path tables compose under $\fPTJoinRec{\oplus}$, which implicitly enforces
  endpoint continuity and edge-disjointness. We parameterize it with the 
  operation $\oplus$ that defines how to combine the records of the two entries:

  {
    \small
    \setlength{\abovedisplayskip}{-3mm}%
    \setlength{\belowdisplayskip}{0mm}%
    \begin{align*}
      \fPathTable[1]\fPTJoinRec{\oplus}\fPathTable[2]
      \!\triangleq\!
      \fBag{
        (\fCyTuples{1}\oplus\fCyTuples{2},\;
         \fPGNodeElems{\texttt{lft}1},\;
         \fPGNodeElems{\texttt{rht}2},\;
         \fVisitEdge[1]\cup\fVisitEdge[2])
        \;\Bigg|\;
        \begin{array}{@{}l@{}}
        (\fCyTuples{i},\;\fPGNodeElems{\texttt{lft}i},\;\fPGNodeElems{\texttt{rht}i},\;
          \fVisitEdge[i])\in\fPathTable[i]\;(i\in\{1,2\}),\\
        \fCyTuples{1}\oplus\fCyTuples{2}\neq\bot,\quad
        \fPGNodeElems{\texttt{rht}1}=\fPGNodeElems{\texttt{lft}2},\quad
        \fVisitEdge[1]\cap\fVisitEdge[2]=\varnothing
        \end{array}}
      .
    \end{align*}
  }

  \noindent%
  The parametrization is important because although the semantics of path 
  composition remains unchanged between quantified and unquantified path patterns
  (it only depends on endpoint continuity and edge-disjointness), how variables
  may bind to nodes/edges differs. Unquantified 
  variables cannot bind to more than one node/edge per record, while 
  quantified variables may bind to multiple depending on the 
  quantifier used. Parametrization allows us to enforce these differences by 
  defining how records combine during path composition.
  \emph{Path concatenation} $\fPTJoinOp\triangleq\fPTJoinRec{\Join}$
  instantiates $\oplus$ with the natural join of records, while 
  Definition~\ref{def:quant-record-composition} defines it
  for quantified path patterns.
\end{myDef}

\begin{myDef}[Quantified Path Repetition]\label{def:quant-record-composition}
  Let $\fQuantVars=\fVars{\fCyPattern}$ be the variables declared by a
  quantified path and let $\fCyQuantifier$ be its quantifier. The
  quantifier repetition is bounded below by
  $\fQLo{\fCyQuantifier}$ (Figure~\ref{fig:gql-metafunctions}) and above
  by $\fQHi{\fCyQuantifier}$ (Figure~\ref{fig:runtime-entities}), which
  caps the unbounded quantifiers by the number of graph edges (\pathModeTrail{} 
  constraint of edge-disjointness).
  Variables declared inside the quantifier accumulate bindings by the
  \emph{repetition extension}
  $\fQJoinOp{\fCyQuantifier}\triangleq\fPTJoinRec{\fQRecExt}$, starting
  from the \emph{zero-repetition table}
  $\fQZero{\fQuantVars}{\fCyQuantifier}$; for
  $\fCyVar\in\fQuantVars$:

  {
    \small
    \setlength{\abovedisplayskip}{-3mm}%
    \setlength{\belowdisplayskip}{0mm}%
    \begin{align*}
      % \fQZeroRec(\fCyVar)
      % \triangleq
      % \begin{cases}
      %   \fNull & \text{if }\fCyQuantifier=\fCyQuantQues\\
      %   \fList{} & \text{otherwise}
      % \end{cases}
      % \;
      \fQZero{\fQuantVars}{\fCyQuantifier}
      \!\triangleq\!
      \fBag{\left(\left[\fCyVar\mapsto
      \begin{cases}
        \fNull & \text{if }\fCyQuantifier=\fCyQuantQues\\[-2pt]
        \fList{} & \text{else}
      \end{cases}
      \right]_{\fCyVar\in\fQuantVars},\fPGNodeElem,\fPGNodeElem,\varnothing\right)
            \Bigg|\,\fPGNodeElem\in\fPGNode}
      \quad
      (\fCyTuples{1}\!\fQRecExt\!\fCyTuples{2})(\fCyVar)
      \!\triangleq\!
      \begin{cases}
        \fCyTuples{2}(\fCyVar)
          & \text{if }\fCyQuantifier=\fCyQuantQues\\[-2pt]
        \fListConcat{\fCyTuples{1}(\fCyVar)\!}{\!\fList{\fCyTuples{2}(\fCyVar)}}
          & \text{else}
      \end{cases}
    \end{align*}
  }

  \noindent%
  The resulting group representation is the runtime counterpart of the 
  quantifier lift $\fLiftQuant{\fCyQuantifier}{(\cdot)}$
  (Definition~\ref{def:quant-lift}). It is list-valued for every 
  quantifier other than $\fCyQuantQues$, which admits a single repetition 
  and keeps the binding itself on successful match, or $\fNull$ otherwise.
  The zero-repetition table is a path table pairing every node in the graph 
  with itself, so joining it with $\fPathTable$ under 
  $\fQJoinOp{\fCyQuantifier}$ leaves endpoints and bound-edge sets unchanged 
  and only lifts each record into its group representation.
  Therefore, $\fQZero{\fQuantVars}{\fCyQuantifier}\fQJoinOp{\fCyQuantifier}\fPathTable$
  is the group representation of one repetition, and repeating it
  $\fQuantIt-1$ times further, yields the matches that use exactly
  $\fQuantIt$ repetitions.
\end{myDef}

%% \subsection{Expression Semantics}
%% The expression step relation
%% $\fSemExpJudge{\fPropGraph;\fCyTuple}{\fCyExp}{\fCyExp'}$ reduces
%% value expressions left-to-right under a record~$\fCyTuple$ and
%% property graph~$\fPropGraph$.
%% %
%% Every compound form follows a uniform three-rule pattern: congruence
%% rules reduce sub-expressions left-to-right until all operands are
%% values, a computation rule produces the result, and a companion null
%% rule fires when operands fall outside the expected domain,
%% yielding~$\fNull$.
%% %
%% This discipline captures the \iso standard's null-propagation semantics
%% (\iso~\S4.16.2, \S20.20).%% , including Kleene three-valued logic for all
%% %% logical connectives and the identification of $\fNull$ with the truth
%% %% value Unknown.
%% %% %
%% Expression reduction rules are discussed in Appendix~\ref{sec:sem-op-expr}.

\subsection{Pattern Semantics}
\label{sec:sem-pattern}

Pattern matching is responsible for introducing bindings into 
the query pipeline. A $\fMatch$ clause enumerates every homomorphism of its
pattern into the working graph, which the standard specifies as a
multi-phase pipeline (\isoSectionShort{\isoPathEval})---each path
pattern is evaluated independently, followed by forcing agreement 
on shared variables via natural equijoins on the cross products of 
their binding tables, producing the final result table.
We formalize pattern matching using three main judgements.

\begin{TypingDisplay}
\[
\fSAtomJudgeN{\fCyPatternAtom}{\fBagVar}
\qquad\quad
\fSPathJudge{\fRuntimeStx{\fCyPattern}}{\fRuntimeStx{\fCyPattern}'}
\qquad\quad
% \fSQPathJudge{\fRuntimeStx{\fCyQuantifier}}{\fRuntimeStx{\fCyQuantifier}'}
% \quad\;\;
\fPEStep{\fRuntimeStx{\fCyPatternExp}}{\fRuntimeStx{\fCyPatternExp}'}
\]
\end{TypingDisplay}

\noindent
The \emph{atom judgement} matches a pattern atom $\fCyPatternAtom$, \ie
a node $\fCyPatternNode$ or an edge $\fCyDirection$ atom, in isolation
to produce a binding table $\fBagVar$ of singleton records. It is
big-step ($\fOpBigStep$) as it only depends on the working graph.
The \emph{path judgement} reduces path patterns
(Figure~\ref{fig:runtime-entities}) to path tables with endpoint 
continuity and edge-disjointness enforced.
The \emph{pattern list judgement} reduces pattern lists
to trail tables by composing path patterns using conjunction
($\fCyPatternAnd{\fCyPatternExp}{\fCyPattern}$), while enforcing 
edge-disjointness throughout.

\MyPara{Atom Matching (Level 1)}
The first step in atom matching is to determine if the labels of a graph 
element $\fPGElem$ (node or edge) satisfy the label expression $\fCyLabelExp$.
We use an auxiliary judgement $\fLabelOp{\fPGElem}{\fCyLabelExp}{\fCyExpBConst{}}$
for this, which evaluates to a Boolean 
$\fCyExpBConst{}\in\fSet{\fTrue,\fFalse}$ indicating whether the labels of 
$\fPGElem$ satisfy $\fCyLabelExp$.

\begin{RulesDisplayCustomFont}{\small}
  \begin{mathpar}
  \infer[\RuleNameFont\FSLabelEmpty]
  {
    \phantom{x}
  }
  {
    \fLabelOp{\fPGElem}{\varepsilon}{\fTrue}
  }
  \and
  \infer[\RuleNameFont\FSLabelAtom]
  {
    \fCyLabel\in\fPGLabelFunc(\fPGElem)
  }
  {
    \fLabelOp{\fPGElem}{\fCyLabel}{\fTrue}
  }
  \and
  \infer[\RuleNameFont\FSLabelAtomFail]
  {
    \fCyLabel\notin\fPGLabelFunc(\fPGElem)
  }
  {
    \fLabelOp{\fPGElem}{\fCyLabel}{\fFalse}
  }
  \and
  \infer[\RuleNameFont\FSLabelWild]
  {
    \fPGLabelFunc(\fPGElem)\neq\varnothing
  }
  {
    \fLabelOp{\fPGElem}{\fCyLWild}{\fTrue}
  }
  \and
  \infer[\RuleNameFont\FSLabelWildFail]
  {
    \fPGLabelFunc(\fPGElem)=\varnothing
  }
  {
    \fLabelOp{\fPGElem}{\fCyLWild}{\fFalse}
  }
  \\
  \infer[\RuleNameFont\FSLabelNeg]
  {
    \fLabelOp{\fPGElem}{\fCyLabelExp}{\fCyExpBConst{}}
  }
  {
    \fLabelOp{\fPGElem}{\fCyLNeg{\fCyLabelExp}}{\neg\fCyExpBConst{}}
  }
  \and
  \infer[\RuleNameFont\FSLabelAnd]
  {
    \fLabelOp{\fPGElem}{\fCyLabelExp_1}{\fCyExpBConst{1}}\\
    \fLabelOp{\fPGElem}{\fCyLabelExp_2}{\fCyExpBConst{2}}
  }
  {
    \fLabelOp{\fPGElem}{\fCyLAnd{\fCyLabelExp_1}{\fCyLabelExp_2}}
             {\fCyExpBConst{1}\land\fCyExpBConst{2}}
  }
  \and
  \infer[\RuleNameFont\FSLabelOr]
  {
    \fLabelOp{\fPGElem}{\fCyLabelExp_1}{\fCyExpBConst{1}}\\
    \fLabelOp{\fPGElem}{\fCyLabelExp_2}{\fCyExpBConst{2}}
  }
  {
    \fLabelOp{\fPGElem}{\fCyLOr{\fCyLabelExp_1}{\fCyLabelExp_2}}
             {\fCyExpBConst{1}\lor\fCyExpBConst{2}}
  }
\end{mathpar}

\end{RulesDisplayCustomFont}

\noindent
The rules follow the same structure as their counterparts in atom typing (\S\ref{sec:atom-typing}),
which filter schemas.

\begin{RulesDisplayCustomFont}{\small}
  \begin{mathpar}
  \infer[\RuleNameFont\FSPatAtomNode]
  {
    \fBagVar\!\fAssign\!\fBag{
      [\fVar{\fCyPatternNode}\mapsto\fPGNodeElem]
      \Bigg|\;
      \begin{array}{@{}l@{}}
      \fPGNodeElem\!\in\!\fPGNode,\\
      \fProp{\fCyPatternNode}\subseteq\fPGPropFunc(\fPGNodeElem),\\
      \fLabelOp{\fPGNodeElem}{\fLbl{\fCyPatternNode}}{\fTrue}
      \end{array}
     }
  }
  {
    \fSAtomJudgeN{\fCyPatternNode}
                 {\fBagVar}
  }
  \and
  \infer[\RuleNameFont\FSPatAtomEdge]
  {
    \fBagVar\!\fAssign\!\fBag{[\fVar{\fCyDirection}\mapsto\fPGEdgeElem]
      \Bigg|\;
      \begin{array}{@{}l@{}}
      \fPGEdgeElem\!\in\!\fPGEdge,\,
      \fProp{\fCyDirection}\subseteq\fPGPropFunc(\fPGEdgeElem),\\
      \fLabelOp{\fPGEdgeElem}{\fLbl{\fCyDirection}}{\fTrue},\\
        \fPGDirFunc(\fPGEdgeElem)\in\fDirDenote{\fDir{\fCyDirection}}
      \end{array}
     }
  }
  {
    \fSAtomJudgeE{\fCyDirection}
                 {\fBagVar}
  }
\end{mathpar}

\end{RulesDisplayCustomFont}

\noindent
\FSPatAtomNode{} collects every graph node whose property map is a superset of
the atom's property map and whose labels satisfy its label expression,
binding each of them to $\fVar{\fCyPatternNode}$ in a singleton record.
\FSPatAtomEdge{} is the symmetric rule for edges, with an additional 
predicate for directionality.

\MyPara{Path Pattern Matching (Level 2)}
Path pattern matching (\isoSectionShort{\isoPathPattern}) reduces a
path pattern to a path table (Definition~\ref{def:path-tables}), leveraging
the latter's implicit enforcement of endpoint continuity and edge-disjointness 
constraints for simplifying the formalization of the former's semantic rules.

\begin{RulesDisplayCustomFont}{\small}
  \begin{mathpar}
  \infer[\RuleNameFont\FSSegNode]
  {
    \fSAtomJudgeN{\fCyPatternNode}
                 {\fBagVar}
  }
  {
    \fSPathJudge{\fCyPatternNode}
                {\fPTNodeLift{\fPropGraph}{\fCyPatternNode}{\fBagVar}}
  }
  \;\;\;
  \infer[\RuleNameFont\FSSegEdgeStepL]
  {
    \fSPathJudge{\fRuntimeStx{\fCyPattern}}
                {\fRuntimeStx{\fCyPattern}'}
  }
  {
    \fSPathJudge{\fRuntimeStx{\fCyPattern}\,\fCyDirection\,\fCyPatternNode}
                {\fRuntimeStx{\fCyPattern}'\fCyDirection\,\fCyPatternNode}
  }
  \;\;\;
  \infer[\RuleNameFont\FSSegEdgeStepR]
  {
    \fSPathJudge{\fCyPatternNode}
                 {\fPathTable[2]}
  }
  {
    \fSPathJudge{\fPathTable[1]\fCyDirection\,\fCyPatternNode}
                {\fPathTable[1]\fCyDirection\,
                 \fPathTable[2]}
  }
  \;\;\;
  \infer[\RuleNameFont\FSSegEdge]
  {
    \fSAtomJudgeE{\fCyDirection}
                 {\fBagVar}
  }
  {
    \fSPathJudge{\fPathTable[1]\fCyDirection\,\fPathTable[2]}
                {{\fPathTable[1]}\!\fPTJoinOp
                 {\fPTEdgeLift{\fPropGraph}{\fCyDirection}{\fBagVar}}\fPTJoinOp
                 {\fPathTable[2]}
                }
  }
\end{mathpar}

\end{RulesDisplayCustomFont}

\noindent
\FSSegNode{} lifts the binding table ($\fBagVar$) with singleton records from a 
node atom ($\fCyPatternNode$) match into a path table with empty bound-edges,
\ie $\fVisitEdge=\varnothing$; using the 
metafunction $\fPTNodeLift{\fPropGraph}{\fCyPatternNode}{\fBagVar}$ defined 
in Figure~\ref{fig:runtime-entities}.
\FSSegEdgeStepL{} and \FSSegEdgeStepR{} reduce the two sides of
$\fRuntimeStx{\fCyPattern}\,\fCyDirection\,\fCyPatternNode$.
\FSSegEdge{} applies once both are path tables by lifting the
binding table of the edge atom ($\fCyDirection$) using the 
edge-symmetrical metafunction $\fPTEdgeLift{\fPropGraph}{\fCyDirection}{\fBagVar}$,
and then joining the path tables via path concatenation $\fPTJoinOp$.
This is the runtime counterpart of the endpoint condition
$\fEndpointCond{\cdot}{\cdot}{\cdot}$ from
\S\ref{sec:pattern-typing}; handling edge orientation via 
$\fEdgeEnds{\fPropGraph}{\fCyDirection}{\fPGEdgeElem}.$

\begin{RulesDisplay}
  \begin{mathpar}
  \infer[\RuleNameFont\FSPathConcatStepL]
  {
    \fSPathJudge{\fRuntimeStx{\fCyPatterns{1}}}
                {{\fRuntimeStx{\fCyPatterns{1}}}'}
  }
  {
    \fSPathJudge{\fPatConcat{\fRuntimeStx{\fCyPatterns{1}}}{\fRuntimeStx{\fCyPatterns{2}}}}
                {\fPatConcat{\fRuntimeStx{\fCyPatterns{1}}'}{\fRuntimeStx{\fCyPatterns{2}}}}
  }
  \quad
  \infer[\RuleNameFont\FSPathConcatStepR]
  {
    \fSPathJudge{\fRuntimeStx{\fCyPatterns{2}}}
                {{\fRuntimeStx{\fCyPatterns{2}}}'}
  }
  {
    \fSPathJudge{
                 \fPatConcat{\fPathTable[1]}
                            {\fRuntimeStx{\fCyPatterns{2}}}
                }
                {
                 \fPatConcat{\fPathTable[1]}
                            {{\fRuntimeStx{\fCyPatterns{2}}}'}
                }
  }
  \quad
  \infer[\RuleNameFont\FSPathConcat]
  {
    \fPathTable\fAssign
    \fPathTable[1]\fPTJoinOp\fPathTable[2]
  }
  {
    \fSPathJudge{
                 \fPatConcat{\fPathTable[1]}{\fPathTable[2]}
                }
                {\fPathTable}
  }
  % \\
  % \infer[\RuleNameFont\FSPathParen]
  % {
  %   \fSPathJudge{\fRuntimeStx{\fCyPattern}}
  %               {\fRuntimeStx{\fCyPattern}'}
  % }
  % {
  %   \fSPathJudge{\fCyParenthesis{\fRuntimeStx{\fCyPattern}}}
  %               {\fCyParenthesis{\fRuntimeStx{\fCyPattern}'}}
  % }
  % \and
  % \infer[\RuleNameFont\FSPathParenDone]
  % {
  %   \phantom{a}
  % }
  % {
  %   \fSPathJudge{\fCyParenthesis{\fPathTable}}
  %               {\fPathTable}
  % }
  \quad
  \infer[\RuleNameFont\FSPathQPathStep]
  {
    \fSPathJudge{\fRuntimeStx{\fCyPattern}}
                {\fRuntimeStx{\fCyPattern}'}
  }
  {
    \fSPathJudge{\fQFrame{\fRuntimeStx{\fCyPattern}}{\fQuantVars}{\fCyQuantifier}}
                {\fQFrame{\fRuntimeStx{\fCyPattern}'}{\fQuantVars}{\fCyQuantifier}}
  }
  \quad
  \infer[\RuleNameFont\FSPathQPathEval]
  {
    \fSQPathJudge{\fRuntimeStx{\fCyQuantifier}}
                 {\fRuntimeStx{\fCyQuantifier}'}
  }
  {
    \fSPathJudge{\fRuntimeStx{\fCyQuantifier}}
                {\fRuntimeStx{\fCyQuantifier}'}
  }
\end{mathpar}

\end{RulesDisplay}

\noindent
\FSPathConcatStepL{}, \FSPathConcatStepR{}, and \FSPathConcat{}
reduce path patterns connected by \lang's concatenation operator $\fPatConcat{\!}{\!}$.
This operator allows connecting path patterns at their endpoint node atoms by 
enforcing that the variables corresponding to these endpoint node atoms must agree, \ie are the same.
We do not consider this in the source calculus (Figure~\ref{fig:gql-calculus})
but introduce it in the runtime calculus (Figure~\ref{fig:runtime-entities}) for 
conveniently desugaring quantified edges into quantified path patterns with 
anonymous (variable declaration absent) endpoint node atoms.
\FSPathQPathStep{} similarly steps the path inside a quantifier frame
$\fQFrame{\cdot}{\fQuantVars}{\fCyQuantifier}$
while preserving the frame itself, and \FSPathQPathEval{} delegates it 
to the auxiliary quantified path judgement once the path pattern
is fully reduced to a path table.

\lang's quantified paths are parallel to regular-expression repetition.
Their semantics from the standard can be interpreted using our 
execution constructs, as the union
\smash{$\fPathTable^{\fQLo{\fCyQuantifier}}\cup\cdots\cup
\fPathTable^{\fQHi{\fCyQuantifier}}$}
of the path tables obtained by iterating the inner path table $\fPathTable$
between the quantifier's bounds.
We compute this union with a frontier. A quantified path runtime term
$\fQState{\fPathTable[A]}{\fVisitPath}{\fQuantIt}$ carries the
\emph{accumulator} $\fPathTable[A]$ of the matched bindings, accumulated 
starting from repetition $\fQLo{\fCyQuantifier}$ to repetition $\fQuantIt-1$; 
while the \emph{frontier} $\fVisitPath$ contains only those bindings matched 
using exactly $\fQuantIt$ repetitions.

\begin{RulesDisplay}
  \begin{mathpar}
  \infer[\RuleNameFont\FSQPathInit]
  {
    \fPathTable[A]\fAssign
      \mathsf{if}\;\fQLo{\fCyQuantifier}=0\;\mathsf{then}\;
      \fQZero{\fQuantVars}{\fCyQuantifier}\;\mathsf{else}\;\fBag{}
    \qquad
    \fVisitPath\fAssign
      \fQZero{\fQuantVars}{\fCyQuantifier}\fQJoinOp{\fCyQuantifier}\fPathTable
  }
  {
    \fSQPathJudge{\fQFrame{\fPathTable}{\fQuantVars}{\fCyQuantifier}}
                 {\fQState{\fPathTable[A]}{\fVisitPath}{1}}
  }
  \and
  \infer[\RuleNameFont\FSQPathFinish]
  {
    \fQuantIt>\fQHi{\fCyQuantifier}\;\lor\;
    \fVisitPath=\fBag{}
  }
  {
    \fSQPathJudge{\fQState{\fPathTable[A]}{\fVisitPath}{\fQuantIt}}
                 {\fPathTable[A]}
  }
  \and
  \infer[\RuleNameFont\FSQPathIter]
  {
    \fQuantIt\leq\fQHi{\fCyQuantifier}
    \qquad
    \fVisitPath\neq\fBag{}\\
    \fPathTable[A']\fAssign
      \mathsf{if}\;\fQLo{\fCyQuantifier}\leq\fQuantIt\;\mathsf{then}\;
      \fPathTable[A]\uplus\fVisitPath\;\mathsf{else}\;\fPathTable[A]\\
    \fVisitPath'\fAssign
      \mathsf{if}\;\fQuantIt<\fQHi{\fCyQuantifier}\;\mathsf{then}\;
      \fVisitPath\fQJoinOp{\fCyQuantifier}\fPathTable
      \;\mathsf{else}\;\fBag{}
  }
  {
    \fSQPathJudge{\fQState{\fPathTable[A]}{\fVisitPath}{\fQuantIt}}
                 {\fQState{\fPathTable[A']}{\fVisitPath'}{\fQuantIt+1}}
  }
\end{mathpar}

\end{RulesDisplay}

\noindent
\FSQPathInit{} initiates the iteration by computing the one-repetition 
frontier from the zero-repetition table by extending it with the inner path table
(Definition~\ref{def:quant-record-composition}), followed by instantiating 
the accumulator path table $\fPathTable[A]$ with the zero-repetition table itself
iff the quantifier lower bound $\fQLo{\fCyQuantifier}$ is zero. This enables
supporting \lang's \emph{empty match} feature for quantifiers.
\FSQPathIter{} applies as long as the counter tracking the number of 
repetitions $\fQuantIt$ is not greater than the quantifier 
upper bound $\fQHi{\fCyQuantifier}$, and as long as the current frontier 
$\fVisitPath$ is not empty. In each iteration, it accumulates the current 
frontier $\fVisitPath$ into the accumulator $\fPathTable[A]$ (
if $\fQuantIt\geq\fQLo{\fCyQuantifier}$), and then computes the next 
frontier $\fVisitPath'$ by extending the current one with the inner path table
$\fPathTable$. Since all extensions happen using the repetition extension  
operator $\fQJoinOp{\fCyQuantifier}$, endpoint continuity and edge-disjointness 
are implicitly enforced.

\MyPara{Pattern List Matching (Level 3)}
Pattern list matching composes path patterns via conjunction
($\fCyPatternAnd{\fCyPatternExp}{\fCyPattern}$), whose cross product
followed by equijoin semantics is captured by our trail-monoid's join
$\fBTJoinOp$ (Definition~\ref{def:trail-monoid}).
It joins on the shared variables while enforcing 
\matchModeDiff{} at the same time.

\begin{RulesDisplayCustomFont}{\small}
  \begin{mathpar}
  \infer[\RuleNameFont\FSPatExpSingle]
  {
    \fSPathJudgeC{\fRuntimeStx{\fCyPattern}}{\fRuntimeStx{\fCyPattern}'}
  }
  {
    \fPEStep{\fRuntimeStx{\fCyPattern}}{\fRuntimeStx{\fCyPattern}'}
  }
  \quad
  \infer[\RuleNameFont\FSPatExpDone]
  {
    \fDbBindTable\fAssign\fPTDrop{\fPathTable}
  }
  {
    \fPEStep{\fPathTable}{\fDbBindTable}
  }
  \quad
  \infer[\RuleNameFont\FSPatExpAndL]
  {
    \fPEStep{\fRuntimeStx{\fCyPatternExp}}{\fRuntimeStx{\fCyPatternExp}'}
  }
  {
    \fPEStep{\fCyPatternAnd{\fRuntimeStx{\fCyPatternExp}}{\fRuntimeStx{\fCyPattern}}}
             {\fCyPatternAnd{\fRuntimeStx{\fCyPatternExp}'}{\fRuntimeStx{\fCyPattern}}}
  }
  \quad
  \infer[\RuleNameFont\FSPatExpAndR]
  {
    \fPEStep{\fRuntimeStx{\fCyPattern}}{\fRuntimeStx{\fCyPattern}'}
  }
  {
    \fPEStep{\fCyPatternAnd{\fDbBindTable[1]}{\fRuntimeStx{\fCyPattern}}}
             {\fCyPatternAnd{\fDbBindTable[1]}{\fRuntimeStx{\fCyPattern}'}}
  }
  \quad
  \infer[\RuleNameFont\FSPatExpAnd]
  {
    \fDbBindTable\fAssign
      \fDbBindTable[1]\fBTJoinOp\fDbBindTable[2]
  }
  {
    \fPEStep{\fCyPatternAnd{\fDbBindTable[1]}{\fDbBindTable[2]}}
            {\fDbBindTable}
  }
  %% \quad
  %% \infer[\RuleNameFont\FSPatExpOrL]
  %% {
  %%   \fPEStep{\fPropGraph}{\fCyPatternExp}{\fCyPatternExp'}
  %% }
  %% {
  %%   \fPEStep{\fPropGraph}
  %%            {\fCyPatternOr{\fCyPatternExp}{\fCyPattern}}
  %%            {\fCyPatternOr{\fCyPatternExp'}{\fCyPattern}}
  %% }
  %% \quad
  %% \infer[\RuleNameFont\FSPatExpOrR]
  %% {
  %%   \fSPathJudgeC{\fPropGraph}{\fCyPattern}{\fCyPattern'}
  %% }
  %% {
  %%   \fPEStep{\fPropGraph}
  %%            {\fCyPatternOr{\fDbBindTable}{\fCyPattern}}
  %%            {\fCyPatternOr{\fDbBindTable}{\fCyPattern'}}
  %% }
  %% \quad
  %% \infer[\RuleNameFont\FSPatExpOr]
  %% {
  %%   \phantom{x}
  %% }
  %% {
  %%   \fPEStep{\fPropGraph}
  %%            {\fCyPatternOr{\fDbBindTable[1]}{(\fDbBindTable[2],\fCyNodeVar)}}
  %%            {\fDbBindTable[1]\uplus\fDbBindTable[2]}
  %% }
\end{mathpar}

\end{RulesDisplayCustomFont}

\noindent
\FSPatExpSingle{} lifts a single path pattern step into the pattern
list judgement, and \FSPatExpDone{} lowers the reduced path table
of a path pattern into a trail table by projecting only the record and 
bound-edge set attributes via $\fPTDrop{\cdot}$ from Figure~\ref{fig:runtime-entities}.
\FSPatExpAndL{} and \FSPatExpAndR{} reduce either side of a
conjunction left-to-right, while \FSPatExpAnd{} joins their trail tables 
when fully reduced.

\subsection{Query Semantics}
\label{sec:sem-op-query}

Formalizing query evaluation semantics is yet again relatively simple 
since most of the complexity has been delegated to value expression and 
pattern matching semantics, with only the formalization of the flow of the binding 
tables between query clauses remaining.
We use two judgement forms:

\begin{TypingDisplay}
\[
\underset{\scriptsize\text{Linear Query}}%
  {\fQLinStep{\fDbCatalog}{\fRuntimeStx{\fCyQuery}}
                       {\fRuntimeStx{\fCyQuery}'}}
\qquad\qquad\qquad\qquad
\underset{\scriptsize\text{Composite Query}}%
         {\fQStep{\fDbCatalog}{\fRuntimeStx{\fCyQueryExp}}
                              {\fRuntimeStx{\fCyQueryExp}'}}
\]
\vspace{-2mm}%
\end{TypingDisplay}

\noindent%
The \emph{linear query judgement} rewrites linear queries in the 
context of the database catalog $\fDbCatalog$ into one of two 
intermediate tuples: (1)~$\langle\fPropGraph,\fRuntimeStx{\fCyPatternExp},\fCyPred,\fCyProjection\rangle$
---for formalizing $\fMatch$ and $\fWhere$ clause evaluation, and 
(2)~$\langle\fPropGraph,\fBagVar,\fCyProjection\rangle$---for 
formalizing $\fReturn$ clause evaluation.
The \emph{composite query judgement} formalizes composite query evaluation
where the semantics 
depend on the composite operator $\fCyCompOp$.

\MyPara{Linear Query Evaluation}%
Focused linear queries first select a working graph, 
then match a pattern against it, optionally filter the matched bindings,
before projecting to generate the result.

\begin{RulesDisplayCustomFont}{\small}
  \begin{mathpar}
  \infer[\RuleNameFont\FSQueryUse]
  {
    \fDbGNames\in\fDom{\fDbCatalog}\\
    \fPropGraph = \fDbCatalog(\fDbGNames)\\
    \fRuntimeStx{\fCyPatternExp}\fAssign\fNormOf{\fCyPatternExp}
  }
  {
    \fQLinStep{\fDbCatalog}
      {\fUse\;\fDbGNames\;\fMatch\;\fCyPatternExp\;
       \fWhere\;\fCyPred\;\fReturn\;\fCyProjection}
      {\langle\fPropGraph,\fRuntimeStx{\fCyPatternExp},\fCyPred,\fCyProjection\rangle}
  }
  % \and
  % \infer[\RuleNameFont\FSQueryUseNoWhere]
  % {
  %   \fDbGNames\in\fDom{\fDbCatalog}\\
  %   \fPropGraph = \fDbCatalog(\fDbGNames)\\
  %   \fNormalizePat{\fCyPatternExp}{\bar \fCyPatternExp}{\fDbVarUse}
  % }
  % {
  %   \fQLinStep{\fDbCatalog}
  %     {\fUse\;\fDbGNames\;\fMatch\;\fCyPatternExp\;\fReturn\;\fCyProjection}
  %     {\langle\fPropGraph,\bar \fCyPatternExp,\fTrue,\fCyProjection\rangle}
  % }
  \and
  \infer[\RuleNameFont\FSQueryMatch]
  {
    \fPEStepStar{\fRuntimeStx{\fCyPatternExp}}{\fDbBindTable}
  }
  {
    \fQLinStep{\fDbCatalog}
      {\langle\fPropGraph,\fRuntimeStx{\fCyPatternExp},\fCyPred,\fCyProjection\rangle}
      {\langle\fPropGraph,\fDbBindTable,\fCyPred,\fCyProjection\rangle}
  }
  \and
  \infer[\RuleNameFont\FSQueryMatchDone]
  {
    \fBagVar \fAssign \fBag{
      \fCyTuple \in \fAnonDrop{\fAttrAnon}{\fBTDrop{\fDbBindTable}} \mid
      \fPropGraph;\,\fCyTuple\vdash
        \fCyPred\,\fOpStepStar\,\fTrue
    }
  }
  {
    \fQLinStep{\fDbCatalog}
      {\langle\fPropGraph,\fDbBindTable,\fCyPred,\fCyProjection\rangle}
      {\langle\fPropGraph,\fBagVar,\fCyProjection\rangle}
  }
  \and
  \infer[\RuleNameFont\FSQueryProject]
  {
    \fBagVar' \fAssign \fBag{
      [\fCyVars{i}\!\mapsto\!\fDbRVals{i}]
      %% _{
      %%   (\fCyExps{i}\,\fAs\,\fCyVars{i})\in\fCyProjection}
      \;\Big|\;
      \begin{array}{@{}l@{}}
        \fCyTuple\!\in\!\fBagVar,\;
        \forall (\fCyExps{i}\;\fAs\;\fCyVars{i})\!\in\!\mu.\\
        \fPropGraph;\,\fCyTuple\vdash
        \fCyExps{i}\,\fOpStepStar\,\fDbRVals{i}
      \end{array}
    }
  }
  {
    \fQLinStep{\fDbCatalog}
      {\langle\fPropGraph,\fBagVar,\mu\rangle}
      {\fBagVar'}
  }
\end{mathpar}

\end{RulesDisplayCustomFont}

\noindent
\FSQueryUse{} resolves the graph $\fDbGNames$ against the catalog
$\fDbCatalog$ and normalizes the pattern list of the $\fMatch$ clause
using $\fNormOf{\cdot}$ (Figure~\ref{fig:runtime-entities}), which
expands its quantified edges and names its anonymous pattern atoms, 
before rewriting the query to an intermediate tuple. Only the
rule for queries with a $\fWhere$ clause is shown since those without can 
be rewritten into one.
\FSQueryMatch{} reduces the pattern list to a trail table using the
multi-step closure $\fOpStepStar$ of the pattern list judgement.
\FSQueryMatchDone{} filters this trail table by lowering it to a binding
table via $\fBTDrop{\cdot}$ from Figure~\ref{fig:runtime-entities}, before 
dropping the variables introduced by $\fNormOf{\cdot}$ using 
$\fAnonDrop{\fAttrAnon}{\cdot}$ (as they cannot be referenced in any clause); and finally retaining only the records
for which the predicate reduces to $\fTrue$.
\FSQueryProject{} projects the optionally filtered binding table using the projection list
$\fCyProjection$ to generate the final binding table.

\MyPara{Composite Query Evaluation}%
Composite queries compose linear queries via composite operators.

\begin{RulesDisplayCustomFont}{\small}
  \begin{mathpar}
  \infer[\RuleNameFont\FSCompQueryLift]
  {
    \fQLinStep{\fDbCatalog}{\fRuntimeStx{\fCyQuery}}{\fRuntimeStx{\fCyQuery}'}
  }
  {
    \fQStep{\fDbCatalog}
      {\fRuntimeStx{\fCyQuery}}
      {\fRuntimeStx{\fCyQuery}'}
  }
  \and
  \infer[\RuleNameFont\FSCompQueryL]
  {
    \fQStep{\fDbCatalog}{\fRuntimeStx{\fCyQueryExp}}{\fRuntimeStx{\fCyQueryExp}'}
  }
  {
    \fQStep{\fDbCatalog}
      {\fRuntimeStx{\fCyQueryExp}\;\fCyCompOp\;\fRuntimeStx{\fCyQuery}}
      {\fRuntimeStx{\fCyQueryExp}'\;\fCyCompOp\;\fRuntimeStx{\fCyQuery}}
  }
  \and
  \infer[\RuleNameFont\FSCompQueryR]
  {
    \fQStep{\fDbCatalog}{\fRuntimeStx{\fCyQuery}}{\fRuntimeStx{\fCyQuery}'}
  }
  {
    \fQStep{\fDbCatalog}
      {\fBagVar[1]\;\fCyCompOp\;\fRuntimeStx{\fCyQuery}}
      {\fBagVar[1]\;\fCyCompOp\;\fRuntimeStx{\fCyQuery}'}
  }
  \and
  \infer[\RuleNameFont\FSCompQueryUnion]
  {
    \phantom{x}
  }
  {
    \fQStep{\fDbCatalog}
      {\fBagVar[1]\;\fUnion\;\fBagVar[2]}
      {\fBagVar[1]\uplus\fBagVar[2]}
  }
  % \and
  % \infer[\RuleNameFont\FSCompQueryOtherwise]
  % {
  %   \fSup{\fBagVar[1]}\neq\varnothing
  % }
  % {
  %   \fQStep{\fDbCatalog}
  %     {\fBagVar[1]\;\fOtherwise\;\fBagVar[2]}
  %     {\fBagVar[1]}
  % }
  % \and
  % \infer[\RuleNameFont\FSCompQueryOtherwiseEmpty]
  % {
  %   \fSup{\fBagVar[1]}=\varnothing
  % }
  % {
  %   \fQStep{\fDbCatalog}
  %     {\fBagVar[1]\;\fOtherwise\;\fBagVar[2]}
  %     {\fBagVar[2]}
  % }
  % \and
  % %% \infer[\RuleNameFont\FSCompQueryExceptD]
  % %% {
  % %%   \phantom{x}
  % %% }
  % %% {
  % %%   \fQStep{\fDbCatalog}
  % %%     {\fBagVar[1]\;\fExceptD\;\fBagVar[2]}
  % %%     {\mathsf{dist}(\fBagVar[1]\setminus\fBagVar[2])}
  % %% }
  % %% \and
  % \infer[\RuleNameFont\FSCompQueryExceptA]
  % {
  %   \phantom{x}
  % }
  % {
  %   \fQStep{\fDbCatalog}
  %     {\fBagVar[1]\;\fExceptA\;\fBagVar[2]}
  %     {\fBagVar[1]\setminus\fBagVar[2]}
  % }
  % \and
  % %% \infer[\RuleNameFont\FSCompQueryIntersectD]
  % %% {
  % %%   \phantom{x}
  % %% }
  % %% {
  % %%   \fQStep{\fDbCatalog}
  % %%     {\fBagVar[1]\;\fIntersectD\;\fBagVar[2]}
  % %%     {\mathsf{dist}(\fBagVar[1]\cap\fBagVar[2])}
  % %% }
  % %% \and
  % \infer[\RuleNameFont\FSCompQueryIntersectA]
  % {
  %   \phantom{x}
  % }
  % {
  %   \fQStep{\fDbCatalog}
  %     {\fBagVar[1]\;\fIntersectA\;\fBagVar[2]}
  %     {\fBagVar[1]\cap\fBagVar[2]}
  % }
\end{mathpar}

\end{RulesDisplayCustomFont}

\noindent
\FSCompQueryLift{} lifts linear query stepping into composite query judgement.
\FSCompQueryL{} and \FSCompQueryR{} reduce the operands left-to-right until 
both operands are binding tables.
\FSCompQueryUnion{} computes bag union preserving multiplicities.
Since the judgement is indexed by the composite operator $\fCyCompOp$,
the congruence rules thread the same operator through both operands, 
enforcing the \lang standard's requirement that composite queries 
use a single composite operator across all their linear queries.

\section{Type Soundness}
\label{sec:metatheory}

We prove \emph{type soundness}, \ie every well-formed query produces 
a binding table when evaluated to completion; that conforms to the 
schema assigned to the query statically by the typing rules.
The typing judgments assign types to \emph{source terms}, while the
semantics reduce \emph{runtime terms} containing intermediate forms.
Auxiliary conformance and configuration-typing relations bridge this
gap.

\begin{myDef}[Value Typing and Conformance]\label{def:value-typing}
  A runtime value $\fDbRVal\in\fValSet$ \emph{inhabits} a type
  $\fSortVar$, written $\fValType{\fDbRVal}{\fSortVar}$, when the value
  is a member of the type's semantic domain: scalars inhabit their base
  type, $\fNull$ inhabits $\fNullable$ and every nullable type
  $\fSortVar\fNullable$, graph elements inhabit their graph-indexed
  identity types (up to subtyping), and lists inhabit
  $\fListT\;\fSortVar$ when every element does. Value typing is closed
  under subtyping: $\fValType{\fDbRVal}{\fSortVar}$ and
  $\fSubType{\fSortVar}{\fSortVars{2}}$ imply
  $\fValType{\fDbRVal}{\fSortVars{2}}$.
  A record $\fCyTuple$ then conforms to a record schema
  $\fDbRSchema$, written $\fConform{\fCyTuple}{\fDbRSchema}$, when
  \(
    \fDom{\fCyTuple}=\fDom{\fDbRSchema}
  \)
  and
  \(
    \forall\fCyVar\in\fDom{\fDbRSchema}.\fCyTuple(\fCyVar)\!:\fSortVar.\,
    \fSubType{\fSortVar\!}{\fDbRSchema(\fCyVar)},
  \)
  and a binding or trail table conforms to $\fDbRSchema$ when every
  record in its support does.
  A path table $\fPathTable$ conforms to $\fDbRSchema$ when every entry
  $(\fCyTuple,\fPGNodeElems{\texttt{lft}},\fPGNodeElems{\texttt{rht}},
  \fVisitEdge)$ has
  $\fConform{\fCyTuple}{\fDbRSchema}$,
  $\fPGNodeElems{\texttt{lft}},\fPGNodeElems{\texttt{rht}}\in\fPGNode$, and
  $\fVisitEdge\subseteq\fPGEdge$; we write
  $\fConform{\cdot}{\fDbRSchema}$ for all of these.
\end{myDef}

\MyPara{Configuration Typing}%
A database world $\fDbWorldS=(\fDbCatalog,\fDbSchema)$ is
\emph{well-formed}, written $\fWFWorld$, when every closed graph site's
instance conforms to its declared graph schema
(Definition~\ref{def:conformance-relation}).
%
% Two further relations, defined in Appendix~\ref{sec:metatheory-aux},
% carry typing over to what the semantics reduces.
% %
% \emph{Normalization typing} assigns $\fNormOf{\fCyPatternExp}$ an
% \emph{internal schema} $\fDbRSchemas{\mathsf{rt}}$ that types the
% attributes generated by $\fAnonFill{\cdot}$ alongside the named ones,
% giving each generated attribute the type of its atom under the
% quantifier lift (Definition~\ref{def:quant-lift}) of every enclosing
% quantified frame; since generated attributes are drawn from
% $\fAttrAnon$, erasing them recovers the source schema $\fDbRSchema$, so
% normalization is type preserving.
%
\emph{Configuration typing}---written
$\fPConfigType{\fRuntimeStx{\fCyPatternExp}}
              {\fDbRSchemas{\mathsf{rt}}}{\fDbRSchema}$
for a pattern configuration and
$\fConfigType{\fRuntimeStx{\fCyQueryExp}}{\fDbRSchema}$ for a query
configuration---closes the source typing rules under the runtime forms
of Figure~\ref{fig:runtime-entities}, requiring every table a
configuration has materialized to conform to the internal schema of the
subterm it replaced, and the clauses that remain to compose that schema
into $\fDbRSchema$.

\begin{myTheorem}[Expression Soundness]\label{thm:expr-sound}
  Suppose $\fWFWorld$, $\fPropGraph=\fDbCatalog(\fDbGNames)$, and
  $\fConform{\fCyTuple}{\fDbRSchema}$.
  \begin{enumerate}[nosep,leftmargin=2em]
    \item \emph{(Progress)}\;
      If $\fCyExp\notin\fValSet$, then there exists $\fCyExp'$ such that
      $\fEval{\fPropGraph}{\fCyTuple}{\fCyExp}{\fCyExp'}$.
    \item \emph{(Soundness)}\;
      If\; $\fExpJudgeParam[\fDbWorld;\fDbRSchema][\fCyExp][\fSortVar]$
      is derivable for some $\Box$, $\Diamond$, $\fDbVarUse$, and
      $\fPropGraph;\,\fCyTuple\vdash\fCyExp\,\fOpStepStar\,\fDbRVal$
      with $\fDbRVal\in\fValSet$, then
      $\fValType{\fDbRVal}{\fSortVars{1}}$ for some
      $\fSubType{\fSortVars{1}}{\fSortVar}$.
  \end{enumerate}
\vspace{-2.5mm}%
\begin{proof}[Proof sketch]
  Progress: every computation rule has a complementary null rule, so a
  non-value either has a reducible operand or contracts.
  Soundness: by induction on $\fCyExp$, splitting the reduction
  sequence into subterm evaluations along the left-to-right congruence
  rules. Computation rules return values at the operator's result type,
  while the null rules---which fire exactly when an operand's dynamic
  refinement fails at
  runtime---return $\fNull$ at type $\fNullable$. Both are subtypes of
  the nullable types the rules assign, which is why the conclusion is up
  to subtyping.
\end{proof}
\end{myTheorem}

\begin{myTheorem}[Pattern Configuration Safety]\label{thm:pattern-config-safety}
  Suppose $\fWFWorld$,
  $\fPropGraph=\fDbCatalog(\fDbGNames)$, and
  \(
    \fPConfigType{\fRuntimeStx{\fCyPatternExp}}
      {\fDbRSchemas{\mathsf{rt}}}{\fDbRSchema}.
  \)
  \begin{enumerate}[nosep,leftmargin=2em]
    \item \emph{(Progress)}\;
      Either $\fRuntimeStx{\fCyPatternExp}=\fDbBindTable$, or there
      exists $\fRuntimeStx{\fCyPatternExp}'$ such that
      $\fPEStep{\fRuntimeStx{\fCyPatternExp}}
               {\fRuntimeStx{\fCyPatternExp}'}$.
    \item \emph{(Preservation)}\;
      If $\fPEStep{\fRuntimeStx{\fCyPatternExp}}
                  {\fRuntimeStx{\fCyPatternExp}'}$, then
      \(
        \fPConfigType{\fRuntimeStx{\fCyPatternExp}'}
          {\fDbRSchemas{\mathsf{rt}}}{\fDbRSchema}.
      \)
    \item \emph{(Terminal conformance)}\;
      If $\fRuntimeStx{\fCyPatternExp}=\fDbBindTable$, then
      $\fConform{\fAnonErase{\fDbBindTable}}{\fDbRSchema}$.
  \end{enumerate}
\vspace{-2.5mm}%
\begin{proof}[Proof sketch]
  By induction on the configuration-typing derivation. The path cases
  use atom conformance and closure of the path-table operations under
  compatible record joins, and the conjunction case trail-join
  conformance and its schema-compatibility premise; trail and endpoint
  checks only remove entries. The quantified case uses an inner
  induction on~$\fQuantIt$: the zero-repetition table
  $\fQZero{\fQuantVars}{\fCyQuantifier}$ inhabits the lifted schema, and
  each application of $\fQJoinOp{\fCyQuantifier}$ preserves it since
  $\fQRecExt$ extends a list-typed (or, for $\fCyQuantQues$, nullable)
  binding by one element of the inner schema's type. Terminal
  conformance then follows since $\fBTDrop{\cdot}$ and
  $\fAnonDrop{\fAttrAnon}{\cdot}$ only erase annotations/attributes that $\fDbRSchemas{\mathsf{rt}}$ adds.
\end{proof}
\end{myTheorem}

\begin{myCorol}[Pattern Soundness]\label{thm:pattern-sound}
  If $\fWFWorld$,
  $\fPropGraph=\fDbCatalog(\fDbGNames)$,
  $\fPatExpJudge{\fDbWorld}{\fCyPatternExp}{\fDbRSchema}$, and
  \(
    \fPEStepStar{\fNormOf{\fCyPatternExp}}{\fDbBindTable},
  \)
  then
  $\fConform{\fAnonErase{\fDbBindTable}}{\fDbRSchema}$.
\vspace{-2.5mm}%
\begin{proof}[Proof sketch]
  Normalization typing supplies the internal schema
  $\fDbRSchemas{\mathsf{rt}}$; iterated preservation and terminal
  conformance from Theorem~\ref{thm:pattern-config-safety} then give the
  result.
\end{proof}
\end{myCorol}

\begin{myTheorem}[Query Type Soundness]\label{thm:query-sound}
  If $\fWFWorld$,
  $\fQueryJudge{\fDbWorldS}{\fCyQuery}{\fDbRSchema}$, and
  $\fDbCatalog\vdash\fCyQuery\;\fOpStepStar\;\fBagVar$,
  then $\fConform{\fBagVar}{\fDbRSchema}$.
\vspace{-2.5mm}%
\begin{proof}[Proof sketch]
  Progress and preservation for configurations, then iteration.
  \emph{Progress}: well-formedness of $\fDbWorldS$ lets a source query
  resolve its graph site; a pattern-stage configuration steps its
  pattern list, and transitions to a 3-tuple once that list has reduced
  to a trail table and every predicate to a value---both terminate,
  since each pattern rule either reduces a subterm to a table or
  advances a counter capped by $\fQHi{\fCyQuantifier}\leq\fSize{\fPGEdge}$,
  and each expression step strictly decreases the number of non-value
  subterms; a 3-tuple projects (by
  Theorem~\ref{thm:expr-sound}); a final bag is terminal.
  \emph{Preservation}: graph resolution (\FSQueryUse) produces a
  well-typed pattern-stage configuration; pattern stepping
  (\FSQueryMatch) preserves it by
  Theorem~\ref{thm:pattern-config-safety}; filtering (\FSQueryMatchDone)
  lowers the trail table, erases the generated
  attributes---unreferenceable, since no source query mentions an
  attribute in $\fAttrAnon$---and selects a sub-bag, so terminal
  conformance gives a 3-tuple well-typed at $\fDbRSchemas{1}$; and
  projection (\FSQueryProject) applies Theorem~\ref{thm:expr-sound} per
  expression.
  Iterating yields $\fConfigType{\fBagVar}{\fDbRSchema}$, \ie
  $\fConform{\fBagVar}{\fDbRSchema}$.
\end{proof}
\end{myTheorem}

\begin{myCorol}[Composite Query Soundness]\label{cor:composite-sound}
  If $\fWFWorld$,
  $\fCompQueryJudge{\fDbWorldS}{\fCyQueryExp}{\fDbRSchema}$, and
  $\fDbCatalog\vdash_{\fCyCompOp}\fCyQueryExp\;\fOpStepStar\;\fBagVar$,
  then $\fConform{\fBagVar}{\fDbRSchema}$.
\vspace{-2.5mm}%
\begin{proof}[Proof sketch]
  By induction on the composite-query typing derivation.
  The base case is Theorem~\ref{thm:query-sound}; in the inductive case
  \FSCompQueryLift{}, \FSCompQueryL{}, and \FSCompQueryR{} reduce the
  operands to bags conforming to their operand schemas, and every
  composite operator either merges the two bags or selects one of
  them, so the result conforms to the
  composite schema that \FTyCompQuery{} assigns.
\end{proof}
\end{myCorol}

\section{Mechanized Implementation}
\label{sec:mechanization}

We implemented a complete mechanization in Lean~4 comprising
more than 23{,}000 lines across 14 modules, following
the \lang calculus in Figure~\ref{fig:gql-calculus}.  Table~\ref{tab:modules} summarizes the module
layout; the core formalization and its soundness proofs (excluding
tests and benchmarks) total roughly 21{,}300 lines.

The sort hierarchy is encoded as inductive types: \texttt{BaseSort} for
$T_0$, \texttt{ExtSort} for $T_1$ (including graph-scoped and
schema-refined types), and \texttt{GSort} for full sorts~$\fSortVar$,
while values use a mutual inductive
(\texttt{Value}/\texttt{ValueList}) for kernel-derived
\texttt{DecidableEq} over nested lists. Subtyping is an inductive
proposition with 17~constructors matching the paper's rules;
\textsc{S-Union-Congruence} is derived as a theorem from primitives due
to a Lean kernel restriction on nested inductives.
% All absorption laws (Lemma~4.1) and the bounding chain (Lemma~4.2) are
% proved without \texttt{sorry}.
The typing rules of Section~\ref{sec:typing-rules} are encoded as eleven inductive
propositions totaling 57~constructors across expression, predicate,
property-constraint, atom, refinement, pattern, pattern-expression,
projection, projection-list, query, and single-operator composite
typing.

\begin{wraptable}{r}{0.52\linewidth}
\vspace{-1mm}
\centering
\TableFont
\caption{Module structure of the Lean~4 mechanization.}
\label{tab:modules}
\begin{tabular}{llr}
\toprule
\textbf{Module} & \textbf{Formalization} & \textbf{LoC} \\
\midrule
\texttt{Sorts.lean}        & Sec.~\hyperref[sec:type-system]{\ref*{sec:type-system}} ($\fMSortPG$, $\fMSortGQL$, $\fMSortGQLN$)          & 254 \\
\texttt{Values.lean}       & Sec.~\hyperref[sec:prelims-pgm]{\ref*{sec:prelims-pgm}} (values, records) & 362 \\
\texttt{Syntax.lean}       & Fig.~\hyperref[fig:gql-calculus]{\ref*{fig:gql-calculus}} (expr., patterns, queries)    & 291 \\
\texttt{Schema.lean}       & Def.~\hyperref[def:data-model]{\ref*{def:data-model}}--\hyperref[def:conformance-relation]{\ref*{def:conformance-relation}} (property graphs)& 290 \\
\texttt{RecordSchema.lean} & Def.~\hyperref[def:type-intersect]{\ref*{def:type-intersect}}--\hyperref[def:record-schema-union]{\ref*{def:record-schema-union}}, Sec.~\hyperref[sec:typing-rules]{\ref*{sec:typing-rules}} (record ops) & 375 \\
\texttt{Subtyping.lean}    & Sec.~\hyperref[subsec:typing-subtyping]{\ref*{subsec:typing-subtyping}} (subtyping) & 297 \\
\texttt{Typing.lean}       & Sec.~\hyperref[sec:typing-rules]{\ref*{sec:typing-rules}} (typing judgments)               & 1{,}133 \\
\texttt{Semantics.lean}    & Sec.~\hyperref[sec:semantics-operational]{\ref*{sec:semantics-operational}} (executable semantics)  & 638 \\
\texttt{SmallStep.lean}    & Sec.~\hyperref[sec:semantics-operational]{\ref*{sec:semantics-operational}} (step relations) & 2{,}267 \\
\texttt{Metatheory.lean}   & Sec.~\hyperref[sec:metatheory]{\ref*{sec:metatheory}} (soundness proofs)                & 14{,}057 \\
\texttt{TypeChecker.lean}  & Sec.~\hyperref[sec:typing-rules]{\ref*{sec:typing-rules}} (executable checker)      & 1{,}364 \\
\midrule
\texttt{Test.lean}         & Unit tests (276 assertions)               & 1{,}252 \\
\texttt{Examples.lean}     & Worked examples (12 files)  & 404 \\
\texttt{LDBCBench.lean}    & LDBC SNB integration tests & 671 \\
% \texttt{Counterexample.lean} & Machine-checked counterexample          & 99 \\
\midrule
\multicolumn{2}{l}{\textbf{Total}}                                & \textbf{23{,}655} \\
\bottomrule
\end{tabular}
\end{wraptable}
The semantics module provides executable definitions for all reduction
rules, including direction-aware endpoint conditions,
frontier-based quantified path iteration, and set operations with
duplicate elimination. Beyond the executable definitions, the mechanization proves the full
soundness development of Section~6 (\texttt{Metatheory.lean}), the
small-step layer with progress, preservation, and a proven-equivalent
deterministic interpreter behind a user-facing engine flag
(\texttt{SmallStep.lean}), and a certified executable type checker
(\texttt{TypeChecker.lean}): \texttt{inferQuery} computes a result
schema for any query in the fragment, \texttt{inferQuery\_sound} shows
every accepted query is well typed in the declarative system, and
composition with query type soundness guarantees that an accepted
query's result table conforms to the inferred schema.  All results
depend only on the standard Lean axioms (\texttt{propext},
\texttt{Classical.choice}, \texttt{Quot.sound}), with zero
\texttt{sorry} and zero user-declared axioms.

\begin{wraptable}{r}{0.39\linewidth}
\vspace{-3mm}
\centering
\TableFont
\caption{Sample LDBC SNB queries type checked and evaluated using \tool.}
\label{tab:ldbc}
\begin{tabular}{@{}lll@{}}
\toprule
\textbf{Query} & \textbf{Pattern} & \textbf{Features} \\
\midrule
IS1 & 1 directed edge    & label, prop constraint \\
IS3 & 1 undirected edge  & undirected direction \\
IS4 & single node        & prop constraint \\
IS5 & 1 directed edge    & label filter \\
IC8 & 3-hop chain        & conjunction, left dir. \\
IC2 & 2-hop + WHERE      & conj., WHERE filter \\
\bottomrule
\end{tabular}
\end{wraptable}

The artifact's tests are layered: 276 unit assertions cover each
computational unit and rule, 30 worked examples mirror the paper's
definitions one by one, including negative cases, and the six
expressible queries of the LDBC SNB Interactive v2
workload~\cite{ldbc-snb}, translated into \lang{} as \texttt{Query}
terms, form an end-to-end integration test: each is type checked by the
certified checker, evaluated against golden results, checked to conform
to its inferred schema, and executed on both engines with bit-for-bit
agreement.  Table~\ref{tab:ldbc} summarizes the queries from the 
SNB workload, evaluated and type checked using \tool{}. In total, the
build checks 350 \texttt{native\_decide} assertions.  
Supporting other queries requires features outside the considered 
fragment with most leaning towards \lang's relational core, \ie data transformations.
% The remaining
% LDBC queries require features outside the formalized
% fragment---primarily \texttt{WITH}-based pipelining, \texttt{OPTIONAL
% MATCH}, and \texttt{shortestPath}.

%% All stated theorems
%% are fully proved.  The mechanization focuses on the core
%% \texttt{$\fMatch$\,...\,$\fWhere$\,...\,$\fReturn$} fragment with path patterns,
%% property constraints, label expressions, set operations, and bounded
%% quantified paths; extending to additional \lang features is future work.

\section{Related Work}
\label{sec:related-work}
% We now discuss prior work most closely related to \tool{}.

\MyPara{Graph Query Languages}
Early graph query formalisms centered on \rdf and
\sparql~\cite{perez2009sparql,sparql11,losemann2013sparql,kostylev2015sparql},
which operate on triple-based data and require indirect encodings for
the richer metadata found in property graphs.
Traversal-based languages such as
Gremlin~\cite{rodriguez2015gremlin} offer fine-grained navigational
control at the expense of declarative reasoning. In contrast, 
pattern-matching languages such as \cypher~\cite{francis2018cypher},
PGQL~\cite{van2016pgql}, and G-CORE~\cite{angles2018gcore}; provide
a more declarative interface.
\lang~\cite{deutsch2022graph,iso2024gql} is
the first standardized graph query language for property graphs, 
standardized as \iso in 2024.
It unifies ideas from its predecessors and incorporates 
graph schemas as part of the language specification.
Yet the standard's entirely informal specification of \lang's semantics, 
is not conducive to formal reasoning; leading to potential 
inconsistencies and ambiguities when implementing it, or even 
in the standard's specification itself.
\citet{Francis2023gpc} distilled \lang's pattern
matching into a Graph Pattern Calculus (GPC) equipped with typing
rules and a
denotational semantics under set semantics, and \citet{gheerbrant2025gql}
defined Core GQL and Core PGQ as concise
formal models focused on understanding its expressivity.
Our work is complementary to these efforts. We provide a small-step
operational semantics, a schema-aware type system with a
machine-checked soundness proof; all under bag semantics. None of these are
covered by GPC or Core GQL.

\MyPara{Language Mechanization}
Mechanizing language specifications in proof assistants has
offered several cross-domain benefits.
\citet{bodin2014trusted} mechanized ECMAScript\,5 in \coq, producing
both a formal semantics and an extracted reference interpreter.
\citet{de2024coq} gave a comprehensive \coq mechanization of
JavaScript regular expressions per ECMA-262, uncovering errors in
previous formalisms.
\citet{seassau2025ocaml} formalized a substantial OCaml subset in Rocq.
But \lang's specification has not been mechanized before, making 
\tool{} the first to mechanize it and thus providing the 
first formal foundation for rigorously reasoning about its semantics.

\MyPara{Formal Semantics of Query Languages}
Relational query languages have been formalized extensively.
\citet{guagliardo2017formal} gave a formal
semantics for a large \sql fragment, while ~\citet{benzaken2019coq} and ~\citet{chu2017hottsql} provided mechanized \coq treatments of \sql
with nulls, aggregations, and bag semantics.
For graph query languages, Regular Path Queries and their
extensions~\cite{barcelo2013querying} primarily focus 
on providing a clean compact theoretical foundation, leaving out
several key features of \lang, \cypher, \etc such as properties, 
schemas, and bag semantics.
Formal semantics exist for core \cypher
constructs~\cite{francis2018cypher,angles2017foundations}, 
and~\citet{ye2025flexible} proposed a gradual-typing calculus for \lang's
path patterns, but none of these are machine-checked.
~\citet{diaz2020graphcoql} mechanized GraphQL in \coq,
but GraphQL is a tree-shaped API language and does not contain 
a graph pattern matching fragment.
Our work is the first mechanized formalization of
\lang in a proof assistant (\lean).

\section{Conclusion}
\label{sec:conclusion}

We have presented \tool, a formal semantics for a substantial fragment
of the \iso Graph Query Language.
Our development provides three interlocking contributions: a type
system that exploits graph schemas for static refinement while
tracking three-valued nullability and heterogeneous union types; a
small-step operational semantics that makes explicit the interplay
among Kleene three-valued logic, trail-aware pattern composition,
quantified-path iteration, and the clause-by-clause query pipeline;
and a layered type soundness argument establishing that well-typed
queries produce results conforming to their declared schemas. \tool{}
provides the first bridge between \lang's specification and a 
mechanized implementation.

\section*{Acknowledgments}

We thank Cheng Ding, Ivan Grigorik, Linghan Zhong, Lara Marinov, 
and the anonymous reviewers for helpful feedback and discussions.
This work was supported in part by the U.S. National Science
Foundation (NSF) Nos.~CCF-2217696, CCF-2313027, CCF-2403036; 
and an Amazon Research Award (Fall 2025).
Any opinions, findings, and conclusions or recommendations expressed 
in this material are those of the authors and do not necessarily reflect 
the views of the NSF or Amazon.

\section*{Data Availability Statement}

% We will import our mechanization of cypher query from LDBC SNB
% Interactive v2 workload~\cite{ldbc-snb} 
% using Lean4.

We mechanized our small-step formalization of the semantics of GQL in \lean.
The artifact supporting this paper is available on Zenodo~\cite{thimmaiah2026mgqlArtifact}.
It also contains a big-step semantics formalization.

\newpage
\bibliographystyle{ACM-Reference-Format}
\bibliography{bib}
%\newpage
%\input{appendix}

\end{document}